\newcommand{\longversion}[1]{{#1}
\newcommand{\shortversion}[1]}
\newcommand{\olddaniel}[1]{}
\newcommand{\cQ}{\mathcal{Q}}
\newcommand{\bA}{\mathbf{A}}
\newcommand{\bB}{\mathbf{B}}
\newcommand{\bran}[1]{branchable\xspace}
\newcommand{\myparagraph}[1]{\smallskip\noindent{\textbf{\sffamily #1} \ }}
\newtheorem{theorem}{Theorem}
\newtheorem{lemma}{Lemma}[section]
\newtheorem{claim}{Claim}[section]
\newtheorem{corollary}{Corollary}
\newtheorem{definition}{Definition}[section]
\newtheorem{observation}{Observation}[section]
\newtheorem{proposition}{Proposition}[section]
\date{}
\newcommand{\NP}{\text{\normalfont  NP}}
\newcommand{\FPT}{\text{\sf FPT}}
\newcommand{\dfvs}{{\sc DFVS}}
\newcommand{\dfas}{{\sc DFAS}}
\newcommand{\sfvs}{{\sc Subset FVS}}
\newcommand{\dfvsfull}{{\sc Directed Feedback Vertex Set}}
\newcommand{\bigoh}{{\mathcal{O}}}
\newcommand{\No}{{\sc No}}
\newcommand{\Yes}{{\sc Yes}}
\newcommand{\Oh}{{\mathcal{O}}}
\begin{document}


\title{A Linear Time Parameterized Algorithm for {\sc Directed Feedback Vertex Set}}

 \author{
 Daniel Lokshtanov\thanks{University of Bergen, Bergen, Norway. \texttt{daniello@ii.uib.no}}
 \and M. S. Ramanujan\thanks{Algorithms and Complexity Group, TU Wien, Vienna, Austria.     \texttt{ramanujan@ac.tuwien.ac.at}} 
 \and  Saket Saurabh\addtocounter{footnote}{-1}\footnotemark\thanks{The Institute of Mathematical Sciences, Chennai, India. \texttt{saket@imsc.res.in}}
 }
 
%

 \maketitle

\thispagestyle{empty}
\begin{abstract} 

In the {\sc Directed Feedback Vertex Set (DFVS)} problem, the input is a directed graph $D$ on $n$ vertices and $m$ edges, and an integer $k$. The objective is to determine whether there exists a set of at most $k$ vertices intersecting every directed cycle of $D$.  
%
%
Whether or not DFVS admits a fixed parameter tractable (FPT) algorithm was considered the most important open problem in parameterized complexity until Chen, Liu, Lu, O'Sullivan and Razgon [JACM 2008] answered the question in the affirmative. They gave an algorithm for the problem with running time $\Oh(k!4^kk^4nm)$. Since then, no faster algorithm for the problem has been found.
In this paper, we give an algorithm for {\dfvs} with running time $\Oh(k!4^kk^5(n+m))$. Our algorithm is the first algorithm for {\dfvs} with linear dependence on input size. 
Furthermore, the asymptotic dependence of the running time of our algorithm on the parameter $k$ matches up to a factor $k$ the algorithm of Chen, Liu, Lu, O'Sullivan and Razgon.

On the way to designing our algorithm for {\dfvs}, we give a general methodology to shave off a factor of $n$ from iterative-compression based algorithms for a few other well-studied covering problems in parameterized complexity. We demonstrate the applicability of this technique by speeding up by a factor of $n$, the current best {\FPT} algorithms for {\sc Multicut} [STOC 2011, SICOMP 2014] and {\sc  Directed Subset Feedback Vertex Set} [ICALP 2012, TALG 2014]. 

\end{abstract}

\newpage
\pagestyle{plain}
\setcounter{page}{1}

\section{Introduction}
{\sc Feedback Set} problems are fundamental combinatorial optimization problems. Typically, in these  problems, we are given a graph $G$ (directed or undirected) and a positive integer $k$, and the objective is to select at most $k$ vertices, edges or arcs to hit all cycles of the input graph. 
{\sc Feedback Set} problems are among  Karp's  $21$ \NP-complete problems and  have been topic of active research from 
algorithmic~\cite{BafnaBF99,Bar-YehudaGNR98,BeckerG96,CaoCL10,ChekuriM15,ChenFLLV08,ChenLLOR08,Chitnis:2012DSFVS,CyganNPPRW11,CyganPPW13,EvenNSS98,GuruswamiL15,KawarabayashiK12,KakimuraKK12,KociumakaP14,RamanSS06,Wahlstrom14} as well as structural points of view~\cite{erdHos1965independent,KakimuraKM11,KawarabayashiKKK13,PontecorviW12,ReedRST96,Seymour95,Seymour96}. 
 In particular, such problems constitute one 
  of the most important topics of research in parameterized algorithms~\cite{CaoCL10,ChenFLLV08,ChenLLOR08,Chitnis:2012DSFVS,CyganNPPRW11,CyganPPW13,KawarabayashiK12,KakimuraKK12,KociumakaP14,RamanSS06,Wahlstrom14}, spearheading the development of several new techniques. In this paper, we study
the problem where the objective is to find a set of $k$ vertices that intersects all directed cycles in a given digraph. 
 The problem can be formally stated as follows.


\begin{center}
\begin{boxedminipage}{.95\textwidth}
{\sc Directed Feedback Vertex  Set (\dfvs )}

\begin{tabular}{ r l }
\textit{~~~~Instance:} & A digraph $D$ on $n$ vertices and $m$ edges and a positive integer $k$. \\
\textit{Parameter:} & $k$\\
\textit{Question:} & Does there exist a vertex subset of size at most $k$ that intersects \\ & every cycle in $D$?
\end{tabular}
\end{boxedminipage}
\end{center}
\medskip

For over a decade  \dfvs\ was considered the most important open problem in parameterized complexity.  
In fact, this problem was posed as an open problem in the first few papers on fixed-parameter tractability (FPT)~\cite{DowneyF92,DowneyF95}. In a break-through paper, this problem was shown to be fixed-parameter tractable by 
Chen, Liu, Lu, O'Sullivan and Razgon \cite{ChenLLOR08} 
in 2008. They gave an algorithm that runs in time $\Oh(4^kk!k^4n^4)$ where $n$ is the number of vertices in the input digraph. Subsequently, it was observed that, in fact, the running time of this algorithm is $\Oh(4^kk!k^4nm)$ (see for example \cite{CyganFKLMPPS15}). 
Since this break-through, the techniques used to solve \dfvs\ have found numerous applications, yet, \dfvs\ itself has seen no progress since then.


In this paper we make  first progress on  \dfvs\ and obtain the first linear-time {\FPT}-algorithm for \dfvs. 
In particular we give 
the following theorem. 
\begin{restatable}{theorem}{dfvstheorem}\label{thm: dfvs is dfpt}
There is an algorithm for \dfvs\ running in time $\Oh(k!4^kk^5\cdot (n+m))$. 
\end{restatable}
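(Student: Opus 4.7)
The plan is to build on the iterative compression framework of Chen, Liu, Lu, O'Sullivan and Razgon (CLLOR), retaining the overall $\Oh(k! 4^k)$ branching structure but eliminating both the $\Oh(nm)$ factor in each compression step and the $\Oh(n)$ factor contributed by the iterative compression outer loop. Recall that CLLOR reduces \dfvs\ to a sequence of compression instances, each of which (after guessing which vertices of the larger solution survive and guessing an ordering of them) reduces to a \emph{Skew Separator} instance; the branching algorithm for Skew Separator produces a recursion tree with $\Oh(4^k k!)$ leaves, and every node of the tree performs a minimum separator computation traditionally costing $\Oh(km)$.

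For the compression step itself, the first improvement I would make is to exploit that every cut encountered during branching has value at most $k+1$. A single such min-cut can then be obtained in $\Oh(k(n+m))$ time by $k+1$ Ford--Fulkerson augmenting-path searches on the unit-capacity split graph, replacing the $\Oh(km)$ bottleneck. This already gives a per-compression running time of $\Oh(k! 4^k \cdot k^{\Oh(1)} (n+m))$, which combined with the standard iterative compression would yield $\Oh(k! 4^k k^{\Oh(1)} n(n+m))$; so this step alone removes the $m$ in the $nm$ product but is not yet linear.

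The main obstacle, and the technical contribution that justifies the title, is to remove the outer $n$ factor from iterative compression. Naively, vertices are revealed one by one and after each revelation the compression routine is invoked on the current prefix from scratch, inflating the running time by $n$. The plan is instead to design a \emph{global} compression procedure in which we process all $n$ vertex additions through a \emph{single} branching tree: as long as the maintained partial solution continues to hit all directed cycles in the growing graph, no branching is performed; only when a newly added vertex creates a new cycle that is not yet covered do we branch, and this branching is grafted onto the already existing recursion tree. The key claim to establish is that the merged tree still has only $\Oh(4^k k!)$ leaves and that the separator information needed by each branch can be maintained (rather than recomputed) across vertex additions in amortized $\Oh(k^{\Oh(1)}(n+m))$ work per root-to-leaf path. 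The charging argument must ensure that each branching decision is paid for by a $k$-bounded change to the evolving solution rather than by the index of the current iteration. This is exactly the ``general methodology to shave off a factor of $n$'' advertised in the abstract; executing it carefully for \dfvs\ is the heart of the proof.

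Putting the pieces together, the number of leaves of the global recursion tree is $\Oh(k! 4^k)$, the work along each root-to-leaf path is $\Oh(k^5 (n+m))$ after accounting for separator maintenance, guessing of the surviving subset and its ordering (contributing the $4^k$), and verification at the leaves, which yields the claimed bound $\Oh(k! 4^k k^5 (n+m))$.
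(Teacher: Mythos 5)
Your proposal and the paper diverge at the crucial step, and the divergence matters: the paper's contribution is precisely that it \emph{abandons} iterative compression, rather than trying to amortize it. You keep the vertex-by-vertex iterative compression framework and propose to merge the $n$ branching trees into a single ``global'' tree whose work is charged to $k$-bounded changes. The paper instead treats the Chen et al.\ compression routine entirely as a black box (with running time $\Oh(f(k)(n+m)\cdot|\hat W|)$ where $f(k)=4^kk!k^4$) and replaces the outer iterative-compression loop with a divide-and-conquer recursion driven by a new structural lemma (Lemma~\ref{lem:cruxdfvs}). That lemma, proved via tight separator sequences in digraphs, finds in $\Oh(k^2 m)$ time a set $S$ of $\Oh(k)$ vertices such that $D-S$ is either acyclic, or has at least two non-trivial SCCs, or concentrates all remaining cycles in a part of size at most $m/2$, or has a dfvs of size $p-1$ whenever $D$ had one of size $p$. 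Each case either reduces $k$ by one or halves the working instance, so the recursion tree has linear total work, and at each node one calls the black-box compression routine with a small approximate dfvs assembled from the recursive answers. No amortization across a branching tree is ever needed.

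The gap in your plan is the merged-tree claim itself. You write that ``the merged tree still has only $\Oh(4^kk!)$ leaves and that the separator information needed by each branch can be maintained \ldots in amortized $\Oh(k^{\Oh(1)}(n+m))$ work per root-to-leaf path,'' but you offer no argument. There are concrete obstacles: adding a vertex can destroy the structure of previously computed skew separators (minimum separators need not be monotone under vertex additions), so branches of the old tree can become unsound and must be reconstructed, not merely extended; moreover the number of surviving-subset/ordering guesses ($2^{k+1}\cdot(k+1)!$) is re-made at every iteration, and there is no obvious correspondence between the guess trees at step $i$ and step $i+1$ that would let you identify their nodes. The paper explicitly notes that iterative compression ``by its very definition \ldots does not lend itself to the design of linear-time FPT-algorithms,'' and this is exactly why it pivots to the shrinking/recursion strategy. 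Your first improvement (computing each min-cut in $\Oh(k(n+m))$ via bounded Ford--Fulkerson) is fine and is implicitly used, but it only addresses the $m$-versus-$n+m$ issue inside a single compression call, not the $n$-factor from the outer loop; the second improvement, as stated, is a restatement of the goal rather than a proof.
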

\noindent 
Our algorithm achieves the best  possible dependence on the input size while matching the current best-known parameter-dependence -- that of the algorithm of Chen et al.~\cite{ChenLLOR08}, up to a $O(k)$ factor.
Since it is well known that {\dfvs} cannot be solved in time $2^{o(k)}n^c$ for any constant $c$ under the Exponential Time Hypothesis (ETH) \cite{CyganFKLMPPS15,DehneFLRS07}, our algorithm is in fact \emph{nearly-optimal}.  
%
An alternate perspective on our results, is as a generalization of the well known linear-time algorithms~\cite{Knuth73a,Tarjan72} to {\em recognize directed acyclic graphs} (DAGs), to recognizing digraphs that are at most ``$k$ vertices away'' from being acyclic. 
The algorithms to recognize DAGs are simple and elegant. Therefore it is striking that no linear time algorithm to recognize digraphs that are just one vertex away from being acyclic were known prior to this work. Finally, our algorithm only relies on basic algorithmic and combinatorial tools. Thus, our algorithm does not have huge hidden constants in the running time, and we expect it to be implementable and to perform well in practice for small values of $k$.

\begin{table}[t]
\centering
\setlength{\tabcolsep}{4pt}
{\footnotesize
\begin{tabular}{l l  l}
\toprule
Problem Name       &  Running Time    &  Comment       \\
\midrule
{\sc Treewidth} &  $2^{\Oh(k \log k)}\cdot (n\log n)$    &   STOC' 92,    $8$-approximation  \cite{Reed92} \\
{\sc Treewidth}  &  $2^{\Oh(k^3 \log k)}\cdot n$ &   STOC' 93,    \cite{stocBodlaender93,Bodlaender96}               \\
{\sc Treewidth} &  $\Oh(c^k\cdot n)$    &  FOCS' 13,  $5$-approximation \cite{BodlaenderDDFLP13}     \\
{\sc Crossing Number} &  $f(k)\cdot n^2$    & STOC' 01,  \cite{stocGrohe01,Grohe04}       \\
{\sc Crossing Number} &  $f(k)\cdot n$    &  STOC' 07,  \cite{KawarabayashiR07}       \\
{\sc Vertex Planarization} &  $f(k)\cdot n$    &  FOCS' 09,  \cite{Kawarabayashi09}    \\
{\sc Vertex Planarization} &   $2^{\Oh(k \log k)}\cdot n$ & SODA' 14,  \cite{JansenLS14}      \\
{\sc Odd Cycle Transversal} &  $f(k)\cdot (n+m)\alpha(n+m)$    &   SODA' 10,  \cite{KawarabayashiR10}       \\
{\sc Odd Cycle Transversal} &  $\Oh(4^k\cdot k^{\Oh(1)} \cdot (n+m))$    & SODA' 14,   \cite{IwataOY14,RamanujanS14}      \\
{\sc Genus} &  $\Oh(c^k\cdot n)$    & FOCS' 08,   \cite{KawarabayashiMR08}      \\
{\sc Interval Vertex Deletion} &  $\Oh(8^k\cdot (n+m))$    &  SODA' 16,  \cite{Cao16}     \\
{\sc Planar-$\cal F$-Deletion} &          $f(k)\cdot n^2$     &   JACM' 88,   \cite[Theorem $6$]{FellowsL88} \\
{\sc Planar-$\cal F$-Deletion} &            $f(k)\cdot n$    &   STOC' 93,   \cite[Theorem $6.1$]{stocBodlaender93} \cite[Theorem $7.1$]{Bodlaender96}   \\
{\sc Planar-$\cal F$-Deletion} &            $\Oh(c^k\cdot n)$     &  FOCS' 12 , Randomized, \cite{FominLMS12}  \\
{\sc Planar-$\cal F$-Deletion} &            $\Oh(c^k\cdot n)$     &  SODA' 15, Deterministic, \cite{FominLMRS15}    \\
{\sc Graph Minor Decomposition} &  $f(k)\cdot n^2$    & SODA' 13,  \cite{GroheKR13}      \\
{\sc Permutation Pattern} &            $2^{\Oh(k^2 \log k)}\cdot n$       &  SODA' 14, \cite{GuillemotM14}    \\
\bottomrule
\end{tabular}
		
	}
	\caption{\label{fig:vertexresults}Summary of some new and old  parameterized algorithms with emphasis on improving the dependence on the input size.}
\end{table}

\medskip
\noindent 
{\bf Dependence on input size in {\FPT} algorithms.}  Our algorithm for \dfvs\  belongs to a large body of work where the main goal is to design linear time algorithms  for \NP-hard problems for a fixed value of $k$. That is, to design  an algorithm with running time $f(k)\cdot \Oh(|I|)$, where $|I|$ denotes the size of the input instance.  This area of research predates even parameterized complexity. The genesis of parameterized complexity is in the theory of graph minors, developed by Robertson and Seymour~\cite{RobertsonS95b,RobertsonS03b,RobertsonS04}. Some of the important algorithmic consequences of this theory include $\Oh(n^3)$ algorithms for {\sc Disjoint Paths} and {\sc $\cal F$-Deletion} for every fixed values of $k$. 
These results led to a whole new area of designing algorithms for \NP-hard problems with as small dependence on the input size as possible; resulting in 
algorithms with improved dependence on the input size for {\sc Treewidth}~\cite{stocBodlaender93,Bodlaender96}, \FPT{} approximation for 
{\sc Treewidth}~\cite{BodlaenderDDFLP13,Reed92}
{\sc Planar $\cal F$-Deletion}~\cite{stocBodlaender93,Bodlaender96,FellowsL88,FominLMS12,FominLMRS15}, and {\sc Crossing Number}~\cite{stocGrohe01,Grohe04,KawarabayashiR07}, to name a few.

The advent of parameterized complexity started to shift the focus away from the running time dependence on input size to the dependence on the parameter. That is, the goal became designing parameterized algorithms with running time upper bounded by $f(k)n^{\Oh(1)}$, where the function $f$ grows as slowly as possible.
Over the last two decades researchers have tried to optimize one of these objectives, but rarely both at the same time. More recently, efforts have been made towards obtaining linear (or polynomial) time parameterized algorithms that compromise as little as possible on the dependence of the running time on the parameter $k$. The gold standard for these results are algorithms with linear dependence on input size as well as provably optimal (under ETH) dependence on the parameter. New results in this direction include parameterized algorithms for problems such as {\sc Odd Cycle Transversal}~\cite{IwataOY14,RamanujanS14}, {\sc Subgraph Isomorphism}~\cite{Dorn10}, {\sc Planarization}~\cite{JansenLS14,Kawarabayashi09}, {\sc Subset Feedback Vertex Set}~\cite{LokshtanovRS15}
as well as a single-exponential and linear time parameterized constant factor approximation algorithm for {\sc Treewidth}~\cite{BodlaenderDDFLP13}. Other recent results include parameterized algorithms with improved dependence on input size for a host of problems~\cite{GroheKR13,KawarabayashiKR12,KawarabayashiM08,KawarabayashiMR08,KawarabayashiR09,KawarabayashiR10}.  We refer to Table~\ref{fig:vertexresults} for a quick overview of results in this area.

\medskip

\noindent 
{\bf Methodology.}
At the heart of numerous {\FPT}-algorithms lies the fact that, if one could efficiently compute a sufficiently good approximate solution,
it is then sufficient to design an {\FPT}-algorithm for the ``compression version'' of a problem in order to obtain an {\FPT}-algorithm for the general version. 
In the compression version of a problem, the input also includes an approximate solution whose size depends only on the parameter. Since a given approximate solution may be used to infer significant structural information about the input, it is usually much easier to design {\FPT}-algorithms for the compression version than for the original problem. The efficiency of this approach clearly depends on two factors --  (a) the time required to compute an approximate solution and (b) the time required to solve the compression version of the problem when the approximate solution is provided as input.
  
  This approach has been used mainly in the following two settings. In the first setting, the objective is the design of linear-time {\FPT}-algorithms. In this setting, for certain problems, it can be shown that if the treewidth of the input graph is bounded by a function of the parameter then the problem can be solved by a linear-time {\FPT}-algorithm (either designed explicitly or obtained by invocation of an appropriate algorithmic meta-theorem). On the other hand, if the treewidth of the input graph exceeds a certain bound, then there is a sufficiently large (induced) matching which one can contract and obtain an instance whose \emph{size} is a constant fraction of that of the original input. Now, the algorithm is recursively invoked on the reduced instance and certain problem-specific steps are used to convert the recursively computed solution into an approximate solution for the given instance. Then, a linear-time {\FPT}-algorithm for the compression version is executed to solve the general problem on this instance. Some of the results that fall under this paradigm are Bodlaender's linear {\FPT}-algorithm for {\sc Treewidth} \cite{Bodlaender96}, the  {\FPT}-approximation algorithms for {\sc Treewidth}~\cite{BodlaenderDDFLP13,Reed92}, as well as   algorithms for {\sc Vertex Planarization}~\cite{Kawarabayashi09,JansenLS14}. Let us call this the {\em method of global shrinking}. This is one of the most commonly used techniques in designing linear-time FPT-algorithms on undirected graphs.  
  
  On the other hand, when designing {\FPT}-algorithms where the dependence on the input is \emph{not} required to be linear, one can use the \emph{iterative}-compression technique, introduced by Reed, Smith and Vetta \cite{ReedSV04}. 
Here the input instance is gradually built up by simple operations, such as vertex additions. After each operation, an optimal solution is re-computed, starting from an optimal solution to the smaller instance.
%
%
By its very definition, the iterative compression technique does not lend itself to the design of linear-time {\FPT}-algorithms. Hence, it may look as if one has to look for alternative ways when aiming for linear-time {\FPT}-algorithms. In the recent years, some of the problems which were initially solved using the iterative compression technique, have seen the development of entirely new algorithms.
%
%
Examples include the first linear-time {\FPT}-algorithms for the {\sc Odd Cycle Transveral}, {\sc Almost 2-SAT} and {\sc Edge Unique Label Cover} problems~\cite{IwataOY14,RamanujanS14,YoichiWY13}. All of these algorithms are based on branching and linear programming techniques.

  Another general approach to the design of linear-time {\FPT}-algorithms has been introduced by Marx et al. \cite{MarxOR13}. These algorithms are based on the ``Treewidth Reduction Theorem''' which states that in undirected graphs, for any pair of vertices $s$ and $t$, all minimal $s$-$t$ separators of bounded size are contained in a part of the graph that has bounded treewidth.

However, all of these techniques are specifically designed for undirected graphs and hence fail when addressing problems on directed graphs. Our main contribution is a novel approach for `lifting' linear-time {\FPT}-algorithms for the compression version of feedback-set problems on \emph{digraphs} to linear-time {\FPT}-algorithms for the general version of the problem. One may think of our approach as a generalization of the method of global shrinking, pioneered by Bodlaender~\cite{Bodlaender96} in his celebrated linear time {\FPT}-algorithm for {\sc Treewidth}. 
 
%
%
%
%
%
%

Given a digraph $D$, we say that $S$ is a {\em directed feedback vertex set (dfvs)} if deleting $S$ from $D$ results in a DAG. At the core of our algorithm lies the following new structural lemma regarding digraphs with a small dfvs. 

 \begin{restatable}{lemma}{cruxlemmadfvs}
\label{lem:cruxdfvs}
Let $D$ be a strongly connected digraph and $p\in {\mathbb N}$. There is an algorithm that, given $D$ and $p$, runs in time $\bigoh(p^2m)$ (where $m$ is the number of arcs in $D$) and either correctly concludes that $D$ has no dfvs of size at most $p$ or returns a set $S$ with at most $2p+1$ vertices such that one of the following holds.
\begin{itemize}
\setlength{\itemsep}{-2pt}
\item $S$ is a dfvs for $D$.
\item $D-S$ has at least 2 non-trivial strongly connected components (strongly-connected components with at least 2 vertices).
\item The number of arcs of $D$ whose head and tail occur in the same  
 non-trivial strongly connected component of $D-S$ (arcs participating in a cycle of $D-S$) is at most $\frac{m}{2}$.
\item  If $D$ has a dfvs of size at most $p$ then $D-S$ has a dfvs of size at most $p-1$.
\end{itemize}
%
\end{restatable}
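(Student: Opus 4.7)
The approach rests on a short-cycle dichotomy: either $D$ has a cycle of length at most $2p+1$ (which immediately gives case (d)), or $D$ has large girth and we exploit the resulting structural rigidity to land in one of (a), (b), (c).

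\emph{Phase 1 (short cycle exists).} I would first attempt to find a cycle $C$ in $D$ of length at most $2p+1$. If such $C$ exists, I set $S := V(C)$; then $|S| \leq 2p+1$, and since every dfvs of size at most $p$ must intersect every cycle of $D$, the digraph $D-S$ admits a dfvs of size at most $p-1$, establishing case (d). The algorithmic challenge is to locate such a short cycle (or certify its absence) within the $\bigoh(p^2 m)$ budget, since naive per-vertex BFS to depth $2p+1$ costs $\Theta(nm)$. My plan is to first build a DFS tree of $D$ in $\bigoh(m)$ time and inspect all back-edges: a back-edge $(u,w)$ together with the $w$-to-$u$ tree path is a cycle of length $\mathrm{depth}(u)-\mathrm{depth}(w)+1$, so if any back-edge has span at most $2p$ we are done. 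If every back-edge has larger span, any remaining short cycle must traverse cross-edges; to catch these I would run a truncated BFS (to radius $2p+1$) from an $\bigoh(p^2)$-size set of seed vertices selected from the DFS structure, amortizing to the target $\bigoh(p^2 m)$ total time.

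\emph{Phase 2 (no short cycle).} If no cycle of length at most $2p+1$ exists, then the girth of $D$ strictly exceeds $2p+1$. In this regime I would take any cycle $C$ of $D$ (found by DFS in $\bigoh(m)$ time) and let $S$ be any $2p+1$ consecutive vertices of $C$, so $|S| = 2p+1$. The claim to be verified is that at least one of (a), (b), (c) must then hold. The plan is an argument by contradiction: if $S$ is not a dfvs, $D-S$ has exactly one non-trivial SCC, and more than $m/2$ arcs lie in it, then that SCC is itself strongly connected with girth exceeding $2p+1$ and dfvs at most $p$ (by restricting any dfvs of $D$); iterating the same construction inside the SCC would eventually yield either a short cycle (contradicting the girth bound) or a packing of more than $p$ vertex-disjoint cycles (contradicting $\mathrm{dfvs}(D)\leq p$).

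\emph{Main obstacle.} The hardest part is the structural argument in Phase 2: rigorously proving that choosing $2p+1$ consecutive vertices of any long cycle always forces one of (a)--(c). I expect the proof to hinge on an arc-counting/Erd\H{o}s--P\'{o}sa-style lower bound relating cycle packing to dfvs in digraphs, together with the observation that in the failure regime we can iterate the construction inside the putative ``bad'' SCC to extract a large cycle packing. A secondary technical obstacle is the Phase 1 implementation: ensuring that short cycles using cross-edges can be detected in $\bigoh(p^2 m)$ rather than $\bigoh(nm)$ time, which requires a careful choice of seed vertices and reuse of BFS traversals guided by the DFS tree. If these two ingredients go through, combining them yields the claimed $\bigoh(p^2 m)$-time construction of $S$, and the ``correctly concludes no dfvs of size at most $p$'' option is triggered exactly when the contradiction in Phase 2 materializes during the iteration.
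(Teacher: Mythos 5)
Your approach is not the paper's (which is built entirely around tight $u$--$v$ separator sequences for a vertex pair $u,v$ lying on a common cycle, combined with a ``goodness''/``lightness'' classification of the separators in the sequence), and more importantly, your Phase~2 claim is false. Here is a concrete counterexample. Let $D$ consist of two directed cycles $C_1$ and $C_2$ sharing a single vertex $v$, with $|C_1|=100$ and $|C_2|=10$. Then $\{v\}$ is a dfvs, so $p=1$ and $2p+1=3$. The girth is $10>2p+1$, so you are in Phase~2. If you take $C:=C_2$ (a perfectly legitimate ``any cycle found by DFS'') and choose as $S$ three consecutive vertices of $C_2$ avoiding $v$, then: $S$ is not a dfvs ($C_1$ survives), so (a) fails; $D-S$ has exactly one non-trivial SCC, namely $C_1$, so (b) fails; that SCC contains $100$ of the $m=110$ arcs, well over $m/2$, so (c) fails; and $D-S$ still has dfvs number $1=p$, not $p-1=0$, so (d) fails as well. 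Thus ``$2p+1$ consecutive vertices of an arbitrary long cycle'' does not land in any of the four cases, and the iteration/Erd\H{o}s--P\'osa-style repair you sketch cannot fix this, because the failure occurs already at the first step and no amount of recursing inside the bad SCC changes the verdict about this particular~$S$.

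There is a second, independent gap in Phase~1: your algorithm for finding a directed cycle of length at most $2p+1$ (or certifying none exists) in $\bigoh(p^2m)$ time is not established. DFS back-edges certify only cycles formed by tree paths plus a single back-arc; short cycles using forward- and cross-edges (or several back-arcs) are not captured, and the fallback of ``truncated BFS from an $\bigoh(p^2)$-size set of seed vertices'' is left unspecified -- it is unclear what the seeds are or why $\bigoh(p^2)$ of them would suffice to witness every short cycle. Detecting short directed cycles is genuinely harder than in the undirected setting, and this step cannot be waved through. The paper avoids both problems by never reasoning about girth at all: it fixes an arc $(a,b)$ on some cycle, sets $u=b$, $v=a$, computes a tight $u$--$v$ separator sequence of order $p$ in $\bigoh(p^2m)$ time (Lemma~\ref{lem:find_tight_sequence}), and then extracts $S$ as the union of at most two separators from that sequence (plus $\{u,v\}$), using Lemma~\ref{lem:split_solution}, the monotonicity lemma, and the tightness conditions to force one of the four outcomes. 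That mechanism is what guarantees the ``every $p$-dfvs avoiding $u,v$ is a $u$--$v$ separator'' structure that your consecutive-vertices-on-a-cycle choice lacks.
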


 \noindent 
Our linear-time FPT algorithm for {\dfvs}  is obtained by a careful interleaving of the algorithm of Lemma \ref{lem:cruxdfvs} with an algorithm solving the compression version of {\dfvs} (in this case, the compression routine of Chen et al. \cite{ChenLLOR08}). The proof of Lemma \ref{lem:cruxdfvs} itself is based on extending the notion of important sequences~\cite{LokshtanovR12} to digraphs, and then analyzing a single such sequence. Furthermore, the proof of Lemma \ref{lem:cruxdfvs} only relies on properties of {\dfvs} that are shared by several other feedback set and graph separation problems. Hence, we directly prove a more general version of this lemma and show how it can be used as a black box to shave off a factor of $n$ from existing iterative compression based algorithms for other problems which satisfy certain conditions. This results in speeding up by a factor of $n$, the current best {\FPT} algorithms for 
{\sc Multicut}~\cite{MarxRazmulticut,MarxR14,BousquetDT11} and {\sc  Directed Subset Feedback Vertex Set}~\cite{ChitnisCHM15,Chitnis:2012DSFVS}.

\section{Preliminaries}\label{sec:prelims}
\myparagraph{Parameterized Complexity.} 
Formally, a {\em parameterization} of a problem is the assignment of an integer $k$ to each input instance and  we say that a parameterized problem is {\em fixed-parameter tractable} ({\FPT}) if there is an algorithm that solves the problem in time $f(k)\cdot |I|^{\bigoh(1)}$, where $|I|$ is the size of the input instance and $f$ is an
arbitrary computable function depending only on the parameter $k$. 
For more background, the reader is referred to the monographs \cite{DF99,FG06,Nie06,CyganFKLMPPS15}.

\medskip 
\myparagraph{Digraphs.}
%
%
For a digraph $D$ and vertex set $X\subseteq V(D)$, we say that $X$ is a \textbf{dfvs} of $D$ if $X$ intersects every cycle in $D$. We say that $X$ is a minimal dfvs of $D$ if no proper subset of $D$ is also a dfvs of $D$. We call $X$ a minimum dfvs of $D$ if there is no smaller dfvs of $D$. For an arc $(u,v)\in A(D)$, we refer to $u$ as the \textbf{tail} of the arc and $v$ as the \textbf{head}.
 For a subset $X$ of vertices, we use $N^+(X)$ to denote the set of out-neighbors of $X$ and $N^-(X)$ to denote the set of in-neighbors of $X$. We use $N^i[X]$ to denote the set $X\cup N^i(X)$ where $i\in \{+,-\}$. We denote by $A[X]$ the subset of $A(D)$ with both endpoints in $X$. A strongly connected component of $D$ is a maximal subgraph in which every vertex has a directed path to every other vertex. We say that a strongly connected component is \textbf{non-trivial} if it has at least 2 vertices and {\bf trivial} otherwise. For disjoint vertex sets $X$ and $Y$, $Y$ is said to be reachable from $X$ if for \emph{every} vertex  $y\in Y$, there is a vertex $x\in X$ such that the digraph contains a directed path from $x$ to $y$. 

 \myparagraph{Structures.}
 	For $\epsilon\in {\mathbb N}$, an $\epsilon$-structure is a tuple where the first element of the tuple is a digraph $D$ with the remaining elements of the tuple being relations of arity at most $\epsilon$ over $V(D)$. Two $\epsilon$-structures $Q_1$ and $Q_2$ are said to have the same type if $Q_1=(D_1,R_1,\dots, R_\ell)$, $Q_2=(D',R_1',\dots, R_\ell')$ and for each $i\in [\ell]$, $R_i$ and $R_i'$ are relations of the same arity.
The size of an $\epsilon$-structure $Q=(D,R_1,\dots, R_\ell)$ is denoted as $|Q|$ and is defined as $m+n+ \epsilon \cdot \Sigma_{i=1}^\ell |R_i|$, where $m$ and $n$ are the number of vertices in $D$ and $|R_i|$ is the number of tuples in $R_i$. In this paper, whenever we talk about a family $\cQ$ of $\epsilon$-structures, it is to be understood that $\cQ$ only contains $\epsilon$-structures which are pairwise of the same type and this type is also called the {\em type of $\cQ$}.

\begin{definition}
	Let $Q=(D,R_1,\dots, R_\ell)$ be an $\epsilon$-structure. For a set $X\subseteq V(D)$, let $D_X=D[X]$. 
	We define the induced structure $Q[X]=(D[X],R_1|_X,\dots, R_\ell|_X)$ where $R_i|_X$ is the restriction of the relation $R_i$ to the set $V(D_X)$, that is those tuples in $R_i$ which have all elements in $X$. For any $X\subseteq V(D)$, we denote by $Q-X$ the substructure $Q[V(D)\setminus X]$.
\end{definition}

\begin{definition}
	Let $\cQ$ be a family of $\epsilon$-structures. We say that $\cQ$ is {\bf hereditary} if for every $Q\in \cQ$, every induced substructure of $Q$ is also in $\cQ$. We say that a family $\cQ$ of $\epsilon$-structures is {\bf linear-time recognizable} if there is an algorithm that, given an $\epsilon$-structure $Q$, runs in time $\bigoh(|Q|)$ and correctly decides whether $Q\in \cQ$. Finally, we say that $\cQ$ is {\bf rigid} if the following two properties hold:\begin{itemize}	\item For every $\epsilon$-structure $Q=(D,R_1,\dots, R_\ell)$, if $D$ has no arcs then   $Q\in \cQ$ and 
	\item  $Q=(D,R_1,\dots, R_\ell)\in \cQ$ if and only if for every strongly connected component $C$ in the digraph $D$, the induced substructure $Q[C]\in \cQ$.
	\end{itemize}
	\end{definition}


The {\sc $\cQ$-Deletion($\epsilon$)} problem is formally defined as follows.

\begin{center}
\begin{boxedminipage}{.96\textwidth}
{\sc $\cQ$-Deletion($\epsilon$)}

\begin{tabular}{ r l }
\textit{~~~~Instance:} & An $\epsilon$-structure $Q=(D,R_1,\dots, R_\ell)$ and a positive integer $k$. \\
\textit{Parameter:} & $k$\\
\textit{Question:} & Does there exist a set $X\subseteq V(D)$ of size at most $k$ such that $Q-X\in \cQ$?
\end{tabular}
\end{boxedminipage}
\end{center}
\medskip

Our main contribution is a theorem (Theorem \ref{thm:meta}) that, under certain conditions which are fulfilled by several well-studied special cases of {\sc $\cQ$-Deletion($\epsilon$)},  guarantees an {\FPT} algorithm for {\sc $\cQ$-Deletion($\epsilon$)} whose running time has a specific form.

A set $X\subseteq V(D)$ such that $Q-X\in \cQ$ is called a \emph{deletion set of $Q$ into $\cQ$}. In the {\sc $\cQ$-Deletion($\epsilon$) Compression} problem, the input is a triple $(Q,k,\hat W)$ where $(Q,k)$ is an instance of {\sc $\cQ$-Deletion($\epsilon$)} and $\hat W$ is a vertex set such that  $Q-\hat W\in \cQ$. The question remains the same as for {\sc $\cQ$-Deletion($\epsilon$)}. 
However, the parameter for this problem is $k+|\hat W|$.
We say that an algorithm $\bA$ is an algorithm for the the {\sc $\cQ$-Deletion($\epsilon$) Compression} problem if, on input $Q,k,\hat W$ the algorithm either correctly concludes that $(Q,k)$ is a {\No} instance of  {\sc $\cQ$-Deletion($\epsilon$)} or \emph{computes} a smallest set $X$ of size at most $k$ such that $Q-X\in \cQ$.


\section{ The {\FPT} algorithm for {\sc $\cQ$-deletion($\epsilon$) }}\label{sec:main}


%
%
%
%
%


In this section, we formally state and prove our main theorem. In the next section, we demonstrate how a direct application of this theorem speeds up by a factor of $n$, existing {\FPT} algorithms for certain well-studied feedback set and graph separation problems.


 \begin{theorem}\label{thm:meta}
 	Let $\epsilon\in \mathbb N$ and let $\cQ$ be a linear-time recognizable,  hereditary and rigid family of 
$\epsilon$-structures. Let $\gamma\in {\mathbb N}, d\in {\mathbb R}_{>1}$ and $f:{\mathbb N}\to {\mathbb N}$ such that $f(t)\geq t$ and  $f(t-1)\leq \frac{f(t)}{d}$ for every $t\in {\mathbb N}$. 
 	\begin{itemize} 
\setlength{\itemsep}{-2pt}
\item  Let ${\bA}$ be an algorithm for {\sc $\cQ$-Deletion($\epsilon$) Compression} 
that, on input $Q=(D,R_1,\dots, R_\ell)$, $k$ and $\hat W$, runs in time $\bigoh(f(k)\cdot |Q|^\gamma \cdot |\hat W|)$, 
where 
	 $\hat W$ is a deletion set of $Q$ into $\cQ$,

\item Let ${\bB}$ be an algorithm that, on input $Q=(D,R_1,\dots, R_\ell)\notin \cQ$, runs in time $\bigoh(|Q|)$ and returns a pair of vertices $u,v$ such that every deletion set of $Q$ into $\cQ$ which is disjoint from $u$ and $v$ is a $u$-$v$ separator in $D$. 
\end{itemize}
Then, there is an algorithm that, given an instance $(Q=(D,R_1,\dots, R_\ell),k)$ of {\sc $\cQ$-Deletion($\epsilon$)} and the algorithms $\bA$ and $\bB$, runs in time $\bigoh(f(k)\cdot k \cdot |Q|^\gamma)$ and either computes a set $X$ of size at most $k$ such that $Q-X\in \cQ$ or correctly concludes that no such set exists.
 	
 \end{theorem}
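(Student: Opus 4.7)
The plan is a bootstrap: compute an \emph{approximate} deletion set $\hat W \subseteq V(D)$ of size $\bigoh(k)$ with $Q - \hat W \in \cQ$, then invoke algorithm $\bA$ on $(Q, k, \hat W)$ to obtain an optimal solution. Since $\bA$ runs in $\bigoh(f(k) \cdot |Q|^\gamma \cdot |\hat W|)$, this yields the target bound $\bigoh(f(k) \cdot k \cdot |Q|^\gamma)$, provided $\hat W$ is produced no more slowly.

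To construct $\hat W$, I would first prove an analogue of Lemma~\ref{lem:cruxdfvs} in the generality of hereditary, rigid, linear-time recognizable families of $\epsilon$-structures: for such a $Q \notin \cQ$ and integer $p$, in time $\bigoh(p^2 |Q|)$ either conclude that $Q$ has no deletion set of size at most $p$, or output a set $S$ of at most $2p+1$ vertices satisfying one of (i) $S$ is a deletion set for $Q$ into $\cQ$; (ii) $D - S$ has at least two non-trivial strongly connected components; (iii) the number of arcs of $D$ participating in a cycle of $D - S$ is at most $m/2$; or (iv) $Q - S$ admits a deletion set of size $p-1$. Algorithm $\bB$ supplies a pair $(u,v)$ such that every deletion set disjoint from $\{u,v\}$ separates $u$ from $v$ in $D$; plugging this into the directed important-separator framework and analyzing a single important sequence (in the style of \cite{LokshtanovR12}) yields $S$.

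Given the lemma, $\hat W$ is assembled by recursion. By rigidity we treat each non-trivial strongly connected component of $D$ independently, and by hereditarity discard trivial ones. In case (i) we set $\hat W \gets \hat W \cup S$; in case (ii) we add $S$ and recurse on each new component with the same budget; in case (iii) we recurse on the residual instance after dropping arcs not lying in a cycle of $D - S$ (so the ``active'' arc count halves); and in case (iv) we recurse with budget $p - 1$. A measure argument on the pair $(p,\text{active arc count})$ bounds both $|\hat W| = \bigoh(k)$ and the total running time spent constructing $\hat W$ by $\bigoh(f(k) \cdot k \cdot |Q|^\gamma)$, where the condition $f(t-1) \le f(t)/d$ is what absorbs the geometric sum arising from case (iv). The main obstacle is establishing this generalized structural lemma: one must lift the directed important-separator arguments used for \dfvs\ to arbitrary $\epsilon$-structures using \emph{only} rigidity, hereditarity, linear-time recognizability, and the witness pair produced by $\bB$, while also proving that the four-case dichotomy is exhaustive and that the bound $|S| \le 2p+1$ is preserved in the abstract setting.
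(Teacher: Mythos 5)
Your proposal and the paper agree on the structural core --- a generalization of Lemma~\ref{lem:cruxdfvs}, obtained via tight separator sequences seeded by the pair returned by $\bB$, giving a four-way dichotomy --- but the way you combine it with $\bA$ is genuinely different from the paper, and that difference is where the gap lies.

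You propose a clean two-phase bootstrap: first build an $O(k)$-size deletion set $\hat W$ by pure recursion on the structural lemma (never touching $\bA$), then call $\bA$ once on $(Q,k,\hat W)$. The difficulty is that nothing in your recursion actually certifies $|\hat W|=O(k)$. In your case (ii) you ``recurse on each new component with the same budget $p$''; even corrected to $p-1$ (which is what the structure forces, since every non-trivial component not in $\cQ$ must absorb at least one solution vertex), each level of the tree contributes an $S$ of size $\Theta(p)$ to $\hat W$, case (ii) can branch, and case (iii) creates long unit-branching chains without reducing $p$, so the accumulated size of $\hat W$ is bounded only by something like $2^{k}\cdot k\cdot\mathrm{poly}\log|Q|$, not $O(k)$. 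The advertised ``measure argument on $(p,\text{active arc count})$'' controls \emph{running time}, not $|\hat W|$. And $|\hat W|$ is exactly the quantity that feeds the final $\bigoh(f(k)\,|Q|^\gamma\,|\hat W|)$ call to $\bA$, so the target bound does not follow.

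The paper sidesteps this entirely by \emph{interleaving} $\bA$ with the recursion: every recursive call returns an \emph{optimal} solution (of size at most its budget), obtained by an internal call to $\bA$ on a $\hat W$ assembled from the children's outputs plus the current $S$. Because the children's returns are optimal, they have size at most $k-1$ each, so the $\hat W$ passed to $\bA$ at the current level has size at most $5k$ deterministically, with no exponential accumulation. The price is that $\bA$ is invoked once per recursion node, and the recurrence in Claim~\ref{clm:mainRuntime} (using $f(t-1)\le f(t)/d$ and the fact that case (3) halves $|Q|$ while the others decrement $k$) shows that the geometric sum still lands at $\bigoh(f(k)\,k\,|Q|^\gamma)$. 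If you want to keep your two-phase architecture, you would need an independent proof that some recursion produces an $O(k)$-size approximate deletion set in time $\bigoh(f(k)\,k\,|Q|^\gamma)$ using only the structural lemma --- that would in fact be a stronger statement (an FPT constant-factor approximation decoupled from $\bA$), and it is not established by what you have written.

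Two smaller points. Your case (iii) is phrased as ``arcs participating in a cycle of $D-S$ is at most $m/2$,'' which is the DFVS specialization; the abstract version used in the paper (Lemma~\ref{lem:crux}) must instead speak of the size $|Q_1|$ of the induced substructure on the non-$\cQ$ part dropping to $\le|Q|/2$, since in the general setting there is no notion of ``arcs on a cycle,'' and the relations $R_i$ also contribute to $|Q|$. And your case (ii) needs the qualifier ``each inducing a substructure not in $\cQ$'': a non-trivial strongly connected component of $D-S$ that already induces a member of $\cQ$ gives no budget reduction and should simply be discarded by rigidity.
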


Before we proceed, we make a few remarks regarding the conditions in the premise of the lemma. Note that we require the running time of Algorithm $\bA$ to be of the form $\bigoh(f(k)\cdot |Q|^\gamma \cdot |\hat W|)$ in spite of the {\sc $\cQ$-deletion($\epsilon$) compression} problem being formally parameterized by $|\hat W|+k$. At first glance, it may appear that this is a requirement that is much stronger than simply asking for an {\FPT} algorithm for {\sc $\cQ$-deletion($\epsilon$) compression}. However, we point out that as long as $\cQ$ is hereditary, this requirement is in fact no stronger than simply asking for an {\FPT} algorithm for {\sc $\cQ$-deletion($\epsilon$) compression}. Precisely, if there is an {\FPT} algorithm for {\sc $\cQ$-deletion($\epsilon$) compression}, that is an algorithm that runs in time $\bigoh(g(k+|\hat W|)\cdot |Q|^\delta)$ for some function $g$ and constant $\delta$, then we can obtain an algorithm for {\sc $\cQ$-deletion($\epsilon$) compression} that runs in time $\bigoh(g(2k+1)\cdot |Q|^\delta \cdot |\hat W|)$ by using a folklore trick of running the compression step for the special case of $|\hat{W}|=k+1$, $|\hat W|$  times.

The main technical component of the proof of this theorem is a generalization of Lemma \ref{lem:cruxdfvs}. The proof of this lemma (Lemma \ref{lem:crux}), is fairly technical and requires the introduction of more notation. In order to keep the presentation of the paper streamlined, we only state the lemma in this section and describe how we use it in the proof of Theorem \ref{thm:meta} with the proof of the lemma postponed to Section \ref{sec:crux}.


 \begin{restatable}{lemma}{cruxlemma}
\label{lem:crux}
Let $\epsilon\in \mathbb N$ and let $\cQ$ be a linear-time recognizable,  hereditary  and rigid family of 
$\epsilon$-structures.
 There is an algorithm that, given an $\epsilon$-structure $Q=(D,R_1,\dots, R_\ell)\notin \cQ$ where $D$ is strongly connected, vertices $u,v\in V(D)$, and $p\in {\mathbb N}$, runs in time $\bigoh(p^2|Q|)$ (where $m$ is the number of arcs in $D$) and either correctly concludes that $D$ has no $u$-$v$ separator of size at most $p$ or returns a set $S$ with at most $2p+2$ vertices such that one of the following holds.
\begin{itemize}
\setlength{\itemsep}{-2pt}
\item $Q-S\in \cQ$.
\item $D-S$ has at least 2 strongly connected components each of which induces a substructure of $Q$ not in $\cQ$.
\item The strongly connected components of $D-S$ can be partitioned into 2 sets inducing substructures of $Q$, say $Q_1$ and $Q_2$ such that $Q_1\notin \cQ$, $Q_2\in \cQ$ and  $|Q_1|\leq \frac{1}{2}|Q|$.
\item  If $Q$ has a deletion set of size at most $p$ into $\cQ$ then $Q-S$ has a deletion set of size at most $p-1$ into $\cQ$.
\end{itemize}
%
\end{restatable}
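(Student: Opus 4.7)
\textbf{The plan} is to adapt the important-sequence machinery of Lokshtanov and Ramanujan to the digraph setting. First, compute a maximum $u$-$v$ flow in $D$ using $p$ augmenting-path iterations; in $\bigoh(p\cdot|Q|)$ time this either certifies that no $u$-$v$ separator of size at most $p$ exists, or produces a $u$-$v$ min-cut $S_0$ of size at most $p$ together with the inclusion-minimal set $U_0$ of vertices reachable from $u$ in $D-S_0$ (the cut ``closest to $u$'', read off the residual graph). Then iteratively contract $U_i$ into $u$ and perform a single flow augmentation to produce the next cut $S_{i+1}$ and its reachable set $U_{i+1}\supsetneq U_i$. Submodularity of directed cut capacities bounds the length of the resulting sequence by $t\le p$, so the entire sequence is built in $\bigoh(p^2|Q|)$ time.

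For each $i\in\{0,\ldots,t-1\}$ I would form the candidate $S = S_i\cup S_{i+1}\cup\{u,v\}$, of size at most $2p+2$, and inspect the SCCs of $D-S$. Since $S_i$ and $S_{i+1}$ are both $u$-$v$ separators, every SCC lies entirely in one of three regions: the $u$-side $U_i\setminus\{u\}$, the slab $U_{i+1}\setminus U_i\setminus S_{i+1}$, or the far side $V(D)\setminus U_{i+1}\setminus\{v\}$. Using the linear-time recognizer for $\cQ$ on each SCC, let $a_i$ be the count and $b_i$ the total $|Q|$-weight of the non-$\cQ$-inducing SCCs. Then: if $a_i=0$, return $S$ (Case~1 by rigidity); if $a_i\ge 2$, return $S$ (Case~2); if $a_i=1$ and $b_i\le|Q|/2$, return $S$ with $Q_1$ the lone non-$\cQ$ SCC and $Q_2$ the remaining SCCs (Case~3, invoking rigidity in both directions).

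If Cases~1--3 all fail at every index, then for every $i$ there is a unique non-$\cQ$ SCC $C_i$ of $|Q|$-weight greater than $|Q|/2$. Because the three regions partition $V(D)\setminus S$ and the $U_i$'s strictly grow, the location of $C_i$ can shift only monotonically from the far side to the $u$-side as $i$ increases. At a transition index $C_i$ would lie in the slab, and any other non-$\cQ$ SCC present on either flank would then have triggered Case~2 or Case~3; hence the fall-through is that $C_i$ remains on the same flank (say the far side) for all $i$. In this regime I would output $S = S_0\cup\{u,v\}$. For Case~4 it suffices to exhibit some $\le p$-vertex deletion set $X$ of $Q$ into $\cQ$ with $X\cap S\ne\emptyset$ (then $X\setminus S$ has size at most $p-1$ and, by heredity, is a deletion set of $Q-S$ into $\cQ$); heredity forces any such $X$ disjoint from $\{u,v\}$ to hit $C_0$, because otherwise the SCC of $D-X$ containing $C_0$ would induce a structure in $\cQ$ whose restriction to $C_0$ would, by heredity, also lie in $\cQ$, contradicting $Q[C_0]\notin\cQ$; a closest-min-cut uncrossing argument applied to $X$ and $S_0$ using submodularity of directed cuts then forces $X\cap S_0\ne\emptyset$.

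\textbf{The main obstacle} I expect is the uncrossing step of the fall-through, since $X$ is in general not a $u$-$v$ separator: one needs to leverage the location of $X\cap C_0$ on the far side of $S_0$ to show that if $X$ were disjoint from $S_0$, a symmetric-difference construction on $X\cup\{v\}$ and $S_0$ would yield a $u$-$v$ cut of size at most $|S_0|$ strictly closer to $u$ than $S_0$, contradicting the choice of $S_0$. A secondary subtlety is that the $\epsilon$-structure carries extra relations $R_1,\ldots,R_\ell$, but heredity and rigidity together reduce membership in $\cQ$ to an SCC-wise check on the digraph $D$, so the relations never enter the cut manipulations directly.
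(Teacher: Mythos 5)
Your high-level plan — build a directed separator sequence between $u$ and $v$, then for pairs of consecutive separators inspect the SCCs of $D-S$ and dispatch to the four cases — matches the paper's strategy, and your dispatch for Cases 1--3 via counting non-$\cQ$ SCCs and their total weight is correct (and essentially what the paper does via ``goodness'' and ``lightness''). However, there are two genuine gaps that together sink the fall-through case.

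First, the separator sequence you build is too coarse, and it also does not quite work as stated. Contracting $U_i$ into $u$ leaves $N^+(U_i)=S_i$ as a $u$-$v$ cut of the same size, so a subsequent flow augmentation would simply fail; you would have to contract $N^+[U_i]$ (or otherwise delete or forbid $S_i$) to force the cut size up. More importantly, even after fixing this, you obtain only one separator per connectivity level, at most $p$ of them. The paper instead constructs what it calls a \emph{tight $u$-$v$ separator sequence} (Definition~\ref{def:smallest separator sequence}, built in Lemma~\ref{lem:find_tight_sequence}): a nested family $H_1\subset\cdots\subset H_q$ whose boundaries $N^+(H_i)$ are all minimal separators of size $\le p$, with the key \emph{tightness} property that no $u$-$v$ separator of size at most $p$ is contained inside any slab $H_{i+1}\setminus N^+[H_i]$, nor beyond $H_q$. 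Your sequence has no analogue of this, and that property is exactly what drives Case~4.

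Second, your fall-through argument does not close. Even granting that the heavy non-$\cQ$ SCC moves monotonically across flanks (which is not obvious, since the SCC decomposition of $D-(S_i\cup S_{i+1}\cup\{u,v\})$ changes with $i$), the case where $C_i$ sits \emph{inside} the slab is not ruled out: if the slab's unique non-$\cQ$ SCC is heavy, Cases~1--3 all fail and yet nothing you have output establishes Property~4. The paper handles this by setting $S=Z_i\cup Z_{i+1}\cup\{u,v\}$ and observing that for any deletion set $X$ disjoint from $\{u,v\}$, $X'=X\cap(\text{slab})$ is a deletion set of $Q-S$ (by rigidity and heredity, since the two flanks induce structures in $\cQ$), and $X'\neq X$ precisely because tightness forbids a $\le p$-size $u$-$v$ separator contained in the slab. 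Without tightness you cannot conclude $|X'|<|X|$. Finally, the uncrossing claim at the very end is wrong: a $u$-$v$ separator $X$ of size $\le p$ that hits $C_0$ on the far side of $S_0$ need not intersect $S_0$ at all (it can sit entirely beyond $S_0$), and the submodular inequality $|N^+(A\cap B)|+|N^+(A\cup B)|\le|N^+(A)|+|N^+(B)|$ with $A=R(u,S_0)$, $B=R(u,X)$ does not produce a cut contradicting the closest-min-cut choice of $S_0$ once $|X|>|S_0|$.

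In short, your Cases~1--3 are fine, but Case~4 needs the full tight-separator-sequence machinery (the paper's Lemmas~\ref{lem:separator layer} and \ref{lem:find_tight_sequence} and the monotonicity lemma~\ref{lem:monotone}), not just a chain of $p$ closest min-cuts, and the uncrossing step in your fall-through is not a valid substitute.
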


We now return to Theorem \ref{thm:meta} and proceed to prove it assuming this lemma as a black-box. We describe our algorithm for {\sc $\cQ$-deletion($\epsilon$)} using the algorithms $\bA$, $\bB$ and the algorithm of Lemma \ref{lem:crux} as subroutines. The input to the algorithm in Theorem \ref{thm:meta} is an instance $(Q=(D,R_1,\dots, R_\ell),k)$ of {\sc $\cQ$-deletion($\epsilon$)} and the output is {\No} if $Q$ has no deletion set into $\cQ$ of size at most $k$ and otherwise, the output is a set $X$ which is a \emph{minimum size} deletion set into $\cQ$ of $Q$ of size at most $k$. 
\olddaniel{
We will be invoking the algorithm $\bB$
as a subroutine at various points in our algorithm. For ease of reference, we formally state the properties of the compression routine.
\begin{proposition}{\rm \cite{CyganFKLMPPS14,ChenLLOR08}}\label{prop:chen_lemma}
There is an algorithm that takes as input a digraph $D$ on $n$ vertices and $m$ arcs, together with an integer $k$, and a dfvs $W$ of $D$ of size at most $k+1$, runs in time $\Oh(4^kk!k^4(n+m))$, and either correctly concludes that $D$ has no dfvs of size at most $k$ or outputs a {\em minimum} size dfvs $X$ of $D$ of size at most $k$.
\end{proposition}
We will often find ourselves in the situation that we have at our disposition a dfvs $\hat{W}$ of size at most $5k$, and we need to determine whether there is a dfvs of size at most $k$, and if yes, find a minimum size dfvs. We can't directly apply Proposition~\ref{prop:chen_lemma} since $\hat{W}$ may be too large. However, a folklore trick of running the compression step $|\hat{W}|$ times does the job, at the cost of an overhead of $|W|$ in the running time.
\begin{lemma}\label{lem:smart_compression}
There is an algorithm that takes as input a digraph $D$ on $n$ vertices and $m$ arcs, together with an integer $k$, and a dfvs $\hat{W}$ of $D$, runs in time $\Oh(|\hat{W}|4^kk!k^4(n+m))$, and either correctly concludes that $D$ has no dfvs of size at most $k$ or outputs a {\em minimum} size  dfvs 
$X$ of $D$ of size at most $k$.
\end{lemma}

\begin{proof}
On input $D$, $k$, $\hat{W}$, the algorithm sets $t = |\hat{W}|$ and proceeds as follows. If $t \leq k+1$ the algorithm simply invokes the algorithm of Proposition~\ref{prop:chen_lemma} with $W = \hat{W}$. Otherwise, it sets $R = V(D) \setminus \hat{W}$ and orders the vertices of $\hat{W}$ as $w_1, \ldots w_t$. For every $i \leq t$ let $D_i = D[R \cup \{w_1, \ldots, w_i\}]$. The algorithm sets $W_{k+1} = \{w_1, \ldots w_{k+1}\}$. 

For every value of $i$ between $k+1$ and $t$ the algorithm runs the algorithm of Proposition~\ref{prop:chen_lemma} on input $(D_i, W_i, k)$. If this algorithm concludes that $D_i$ has no dfvs of size at most $k$, then neither does $D$. Otherwise we get a dfvs $X_i$ of $D_i$ of size at most $k$. In that case the algorithm sets $W_{i+1} = X_i$ and proceeds to the next value of $i$. Finally the algorithm outputs $X_t$ as the minimum size dfvs of $D_t = D$, unless it already has concluded that $D$ has no dfvs of $D$ of size at most $k$. The correctness and running time bound of this algorithm follow immediately from Proposition~\ref{prop:chen_lemma}. 
\end{proof}
}

\medskip
\noindent
\textbf{Description of the Algorithm of Theorem \ref{thm:meta} and Correctness.}
We now give a formal description of the algorithm. The algorithm is recursive, each call takes as input an $\epsilon$-structure  $Q=(D,R_1,\dots, R_\ell)$ and integer $k$. In the course of describing the algorithm we will also prove by induction on $k+|Q|$ that the algorithm either correctly concludes that $Q$ has no deletion set into $\cQ$ of size at most $k$, or finds a minimum size deletion set of $Q$ into $\cQ$, say $X$ of size at most $k$. The algorithm proceeds as follows.

In time linear in the size of the digraph $D$, the algorithm computes the decomposition of $D$ into strongly connected components. Let $D'$ be the digraph obtained from $D$ by removing from $D$ all strongly connected components which  induce a substructure of $Q$ that is already in $\cQ$. This operation is safe because the class $\cQ$ is rigid and hereditary. That is, if $Q'=(D',R_1',\dots, R_\ell')$ is the substructure of $Q$ induced on $V(D')$ then any deletion set of $Q$ into $\cQ$ is a deletion set of $Q'$ into $\cQ$ and vice versa. So the algorithm proceeds by working on $Q'$ instead. For ease of description, we now revert back to the input $\epsilon$-structure $Q=(D,R_1,\dots, R_\ell)$ and assume without loss of generality that $D$ does not contain any trivial strongly connected components.

If $D$ is the empty graph or more generally, if $Q\in \cQ$, then the algorithm correctly returns the empty set as a minimum size deletion set of $Q$ into $\cQ$. From now on we assume that $D$ is non-empty. Since $D$ does not contain any trivial strongly connected components this implies that $m \geq n \geq 3$ and hence $|Q|\geq 3$. 

If $k=0$ the algorithm correctly returns {\No}, since $Q\notin \cQ$. From now on we assume that $k \geq 1$. 
For $k \geq 1$, we determine from the computed decomposition of $D$ into strongly connected components whether $D$ is strongly connected. If it is not, then let $C$ be the vertex set of an arbitrarily chosen strongly connected component of $D$. The algorithm calls itself recursively on the instances $(Q[C], k - 1)$ and $(Q - C,k - 1)$. If either of the recursive calls return {\No} the algorithm returns {\No} as well since, both $Q[C]$ and $Q - C$ need to contain at least one vertex from any deletion set of $Q$ into $\cQ$. Otherwise the recursive calls return sets $X_1$ and $X_2$ such that $X_1$ is a deletion set of $Q[C]$ into $\cQ$, $X_2$ is a deletion set of $Q - C$ into $\cQ$ and both $X_1$ and $X_2$ have size at most $k-1$ each. The algorithm executes Algorithm $\bA$ on ($Q$, $k$) with $\hat{W} = X_1 \cup X_2$, and correctly returns the same answer as the Algorithm $\bA$. From now on we assume that $D$ is strongly connected.

For $k \geq 1$ and strongly connected graph $D$ the algorithm proceeds as follows. It starts by running the algorithm $\bB$ on $Q$ to compute in time $\bigoh(|Q|)$ a pair of vertices $u,v\in V(D)$ such that \emph{every} deletion set of $Q$ into $\cQ$ which is disjoint from $u$ and $v$ hits all $u$-$v$ paths in $D$. Clearly, $Q,u,v$ satisfy the premise of Lemma \ref{lem:crux}. Hence we execute the subroutine described in Lemma \ref{lem:crux} on $Q,u,v$ with $p = k$. Recall that the execution of this subroutine will have one of two possible outcomes. In the first case, the subroutine returns a set $S\subseteq V(D)$ of size at most $2k+2 \leq 3k$ satisfying one of the properties in the statement of Lemma~\ref{lem:crux}. In the second case, the subroutine concludes that $D$ has no $u$-$v$ separator of size at most $p$. But in this case, we infer that 
$Q$ has no deletion set into $\cQ$ of size at most $k$ \emph{disjoint from} $\{u,v\}$ and hence we define $S$ to be the set $\{u,v\}$. Now, observe that this set $S$ trivially satisfies the last property in the statement of Lemma~\ref{lem:crux}. Hence,  irrespective of the outcome of the subroutine, we will have computed a set $S$ of size at most $3k$ which satisfies one of the four properties in the statement of Lemma \ref{lem:crux}.  

Observe that it is straightforward to check in linear time whether $S$ satisfies any of the first 3 properties. Therefore, if none of these properties are satisfied, then we assume that $S$ satisfies the last property. Furthermore,  we work with the earliest property that $S$ satisfies. That is, if $S$ satisfies Property $i$ and Property $j$ where $1\leq i<j\leq 4$ then we execute the steps corresponding to Case $i$. Subsequent steps of our algorithm will depend on the output of this check on $S$.

\begin{description}
\item[Case 1:]\emph{$Q-S\in \cQ$.} 
In this case, we 
execute Algorithm $\bA$ 
on $Q$, $k$, with $\hat{W} = S$ to either conclude that $Q$ has no deletion set into $\cQ$ of size at most $k$, in which case we return {\No}, or obtain a minimum size set $X$ which has size at most $k$ and is a deletion set of $Q$ into $\cQ$. In this case we return $X$.

\item[Case 2:]\emph{$D-S$ has at least 2 non-trivial strongly connected components each of which induces a substructure of $Q$ not in $\cQ$.} Let $C$ be one such non-trivial strongly connected component of $D-S$. We know that any deletion set of $Q$ into $\cQ$ must contain at least one vertex in $C$ and at least one vertex in $D - (S \cup C)$. Hence any deletion set of $Q$ into $\cQ$ of size at most $k$ must contain at most $k-1$ vertices in $C$ and at most $k-1$ vertices in $D - (S \cup C)$. Thus, the algorithm solves recursively the instances $(Q[C], k - 1)$ and $(Q - (C \cup S), k - 1)$. If either of the the recursive calls return {\No} the algorithm correctly returns {\No} as well. Otherwise the recursive calls return vertex sets $X_1$ and $X_2$ such that $X_1$ is a deletion set of $Q[C]$ into $\cQ$, $X_2$ is a deletion set of  $Q - (C \cup S)$ into $\cQ$, and both $X_1$ and $X_2$ have size at most $k-1$ each. The algorithm then calls the 
Algorithm $\bA$
on $Q$, $k$ with $\hat{W} = X_1 \cup X_2 \cup S$, and correctly returns the same answer as the Algorithm $\bA$.

\item[Case 3:]\emph{The strongly connected components of $D-S$ can be partitioned into 2 sets inducing substructures of $Q$, say $Q_1$ and $Q_2$ such that $Q_1\notin \cQ$, $Q_2\in \cQ$ and  $|Q_1|\leq \frac{1}{2}|Q|$.} 
Observe that since $S$ did not fall into the earlier cases, we may assume that $S$ is \emph{not} a deletion set of $Q$ into $\cQ$ and  $D - S$ has at most 1 non-trivial strongly connected component. Thus $D-S$ has exactly one non-trivial strongly connected component $C$ which induces a structure not in $\cQ$, and this component induces a structure of size at most $\frac{1}{2}|Q|$. We  recursively invoke the algorithm on input $(Q[C],k)$. If the recursive invocation returned {\No}, then it follows that $Q$ does not have a deletion set into $\cQ$ of size at most $k$, so we can return {\No} as well. On the other hand, if the recursive call returned a set $X$ which is a deletion set of $Q[C]$ into $\cQ$ of size at most $k$ then $S \cup X$ is a deletion set of $Q$ into $\cQ$ of size at most $4k$. Now, we execute Algorithm $\bA$ on $Q$, $k$  with $\hat{W} = S \cup X$ and return the same answer as the output of this algorithm

\item[Case 4:]\emph{If $Q$ has a deletion set into $\cQ$ of size at most $k$ then $Q-S$ has a deletion set into $\cQ$ of size at most $k-1$.}
Recall that we arrive at this case only if the other cases do not occur. We recursively invoke the algorithm on the instance $(Q-S,k-1)$. If the recursion concluded that $Q-S$ does not have a deletion set into $\cQ$ of size at most $k-1$, then we correctly return that $Q$ has no deletion set into $\cQ$ of size at most $k$. Otherwise, suppose that the recursive call returns a set $X$ which is a deletion set of $Q - S$ into $\cQ$ of size at most $k-1$. Now, $S \cup X$ is a deletion set of $Q$ into $\cQ$ of size at most $4k$. Hence, we execute Algorithm $\bA$ on $Q,k$ with $\hat{W} = S \cup X$ and return the same answer the output of this algorithm.
\end{description}
	
Whenever the algorithm makes a recursive call, either the parameter $k$ is reduced to $k-1$ or 
the size of the substructure the algorithm is called on is smaller than $Q$. Thus the correctness of the algorithm and the fact that the algorithm terminates follows from induction on $k+|Q|$. 

\smallskip

\noindent
\textbf{Running Time analysis.} 
We now analyse the running time of the above algorithm when run on an instance $(D,k)$ in terms of the parameters $k$, $n$ and $m$. Before proceeding with the analysis, let us fix some notation. In the remainder of this section, we set
\begin{itemize}\setlength\itemsep{-1.5mm}

\item $\alpha$ to be a constant such that Algorithm $\bA$ on input $Q,k,\hat W$ runs in time $\alpha f(k)\cdot  |Q|^\gamma\cdot |\hat{W}|$,
\item $\beta$ be a constant so that computing the decomposition of $D$ into strongly connected components, removing all trivial strongly connected components, running the algorithm of Lemma~\ref{lem:crux}, then determining which of the four cases apply, and then outputting the substructure induced by a strongly connected component of $D-S$ such that this substructure is not in $\cQ$, takes time $\beta \cdot  k^2\cdot |Q|$.
\end{itemize}

Based on $\alpha$ and $\beta$ we pick a constant $\mu$ such that $\mu \geq \max\left\{20 \beta, \frac{2\beta d}{d-1}\right\}$ 
and such that $\mu \geq \max\left\{20 \alpha, \frac{10\alpha d}{d-1}\right\}$.
Let $T(|Q|,k)$ be the maximum running time of the algorithm on an instance 
with size $|Q|$ and parameter $k$. To complete the running time analysis we will prove the following claim. 

\begin{claim}\label{clm:mainRuntime} $T(|Q|,k) \leq \mu \cdot f(k) \cdot k \cdot |Q|^\gamma$.
\end{claim}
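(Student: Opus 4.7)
I would prove Claim~\ref{clm:mainRuntime} by strong induction on $k+|Q|$. The base cases are $Q \in \cQ$, $D$ empty, and $k = 0$, in each of which the algorithm terminates in $\bigoh(|Q|)$ time; for $k \geq 1$ this lies comfortably below $\mu f(k) k |Q|^\gamma$, since $\mu$ has been chosen sufficiently large, $|Q|^\gamma \geq |Q|$ (using $\gamma \geq 1$ and $|Q|\geq 1$), and $f(k)\cdot k \geq 1$. A $k=0$ invocation performs no further recursion, so its $\bigoh(|Q|)$ cost is absorbed into the parent's $\beta |Q|$ or $\beta k^2 |Q|$ overhead and need not be covered by the inductive bound.

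For the inductive step, I would translate each of the five branches of the algorithm into a recurrence for $T(|Q|,k)$. The recurrences are read off directly from the work done in each branch: an SCC decomposition (cost $\bigoh(|Q|)$), possibly a call to Lemma~\ref{lem:crux} (cost $\beta k^2|Q|$), possibly one or two child recursive calls, and a final invocation of Algorithm $\bA$ with $|\hat W| \leq 5k$ contributing $\bigoh(k f(k) |Q|^\gamma)$. The branches where $k$ strictly decreases (``$D$ not strongly connected'', Case~2, and Case~4) recurse with parameter $k-1$ on sub-instances whose sizes sum to at most $|Q|$; Case~1 has no recursive call; and Case~3 recurses on a single sub-instance of size at most $|Q|/2$ with the \emph{same} parameter $k$.

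Plugging the inductive hypothesis into these recurrences, three elementary estimates do the work: $f(k-1) \leq f(k)/d$ (from the premise on $f$), the super-additivity $a^\gamma + b^\gamma \leq (a+b)^\gamma$ valid for $\gamma \geq 1$ and $a,b\geq 0$, and the bound $\beta k^2 |Q| \leq \beta k f(k) |Q|^\gamma$ which uses $k \leq f(k)$ and $|Q| \leq |Q|^\gamma$. For the $k$-decreasing branches, the recursive contribution is at most $\mu f(k)(k-1)|Q|^\gamma / d$, and once the overheads are absorbed the required numerical inequality is $\mu \geq d(\beta + 5\alpha)/(d-1)$, which is precisely why the choices $\mu \geq 2\beta d/(d-1)$ and $\mu \geq 10\alpha d/(d-1)$ are sufficient.

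I expect the main obstacle to be Case~3, the ``shrinking'' branch, where the parameter is unchanged and the induction must be driven purely by a factor-$2$ decrease in $|Q|$. Here the inductive hypothesis yields $T(|Q|/2, k) \leq \mu f(k) k |Q|^\gamma / 2^\gamma \leq \mu f(k) k |Q|^\gamma / 2$, leaving a half-budget of $\mu f(k) k |Q|^\gamma /2$ to cover both the Lemma~\ref{lem:crux} call and Algorithm~$\bA$. This reduces to the numerical inequality $\mu \geq 2\beta + 8\alpha$, which is comfortably implied by $\mu \geq 20\beta$ and $\mu \geq 20\alpha$. Crucially, this step relies on $\gamma \geq 1$: were $\gamma < 1$, the ratio $(|Q|/2)^\gamma / |Q|^\gamma$ would exceed $1/2$, the geometric series would fail to converge, and the strategy would break down entirely.
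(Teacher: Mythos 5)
Your proposal is correct and follows the same strategy as the paper's proof: induction on $k+|Q|$, translating each of the five branches into a recurrence, then applying $f(k-1)\leq f(k)/d$, the superadditivity $a^\gamma+b^\gamma\leq(a+b)^\gamma$, and $\beta k^2|Q|\leq\beta kf(k)|Q|^\gamma$ to verify the same numerical constraints on $\mu$. The one small point where you are actually more careful than the paper is the $k=0$ base case: the paper asserts the bound ``is satisfied by the choice of $\mu$'' there, but since the right-hand side $\mu f(0)\cdot 0\cdot|Q|^\gamma$ vanishes, that cannot hold literally, and your workaround of charging a $k=0$ child's $\bigoh(|Q|)$ cost to the parent (equivalently, proving the claim only for $k\geq 1$) is the right way to patch it, modulo a harmless adjustment of $\beta$.
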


\begin{proof}
We prove the claim by induction on $|Q|+k$. We will regularly make use of the facts that $f(k-1) \leq \frac{f(k)}{d}$ and that $f(k) \geq k$. We consider the execution of the algorithm on an instance ($Q=(D,R_1,\dots,R_\ell),k$). We need to prove that the running time of the algorithm is upper bounded by $\mu \cdot f(k) \cdot k \cdot |Q|^\gamma$. For the base cases if every strongly connected component in $D$ induces a substructure of $Q$ that is already in $\cQ$ or $k=0$, then the statement of the claim is satisfied by the choice of $\mu$. 
We now proceed to prove the inductive step. We will assume throughout the argument that $k \geq 1$ and that $Q\notin \cQ$.

If $D$ is not strongly connected then the algorithm makes two recursive calls; one to $Q_1=(Q[C], k - 1)$ and one to $Q_2=(Q - C, k - 1)$. Observe that $|Q_1|+|Q_2|\leq |Q|$. 
In this case the total time of the algorithm is upper bounded by
\begin{align*}
\beta k^2|Q| & +T(|Q_1|, k-1) + T(|Q_2|, k - 1) +  \alpha f(k) |Q|^\gamma \cdot 2k  \\
& \leq \frac{d-1}{2d}\cdot \mu \cdot f(k) \cdot k \cdot |Q|^\gamma 
+ \mu f(k-1) \cdot (k-1) \cdot |Q|^\gamma + \frac{d-1}{2d}\cdot \mu \cdot f(k) \cdot k \cdot |Q|^\gamma  \\
&  \leq \frac{d-1}{2d}\cdot \mu \cdot f(k) \cdot k \cdot |Q|^\gamma 
+ \mu \frac{f(k)}{d} \cdot k \cdot |Q|^\gamma + \frac{d-1}{2d}\cdot \mu \cdot f(k) \cdot k \cdot |Q|^\gamma \\
& \leq \mu \cdot f(k) \cdot k \cdot |Q|^\gamma \cdot \left(\frac{d-1}{2d} + \frac{d-1}{2d} + \frac{1}{d} \right)  \\
& = \mu \cdot f(k) \cdot k \cdot |Q|^\gamma 
\end{align*}
We will now assume in the rest of the argument that $D$ is strongly connected. For $k \geq 1$ and strongly connected $D$ the algorithm invokes Lemma~\ref{lem:crux}. Following the execution of the algorithm of Lemma \ref{lem:crux}, we execute the steps corresponding to exactly \emph{one} of the 4 cases. We show that in each of the four cases, the algorithm runs within the claimed time bound. Let $S$ be the set output by the algorithm of Lemma~\ref{lem:crux}. We now proceed with the case analysis.
	
\begin{description}
\item[Case 1:] In this case the algorithm terminates after one execution of Algorithm $\bA$ with a set $\hat{W}$ of size at most $3k$. Thus the total running time of the algorithm is upper bounded by 
\begin{align*}
\beta k^2|Q| + \alpha f(k) |Q|^\gamma \cdot 3k & \leq
\frac{1}{20}\mu \cdot f(k) \cdot k \cdot |Q|^\gamma + 
\frac{3}{20} \mu \cdot f(k) \cdot k \cdot |Q|^\gamma \\
& \leq \mu \cdot f(k) \cdot k \cdot |Q|^\gamma
\end{align*}

\item[Case 2:] In this case the algorithm makes two recursive calls, 
one to $(Q[C], k - 1)$ and one to $(Q - C, k - 1)$. After this, the algorithm executes Algorithm $\bA$ with a set $\hat{W}$ of size at most $5k$ and terminates. Let $Q_1 =Q[C]$ and $Q_2=Q-C$.   In this case the total time of the algorithm is upper bounded as follows.
\begin{align*}
\beta k^2|Q| & + T(|Q_1|, k-1) + T(|Q_2|, k-1) + \alpha f(k) |Q|^\gamma \cdot 5k \\ 
& \leq \frac{d-1}{2d}\cdot \mu \cdot f(k) \cdot k \cdot |Q|^\gamma 
+ \mu f(k-1) \cdot (k-1) \cdot |Q|^\gamma + \frac{d-1}{2d}\cdot \mu \cdot f(k) \cdot k \cdot |Q|^\gamma  \\
& = \mu \cdot f(k) \cdot k \cdot |Q|^\gamma 
%
\end{align*}

\item[Case 3:] In this case the algorithm makes a single recursive call on the instance $(Q[C],k)$, where $Q[C]$ has size at most $\frac{1}{2}|Q|$. 
 After the recursive call the algorithm executes Algorithm $\bA$ with a set $\hat{W}$ of size at most $4k$ and terminates. Hence, in this case the total time of the algorithm is upper bounded as follows.
\begin{align*}
\beta k^2(|Q|) &  + T\left(\frac{1}{2}|Q|, k\right) + \alpha f(k) |Q|^\gamma \cdot 4k \\
& \leq \frac{1}{20}\cdot \mu \cdot f(k) \cdot k \cdot |Q|^\gamma 
+ \frac{1}{2} \cdot \mu f(k) \cdot k \cdot |Q|^\gamma 
+ \frac{4}{20} \mu \cdot f(k) \cdot k \cdot |Q|^\gamma \\
& \leq \mu \cdot f(k) \cdot k \cdot |Q|^\gamma
\end{align*}

\item[Case 4:] Here the algorithm makes a single recursive call on $(Q-S, k-1)$. Following the recursive call, there is a single call to Algorithm $\bA$ with a set $\hat{W}$ of size at most $4k$. This yields the following bound on the running time in this case.
\begin{align*}
\beta k^2|Q| &  + T(|Q|, k-1) + \alpha f(k) |Q|^\gamma \cdot 4k \\
& \leq \frac{d-1}{2d}\cdot \mu \cdot f(k) \cdot k \cdot |Q|^\gamma 
+ \mu f(k-1) \cdot (k-1) \cdot |Q|^\gamma + \frac{d-1}{2d}\cdot \mu \cdot f(k) \cdot k \cdot |Q|^\gamma  \\
& \leq \mu \cdot f(k) \cdot k \cdot |Q|^\gamma
\end{align*}
\end{description}

In each of the four cases the running time of the algorithm, and hence $T(|Q|,k)$ is upper bounded by $\mu \cdot f(k) \cdot k \cdot |Q|^\gamma$. This completes the proof of the claim.
\end{proof}

The algorithm and its correctness proof, together with Claim~\ref{clm:mainRuntime} completes the proof of Theorem \ref{thm:meta}.

%
%
%
%
%
%

\section{Applications}\label{sec:app}
In this section, we describe how Theorem \ref{thm:meta} can be invoked to shave off a factor of $n$ from existing iterative compression based algorithms
for {\dfvs}, {\sc Directed  Feedback Arc Set} ({\dfas}), {\sc Directed Subset Feedback Vertex Set} and {\sc Multicut}. Here, {\dfas} is the \emph{arc} deletion version of {\dfvs} where the objective is to delete at most $k$ arcs from the given digraph to make it acyclic.
 
\myparagraph{1. Application to {\dfvs}.} 
 We set $\epsilon=1$ and define $\cal Q$ to be the set of all directed acyclic graphs. Clearly, $\cQ$ is linear-time recognizable, hereditary and rigid.  The algorithm $\bB$ is defined to be an algorithm that, given as input a digraph $D$ which is not acyclic, simply picks an arc $(a,b)$ which is part of a directed cycle in $D$ and returns $u,v$ where $u=b$ and $v=a$. The algorithm $\bA$ can be chosen to be any compression routine for {\dfvs}. In particular, we choose the compression routine of Chen et al.~\cite{ChenLLOR08} which runs in time $\bigoh(f(k)(n+m)\cdot |W|)$ where $f(k)=4^kk!k^4$. Invoking Theorem \ref{thm:meta} for {\sc {$\cQ$}-Deletion ($1$)}, we obtain our linear-time algorithm for {\dfvs}.

 \dfvstheorem*
 
 It is easy to see that {\dfas} can be reduced to {\dfvs} in the following way. For an instance $(D,k)$ of {\dfas},  subdivide each arc, and make $k+1$ copies of the original vertices to obtain a graph $D'$. It is straightforward to see that $(D,k)$ is a {\Yes} instance of {\dfvs} if and only if $(D',k)$ is a {\Yes} instance of {\dfas}. Since $|D'|\leq 2(k+1)|D|$, we also obtain a linear-time FPT algorithm for {\dfas}.
 
 \begin{corollary}
There is an algorithm for \dfas\ running in time $\Oh(k!4^kk^6\cdot (n+m))$. 	
 \end{corollary}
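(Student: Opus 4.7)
The plan is to reduce \dfas\ to \dfvs\ by a standard arc-subdivision construction and then feed the resulting instance to the algorithm of Theorem~\ref{thm: dfvs is dfpt}. The construction blows up the input by a factor of $\Oh(k)$ but leaves the parameter untouched, which costs exactly one extra factor of $k$ in the final running time and yields the claimed bound.

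Concretely, given an input $(D,k)$ of \dfas, I build $D'$ as follows. For every vertex $v \in V(D)$, introduce $k+1$ copies $v_1, \ldots, v_{k+1}$; for every arc $(u,v) \in A(D)$, introduce a fresh subdivision vertex $w_{uv}$ together with the arcs $(u_i, w_{uv})$ and $(w_{uv}, v_j)$ for all $i,j \in [k+1]$. Then $D'$ has at most $(k+1)n + m$ vertices and $2(k+1)m$ arcs, so $|V(D')|+|A(D')| \leq 2(k+1)(n+m)$, matching the size bound asserted in the surrounding paragraph.

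The equivalence $(D,k) \in \dfas \iff (D',k) \in \dfvs$ would be argued in two directions. For the forward direction, if $F$ is a feedback arc set of $D$ of size at most $k$, then $\{w_{uv} : (u,v) \in F\}$ is a dfvs of $D'$ of size at most $k$: every cycle of $D'$ projects via $v_i \mapsto v$, $w_{uv} \mapsto (u,v)$ to a closed walk of $D$, which must traverse some arc of $F$. For the reverse direction, let $S$ be a dfvs of $D'$ with $|S| \leq k$. Since there are $k+1$ copies of each $v \in V(D)$, at least one copy $v_{f(v)}$ survives in $D'-S$ for every $v$; and in fact any copy of any original vertex that lies in $S$ can be swapped out for an adjacent subdivision vertex without increasing $|S|$, so we may assume $S$ consists solely of subdivision vertices. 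Then $\{(u,v) : w_{uv} \in S\}$ is a feedback arc set of $D$ of size at most $|S| \leq k$, because any cycle of $D$ lifts to a cycle of $D'$ routed through the surviving copies and the $w_{uv}$ vertices.

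The main point requiring care is this last step: one must justify that a minimum dfvs of $D'$ may be taken to use only subdivision vertices, i.e.\ that the multiplicity $k+1$ really does force vertex copies out of any $\leq k$-sized solution. This is the place where the multiplicity does the work, and modulo this standard swap argument the rest is bookkeeping. Once the equivalence is in hand, invoking Theorem~\ref{thm: dfvs is dfpt} on $(D',k)$ with $n'+m' \leq 2(k+1)(n+m)$ gives total time $\Oh(k! \, 4^k \, k^5 \, (n'+m')) = \Oh(k! \, 4^k \, k^6 \, (n+m))$, which is exactly the corollary.
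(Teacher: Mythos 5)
Your construction and overall argument match the paper's one-sentence reduction (subdivide each arc, then make $k+1$ copies of each original vertex), and the running-time bookkeeping is right: $|D'| = \Oh(k)\cdot|D|$ plus one call to Theorem~\ref{thm: dfvs is dfpt} gives the extra factor of $k$. The one step to be wary of is the ``swap'' claim in the reverse direction: it is not true in general that a copy vertex $v_i \in S$ can be exchanged for a single adjacent subdivision vertex while preserving the dfvs property. If $v$ has several in-arcs and out-arcs that all lie on cycles, deleting $v_i$ from $S$ and adding only one $w_{uv}$ can leave cycles through $v_i$ uncovered, so the normalized $S$ need not be a dfvs. Fortunately the swap is also unnecessary: your lifting argument already works for an arbitrary dfvs $S$ with $|S|\le k$. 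For each original vertex $v$ fix a surviving copy $v_{f(v)}\notin S$ (which exists because there are $k+1$ copies), set $F=\{(u,v): w_{uv}\in S\}$, and observe that $|F|\le|S|\le k$ since $w_{uv}\mapsto(u,v)$ is injective; if some cycle $C$ of $D$ avoided $F$, then the lift of $C$ through the vertices $v_{f(v)}$ and the subdivision vertices $w_e$ for $e\in C$ would be a cycle of $D'-S$, a contradiction. So you should drop the swap step (or replace it by this direct argument), and the rest of the proof stands as essentially the same proof the paper gives.
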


 \myparagraph{2. Application to {\sc Multicut}.}
 In the {\sc Multicut} problem, the input is an undirected graph $G$, integer $k$ and pairs of vertices $(s_1,t_1),\dots, (s_r,t_r)$ and the objective is to check whether there is a set $X$ of at most $k$ vertices such that for every $i\in [r]$, $s_i$ and $t_i$ are in different connected components of $G-X$. The parameterized complexity of this problem was open for a long time until Marx and Razgon \cite{MarxR14} and Bousquet, Daligault and Thomasse \cite{BousquetDT11} showed it to be {\FPT}. Marx and Razgon obtained their {\FPT} algorithm via the iterative compression technique. They 
 gave an algorithm for the compression version of {\sc Multicut} that, on input $D,(s_1,t_1),\dots, (s_r,t_r),k$ and $\hat W$,  runs in time 
  $2^{\bigoh(k^3)}\cdot n^\gamma\cdot |\hat W|$ for some $\gamma$. As a result, they were able to obtain an algorithm for {\sc Multicut} that runs in time $2^{\bigoh(k^3)}\cdot n^{\gamma+1}$. We show by an application of Theorem \ref{thm:meta} that we can improve  this running time by a factor of $n$.

 We define $\cQ$ to be the set of all pairs $(D,S)$ with $D$ being a digraph where $(u,v)$ is an arc if and only if $(v,u)$ is an arc ($D$ is essentially an undirected graph with all edges replaced with arcs in both directions),  $S\subseteq V(D)^2$ and for every $(u,v)\in S$, the vertices $u$ and $v$ are in different strongly-connected components of $D$. Clearly, $\cQ$ is  linear-time recognizable, hereditary and rigid.
 We define $\bA$ to be the compression routine of Marx and Razgon \cite{MarxR14} and $\bB$ to be an algorithm that computes the strongly connected components of $D$ and simply returns a pair $(u,v)\in S$ such that $u$ and $v$ are in the same strongly connected component of $D$. By invoking Theorem \ref{thm:meta} for {\sc $\cQ$-Deletion(2)} with these parameters, we obtain the following corollary.

 \begin{corollary}\label{cor:multicut}
 There is an algorithm for {\sc Multicut} running in time $2^{\bigoh(k^3)}n^\gamma$. 	
 \end{corollary}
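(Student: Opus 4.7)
The plan is to verify that the setup sketched in the excerpt satisfies every hypothesis of Theorem~\ref{thm:meta} and then simply invoke it. First I would make the translation precise: given a Multicut instance $(G,(s_1,t_1),\ldots,(s_r,t_r),k)$, after a harmless preprocessing that discards pairs with $s_i=t_i$, let $D$ be the symmetric digraph obtained by replacing every edge of $G$ with a pair of opposite arcs, set $S=\{(s_i,t_i) : 1 \le i \le r\}$, and take $Q=(D,S)$. Because $D$ is symmetric, strongly connected components of $D$ coincide with connected components of $G$, so a vertex set $X\subseteq V(G)$ is a Multicut of the original instance if and only if $Q-X\in\cQ$; here the induced-substructure definition automatically discards any pair in $S$ with an endpoint in $X$, which is consistent with Multicut semantics.

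Next I would verify the three structural conditions on $\cQ$. Linear-time recognizability follows from running Tarjan's algorithm on $D$ and checking the endpoints of each pair in $S$ against SCC labels. Hereditarity is immediate: SCCs of $D[X]$ refine SCCs of $D$, so endpoints of a surviving pair that were in distinct SCCs of $D$ remain so in $D[X]$. Rigidity has two parts; the arc-free case holds because every vertex of $D$ is then its own trivial SCC, so (using $s_i\neq t_i$) every surviving pair of $S$ has endpoints in distinct SCCs, and the SCC-decomposition property holds because any pair of $S$ whose endpoints share an SCC $C$ of $D$ appears in $S|_C$, whereas pairs with endpoints in different SCCs of $D$ are automatically separated.

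Algorithm $\bB$ is straightforward: on input $Q\notin\cQ$, compute the SCCs of $D$ and output a witnessing pair $(u,v)\in S$ whose endpoints share an SCC. Any deletion set $X$ of $Q$ into $\cQ$ that is disjoint from $\{u,v\}$ must leave $u,v$ in distinct SCCs of $D-X$, and since $D-X$ is still symmetric this is equivalent to saying that no $u$--$v$ path of $D$ survives in $D-X$, so $X$ is a $u$--$v$ separator in $D$. For algorithm $\bA$ I would take the Marx--Razgon compression routine \cite{MarxR14}, which for $|\hat W|=k+1$ runs in time $2^{\bigoh(k^3)}\cdot n^\gamma$; the folklore bootstrapping trick described immediately after the statement of Theorem~\ref{thm:meta} upgrades this to $2^{\bigoh(k^3)}\cdot n^\gamma\cdot |\hat W|$ for an arbitrary deletion set $\hat W$. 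With $f(k)=2^{\bigoh(k^3)}$, one has $f(k)\ge k$ and, since $f(k)/f(k-1)=2^{\bigoh(k^2)}$ grows without bound, a fixed $d>1$ with $f(k-1)\le f(k)/d$ for every $k$ can be chosen.

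Plugging $\bA$, $\bB$, and the constructed $Q$ into Theorem~\ref{thm:meta} then yields a running time of $\bigoh(f(k)\cdot k\cdot |Q|^\gamma) = 2^{\bigoh(k^3)}\cdot n^\gamma$, which is the statement of Corollary~\ref{cor:multicut}. The main technical subtlety that I expect to have to dwell on is the reformulation step: verifying rigidity and the $\bB$-correctness claim cleanly requires using the symmetry of $D$ (so that SCC-separation matches undirected separation) and handling the trivial $s_i=t_i$ case by preprocessing, while invoking the folklore trick correctly is what lets the Marx--Razgon compression fit the $|\hat W|$-linear form demanded by Theorem~\ref{thm:meta}.
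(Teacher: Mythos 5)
Your proposal follows the paper's proof exactly: the same choice of $\cQ$ (symmetric digraphs plus a binary relation of pairs required to lie in distinct strongly connected components), the same $\bB$ (return a violating pair from an SCC decomposition), the same $\bA$ (the Marx--Razgon compression routine upgraded via the folklore trick to the $|\hat W|$-linear form), and the same invocation of Theorem~\ref{thm:meta}. You spell out the hereditary/rigidity/linear-time-recognizability verifications and the $\bB$-correctness argument (which genuinely uses symmetry of $D$ so that SCC-separation agrees with undirected separation) in more detail than the paper does, and your observation that the $s_i=t_i$ case needs to be handled at the outset is a legitimate, if minor, point that the paper's ``clearly rigid'' claim glosses over.
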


 We remark that  since the objective of Marx and Razgon in their paper was to show the fixed-parameter tractability of {\sc Multicut}, they did not try to optimize $\gamma$.
   However, going through the algorithm of Marx and Razgon and making careful (but standard) modifications of the derandomization step in their algorithm using Theorem 5.16   \cite{CyganFKLMPPS14} (see also  \cite{AlonYZ95}) as well as the more recent linear time {\FPT} algorithms for the {\sc Almost 2-SAT} problem \cite{RamanujanS14,IwataOY14} instead of the algorithm in \cite{LokshtanovNRRS14}, it is possible to bound the running time of their compression routine and consequently the running time given in Corollary \ref{cor:multicut} by   $2^{O(k^3)}mn \log n$.

 \myparagraph{3. Application to {\sc Directed Subset Feedback Vertex Set}.}
 In the {\sc Directed Subset Feedback Vertex Set} (DSFVS) problem, the input is a digraph $D$, a set $S$ of vertices in $D$ and the objective is to check whether $D$ contains a vertex set $X$ of size at most $k$ such that $D-X$
  has no cycles passing through $S$, also called $S$-cycles. 
  This problem is a clear generalization of {\dfvs} and was shown to be {\FPT} by Chitnis et al. \cite{Chitnis:2012DSFVS} via the iterative compression technique. 
  
  They also observed that this problem is equivalent to the {\sc Arc Directed Subset Feedback Vertex Set} (ADSFVS) where the input is a digraph $D$ and a set $S$ of \emph{arcs} in $D$ and the objective is to check whether $D$ contains a vertex set $X$ of size at most $k$ such that $D-X$
  has no cycles passing through $S$.
   Chitnis et al. gave an algorithm for the compression version of ADSFVS that, on input $D,S,k$ and $\hat W$,  runs in time 
  $2^{\bigoh(k^3)}\cdot n^\gamma\cdot |\hat W|$ for some $\gamma$. As a result, they were able to obtain an algorithm for ADSFVS that runs in time $2^{\bigoh(k^3)}\cdot n^{\gamma+1}$. We show by an application of Theorem \ref{thm:meta} that we can directly shave off a factor of $n$ from this running time.

  We first argue that ADSFVS is a special case of {\sc {\cal Q}-Deletion ($2$)}.
  We define by $\cQ$ the set of all pairs $(D,S)$ where $S\subseteq A(D)$ and $D$ has no cycle passing through an arc in $S$. Clearly, $\cQ$ is linear-time recognizable, hereditary and rigid.
  We define $\bA$ to be the compression routine of Chitnis et al. \cite{Chitnis:2012DSFVS} and $\bB$ to be an algorithm that, given as input the pair $(D,S)$, computes the strongly connected components of $D$ and simply returns an arc in $S$ which is contained in a strongly connected component of $D$. 
  By invoking Theorem \ref{thm:meta} for {\sc {$\cQ$}-Deletion ($2$)} with these parameters, we  obtain the following corollary.
  
 \begin{corollary}
 	There is an algorithm for {\sc Arc Directed Subset Feedback Vertex Set} running in time $2^{\bigoh(k^3)}\cdot n^\gamma$.
 \end{corollary}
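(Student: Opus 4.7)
The plan is to verify the three ingredients that Theorem~\ref{thm:meta} requires -- a class $\cQ$ that is linear-time recognizable, hereditary, and rigid; a compression algorithm $\bA$; and a pair-selection algorithm $\bB$ -- for the choice described just above the corollary, and then read off the claimed running time.

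First I would confirm that the class $\cQ$ of $2$-structures $(D, S)$ with $S \subseteq A(D)$ and no cycle of $D$ passing through an arc of $S$ satisfies the three structural conditions. Recognition reduces to a single strongly connected components computation on $D$ followed by a scan of $S$ to check that no arc of $S$ has both endpoints in a common SCC, and so runs in time $O(n + m + |S|) = O(|Q|)$. Heredity is automatic because taking an induced substructure can only delete arcs of $D$ and only shrink $S$, and therefore cannot introduce a new $S$-cycle. Rigidity follows from the facts that an arc-free digraph has no cycles (hence is in $\cQ$ for any $S$) and that every cycle of $D$ lies inside a single SCC, so $(D, S) \in \cQ$ iff $(D[C], S|_C) \in \cQ$ for every SCC $C$ of $D$.

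Next I would spell out $\bB$: given $(D, S) \notin \cQ$, compute the SCCs of $D$ in linear time, find some arc $(u, v) \in S$ whose endpoints share an SCC (one exists, since otherwise $(D, S) \in \cQ$), and return the reversed ordered pair $(v, u)$. To see that this meets the specification demanded by Theorem~\ref{thm:meta}, note that any deletion set $X$ of $(D, S)$ into $\cQ$ avoiding both $u$ and $v$ leaves the arc $(u, v)$ present in $D - X$ and in the restricted $S$; since $(D - X, S|_{V(D) \setminus X}) \in \cQ$, the arc $(u, v)$ cannot lie on a cycle of $D - X$, i.e., no $v$-to-$u$ path survives, which is precisely the statement that $X$ is a $v$-$u$ separator in $D$.

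Finally, for $\bA$ I would take the compression routine of Chitnis et al.~\cite{Chitnis:2012DSFVS} for {\sc Arc DSFVS}, which as recalled in the paragraph above the corollary runs in time $2^{O(k^3)} \cdot n^\gamma \cdot |\hat W|$ (the linear dependence on $|\hat W|$ is either direct or obtained via the folklore trick described in the remark after Theorem~\ref{thm:meta}). Setting $f(k) := 2^{c k^3}$ for a sufficiently large constant $c$, we have $f(k) \geq k$ and $f(k - 1)/f(k) \leq 2^{-c} \leq 1/2$ for every $k \geq 1$, so the monotonicity hypothesis is satisfied with $d = 2$. Invoking Theorem~\ref{thm:meta} for {\sc $\cQ$-Deletion}$(2)$ then yields an algorithm of running time $O(f(k) \cdot k \cdot |Q|^\gamma) = 2^{O(k^3)} \cdot n^\gamma$, since $|Q| = O(n + m)$ is polynomial in $n$ and the extra $k$ factor is absorbed into $2^{O(k^3)}$. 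I do not foresee any genuine obstacle -- the whole argument is a verification of the hypotheses of Theorem~\ref{thm:meta} -- and the only mild care required is to orient the pair returned by $\bB$ so that the ``deletion set implies $u$-$v$ separator'' clause reads in the direction demanded by the theorem.
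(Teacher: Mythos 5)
Your proposal is correct and follows exactly the route the paper takes: you identify the same class $\cQ$ of structures $(D,S)$ with no $S$-cycles, verify the same three structural properties, use the Chitnis et al.\ compression routine as $\bA$, and instantiate $\bB$ by returning a reversed $S$-arc lying inside an SCC. The one detail you make explicit that the paper glosses over for this application (though it spells it out in the analogous \dfvs\ setting) is the correct orientation of the returned pair -- returning $(v,u)$ for the found arc $(u,v)\in S$ so that the ``deletion set implies separator'' clause reads in the direction required by Theorem~\ref{thm:meta} -- and this is exactly right.
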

 
 Due to the aforementioned observation of Chitnis et al., we also get an algorithm with the same running time for {\sc Directed Subset Feedback Vertex Set}.

\section{Proving Lemma \ref{lem:crux}}\label{sec:crux}
In this section we will prove our main technical lemma, Lemma \ref{lem:crux}. For the sake of convenience, we restate it here.

\cruxlemma*


Before we proceed to the proof of the lemma, we need to set up some notation and recall known results on separators in digraphs. For the rest of this section, we fix $\epsilon\in {\mathbb N}$ and a linear-time recognizable hereditary and rigid family of $\epsilon$-structures, $\cQ$ and deal with this family. Furthermore, we will assume that all $\epsilon$-structures we deal with are of the same \emph{type} as $\cQ$.

\subsection{Setting up separator definitions}

\begin{definition}
	Let $D$ be a digraph and $X$ and $Y$ be disjoint vertex sets. A vertex set $S$ disjoint from $X\cup Y$ is called an $X$-$Y$ \textbf{separator} if there is no $X$-$Y$ path in $D-S$. We denote by $R(X,S)$ the set of vertices of $D-S$ reachable from vertices of $X$ via directed paths and by $NR(X,S)$ the set of vertices of $D-S$ \emph{not} reachable from vertices of $X$. We denote by $\lambda_D(X,Y)$ the size of a smallest $X$-$Y$ separator in $D$ with the subscript ignored if the digraph is clear from the context.
\end{definition}

We remark that it is not necessary that $Y$ and $N^+[X]$ be disjoint in the above definition. If these sets do intersect, then there is no $X$-$Y$ separator in the digraph and we define $\lambda(X,Y)$ to be $\infty$.

\begin{definition}
Let $D$ be a digraph and $X$ and $Y$ be disjoint vertex sets. Let $S_1$ and $S_2$ be $X$-$Y$ separators. We say that $S_2$ \textbf{covers} $S_1$  if $R(X,S_2)\supseteq R(X,S_1)$.
\end{definition}

Note that for a set $S\subseteq V(D)$ which is an $X$-$Y$ separator in $D$ for some $X,Y\subseteq V(D)$ the sets $R(X,S)$, $NR(X,S)$ and $S$ form a partition of the vertex set of $D$.

\begin{definition}
Let $Q$ be an $\epsilon$-structure and let $D$ be the digraph in $Q$ where $D$ is  strongly connected and let $u,v\in V(D)$. Let $S\subseteq V(D)$ be a $u$-$v$ separator in $D$. Then, we say that $S$ is  

\begin{itemize}
	\item an \textbf{l-good} $u$-$v$ separator if the induced substructure $Q[R(u,S)]\in \cQ$ and the induced substructure $Q[NR(u,S)]\notin \cQ$.
	\item an \textbf{r-good} $u$-$v$ separator if the induced substructure $Q[R(u,S)]\notin \cQ$ and the induced substructure $Q[NR(u,S)]\in \cQ$.
		\item a \textbf{dual-good} $u$-$v$ separator if the induced substructure $Q[R(u,S)]\in \cQ$ and the induced substructure $Q[NR(u,S)]\in \cQ$.
	\item a \textbf{completely-good} $u$-$v$ separator if the induced substructure $Q[R(u,S)]\in \cQ$ and the induced substructure $Q[NR(u,S)]\in \cQ$.
	\item an \textbf{l-light} $u$-$v$ separator if $|Q[R(u,S)]|\leq \frac{1}{2}|Q|$. 
	\item an \textbf{r-light} $u$-$v$ separator if $|Q[NR(u,S)]|\leq \frac{1}{2}|Q|$. 
\end{itemize}

\end{definition}

\begin{figure}[t]
\begin{center}
  \includegraphics[height=220pt, width=250pt]{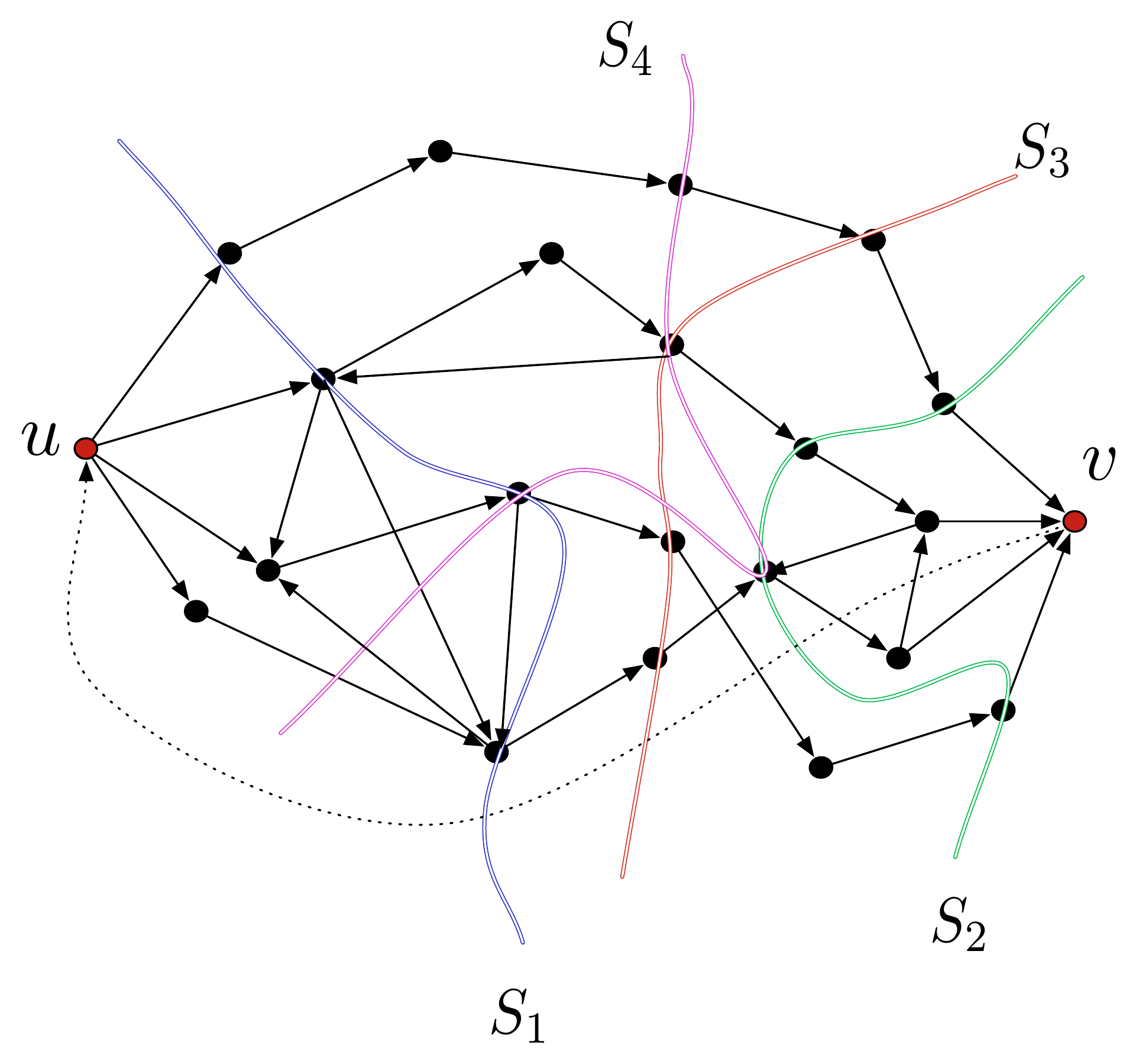}
  \caption{An illustration of the various $u$-$v$ separator types for the case of {\dfvs} (see Section \ref{sec:app} for a formal reduction of {\dfvs} to {\sc $\cQ$-Deletion ($r$)}). Here, $S_1$ is $l$-good, $S_2$ is $r$-good, $S_3$ is dual-good and $S_4$ is completely-good.}
  \label{fig:separatortypes}
  \end{center}
\end{figure}

Observe that the first 4 types in the above definition partition the set of $u$-$v$ separators. 
On the other hand, while any $u$-$v$ separator must be either $l$-light or $r$-light, it is possible that the same $u$-$v$ separator is both $l$-light and $r$-light. That is, the last 2 types cover but not necessarily partition the set of $u$-$v$ separators.
See Figure \ref{fig:separatortypes} for an illustration of separators of various types in the special case of $\cQ$ denoting the set of acyclic digraphs. We will prove Lemma \ref{lem:crux} by examining the interactions between separators of different types.

%

The next lemma  shows that a pair of separators in $D$ with one covering the other have a certain monotonic dependency between them regarding their ($l$/$r$)-goodness and ($l$/$r$)-lightness.

\begin{lemma}[{\bf Monotonicity Lemma}]\label{lem:monotone} Let $Q$ be an $\epsilon$-structure and let $D$ be the digraph in $Q$ where $D$ is strongly connected. Let  $u,v\in V(D)$ and let $S_1$ and $S_2$ be a pair of $u$-$v$ separators in $D$ such that $S_2$ covers $S_1$. Furthermore, suppose that neither $S_1$ nor $S_2$ is dual-good or completely-good. Then the following statements hold.
	
	\begin{itemize}
		\item If $S_1$ is $r$-good then $S_2$ is also $r$-good.
		\item If $S_2$ is $l$-good then $S_1$ is also $l$-good.
		\item If $S_1$ is r-light then $S_2$ is also r-light.
		\item If $S_2$ is l-light then $S_1$ is also l-light.
	\end{itemize}

\end{lemma}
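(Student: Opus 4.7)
The plan is to unpack the hypothesis ``$S_2$ covers $S_1$'' into a purely set-theoretic inclusion, and then dispose of each of the four conclusions using either the hereditary property of $\cQ$ (for the two goodness statements) or monotonicity of the size function $|Q[\cdot]|$ under vertex-set inclusion (for the two light statements). Write $R_i := R(u,S_i)$ and $N_i := NR(u,S_i)$, so that $V(D)=R_i\sqcup S_i\sqcup N_i$. The assumption $R_1\subseteq R_2$ then yields the disjoint-union containment $N_2\cup S_2 \subseteq N_1\cup S_1$, and in particular $S_2\cap R_1=\emptyset$ and $N_2\subseteq N_1\cup S_1$. These two consequences are the set-theoretic workhorses that drive the rest of the argument.

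For the two goodness statements I invoke heredity of $\cQ$. Applied to $R_1\subseteq R_2$, heredity immediately gives the implication $Q[R_2]\in\cQ \Rightarrow Q[R_1]\in\cQ$. To prove that $S_1$ being $r$-good implies $S_2$ is $r$-good, I use that $r$-goodness of $S_1$ gives $Q[R_1]\notin\cQ$, whence by the contrapositive $Q[R_2]\notin\cQ$. The hypothesis that $S_2$ is neither dual-good nor completely-good leaves only $l$-good and $r$-good as the possible types of $S_2$; since $Q[R_2]\notin\cQ$ rules out $l$-good, $S_2$ must be $r$-good. The $l$-good statement is handled dually: $S_2$ being $l$-good gives $Q[R_2]\in\cQ$, heredity delivers $Q[R_1]\in\cQ$, and the same exclusion argument forces $S_1$ to be $l$-good.

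For the $l$-light statement, monotonicity of $|Q[\cdot]|$ under vertex-set inclusion is enough: $R_1\subseteq R_2$ yields $|Q[R_1]|\le |Q[R_2]|\le \tfrac{1}{2}|Q|$. The $r$-light statement is the most delicate of the four, and is where I expect the main obstacle, since in general $N_2\not\subseteq N_1$ (it may acquire vertices from $S_1\setminus S_2$), so pure size-monotonicity on $N$ is unavailable. My plan here is to combine the inclusion $N_2\subseteq N_1\cup S_1$, which gives $|Q[N_2]|\le |Q[N_1\cup S_1]|$, with the complementary decomposition $|Q[R_1]|+|Q[N_1\cup S_1]|\le |Q|$ coming from the disjoint partition $V(D)=R_1\sqcup(N_1\cup S_1)$, and with $|Q[R_2]|\ge |Q[R_1]|$. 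A careful accounting of the arcs and relation tuples that cross the $R_1$ side of the partition, together with the assumed $r$-lightness $|Q[N_1]|\le \tfrac{1}{2}|Q|$, should close the gap and yield the required bound $|Q[N_2]|\le \tfrac{1}{2}|Q|$.
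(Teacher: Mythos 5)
Your treatment of the two goodness statements and of the $l$-light statement is correct and matches the paper's approach: heredity of $\cQ$ applied to $R_1\subseteq R_2$ for the former (combined with the fact that the four goodness types partition the $u$-$v$ separators once the dual/completely-good cases are excluded), and monotonicity of $|Q[\cdot]|$ under vertex-set inclusion for the latter. Your instinct about the $r$-light statement is also exactly right: $N_2\not\subseteq N_1$ in general, so the paper's remark that ``the remaining statements are all analogous'' glosses over a genuine gap. Unfortunately, the accounting you sketch cannot close it. Your chain $|Q[N_2]|\le|Q[N_1\cup S_1]|\le|Q|-|Q[R_1]|$ gives the desired bound only when $|Q[R_1]|\ge\tfrac12|Q|$, that is, only when $S_1$ is \emph{not} $l$-light --- and nothing in the hypotheses prevents $S_1$ from being $l$-light and $r$-light simultaneously. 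In fact the third bullet is false as written. Take $\cQ$ to be the acyclic digraphs and let $D$ have vertices $\{u,u',a,v,k_1,k_2,k_3,k_4\}$ and arcs $u\to u'$, $u'\to u$, $u'\to a$, $a\to v$, $v\to u$, $a\to k_1$, $k_4\to a$, together with a transitive tournament on $\{k_1,\dots,k_4\}$; then $n=8$, $m=13$, $|Q|=21$. With $S_1=\{a,k_1,\dots,k_4\}$ and $S_2=\{a\}$ one has $R_1=R_2=\{u,u'\}$ (so $S_2$ covers $S_1$), $N_1=\{v\}$, $N_2=\{v,k_1,\dots,k_4\}$; both separators are $r$-good, yet $|Q[N_1]|=1\le 10.5$ while $|Q[N_2]|=5+6=11>10.5$.

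What the statement is missing is minimality of $S_1$. If $S_1$ is a minimal $u$-$v$ separator, every $w\in S_1$ has an in-neighbour $x\in R_1\subseteq R_2$; if moreover $w\notin S_2$, then concatenating a $u$-to-$x$ path in $D-S_2$ (which exists since $x\in R_2$) with the arc $x\to w$ shows $w\in R_2$, not $N_2$. Hence $S_1\cap N_2=\emptyset$, so $N_2\subseteq N_1$ and size-monotonicity finishes. In the paper's only invocation of the Monotonicity Lemma (inside the proof of Lemma~\ref{lem:crux}) the separators are of the form $N^+(H_i)$ from a tight separator sequence and are minimal by construction, and in fact only the fourth bullet is used there. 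So you should either add minimality of $S_1$ as a hypothesis and argue as above, or simply prove bullets one, two, and four, which is all the paper actually needs.
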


\begin{proof}

We begin by proving the first statement of the lemma. Suppose to the contrary that $S_1$ is $r$-good and $S_2$ is $l$-good.
By definition, the substructure $Q_1=Q[R(u,S_1)]$ is not in $\cQ$ and $Q_2=Q[R(u,S_2)]$ is in $\cQ$.	
However, since $S_2$ covers $S_1$, we know that $R(u,S_2)\supseteq R(u,S_1)$. This implies that $Q_1$ is a substructure of $Q_2$. Since $\cQ$ is hereditary, we know that if $Q_1$ is not in $\cQ$, then neither is $Q_2$, a contradiction. This completes the proof of the first statement. The proofs of the remaining statements are all analogous.
\end{proof}

	We now prove the following lemma which  provides a linear time-testable sufficient condition for a separator to reduce the size of the solution upon deletion.

\begin{lemma}\label{lem:split_solution}
	Let $Q$ be an $\epsilon$-structure and let $D$ be the digraph in $Q$ where $D$ is strongly connected. Let $u,v\in V(D)$, $k\in {\mathbb N}$ and suppose that every deletion set of $Q$ into $\cQ$ hits all $u$-$v$ paths in $D$. Let $Z$ be an $r$-good (l-good) $u$-$v$ separator of size at most $k$ such that there is no $u$-$v$ separator of size at most $k$ contained entirely in the set $R(u,Z)$
	  (respectively $NR(u,Z)$). If $Q$ has a deletion set into $\cQ$ of size at most $k$ disjoint from $\{u,v\}$ then $Q-Z$ has a deletion set into $\cQ$ of size at most $k-1$.
	
\end{lemma}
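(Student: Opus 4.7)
The plan is to take a hypothetical deletion set $X$ of size at most $k$ (disjoint from $\{u,v\}$) and convert it into a deletion set of $Q - Z$ by restricting it to the ``near side'' of $Z$. I will handle only the case where $Z$ is $r$-good; the $l$-good case will be entirely symmetric, swapping the roles of $R(u,Z)$ and $NR(u,Z)$.

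By the premise that every deletion set of $Q$ into $\cQ$ hits all $u$-$v$ paths, the set $X$ is itself a $u$-$v$ separator in $D$. Define $Y := X \cap R(u,Z)$. The first step is to verify that $Y$ is a deletion set of $Q - Z$ into $\cQ$. Because $Z$ is a $u$-$v$ separator, the digraph $D - Z$ contains no arcs from $R(u,Z)$ to $NR(u,Z)$, so every strongly connected component of $D - Z - Y$ lies wholly inside $R(u,Z)\setminus Y$ or wholly inside $NR(u,Z)$. Invoking rigidity of $\cQ$, membership of $Q - Z - Y$ in $\cQ$ reduces to two conditions: $Q[NR(u,Z)] \in \cQ$, which is immediate from $Z$ being $r$-good, and $Q[R(u,Z)\setminus Y] \in \cQ$. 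For the latter, I will observe that $R(u,Z)\setminus Y = R(u,Z) \setminus X$, so $Q[R(u,Z)\setminus Y]$ is an induced substructure of $Q - X \in \cQ$, and heredity of $\cQ$ finishes the argument.

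The second step is to show $|Y| \leq k-1$. Clearly $|Y| \leq |X| \leq k$. Suppose for contradiction that $|Y| = k$; then $Y = X$, which means $X \subseteq R(u,Z)$. But $X$ is a $u$-$v$ separator of size at most $k$ contained entirely in $R(u,Z)$, directly contradicting the hypothesis of the lemma. Hence $|Y| \leq k - 1$.

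For the $l$-good case, I will simply set $Y := X \cap NR(u,Z)$ and run the argument in reverse: rigidity again splits $Q - Z - Y$ into the two sides, this time $Q[R(u,Z)] \in \cQ$ comes from $l$-goodness while $Q[NR(u,Z)\setminus X] \in \cQ$ comes from heredity applied to $Q - X$; the size bound uses the ``no small separator inside $NR(u,Z)$'' hypothesis in the same way. I do not anticipate a significant obstacle: the proof is essentially a rigidity/heredity bookkeeping argument, and the only place where real content enters is the clean observation that a $u$-$v$ separator $Z$ prevents directed arcs from $R(u,Z)$ to $NR(u,Z)$ in $D - Z$, which is exactly what permits rigidity to decouple the two sides.
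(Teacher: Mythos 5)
Your proof is correct and follows essentially the same route as the paper: restrict a hypothetical deletion set $X$ to the relevant side of $Z$, use the fact that $Z$ severs arcs from $R(u,Z)$ to $NR(u,Z)$ together with rigidity and heredity to verify the restriction is a valid deletion set of $Q-Z$, and derive the size drop from the no-small-separator hypothesis. The only cosmetic difference is that the paper concludes $X' \subsetneq X$ directly (giving $|X'|\le|X|-1$) rather than arguing $|Y|\ne k$, but the two framings are equivalent.
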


\begin{proof}
Let $X$ be a deletion set of $Q$ into $\cQ$. 
Consider the case when $Z$ is an $r$-good separator. The argument for the other case is analogous.
Since $Z$ is $r$-good, we know that the substructure $Q[NR(u,Z)]$ is in $\cQ$. Therefore, any strongly connected component in the digraph $D-Z$ which induces a substructure \emph{not} in $\cQ$ lies in the set $R(u,Z)$. Also, the set $X'=X\cap R(u,Z)$ is by definition a deletion set for the substructure $Q[R(u,Z)]$. Since every strongly connected component of $D-Z$ which does not induce a substructure in $\cQ$ lies in the digraph $D[R(u,Z)]$, it follows that $X'$ is in fact a deletion set into $\cQ$ for $Q-Z$. We now claim that $X'\subset X$.

Suppose to the contrary that $X'=X$. 
By the premise of the lemma, we have that $X$ is a $u$-$v$ separator of size at most $k$.
Since $X'=X$, we conclude that $X$ is a $u$-$v$ separator of size at most $k$ which is contained in the set $R(u,Z)$, a contradiction to the premise of the lemma, implying that $X'\subset X$. This completes the proof of the lemma.
\end{proof}

Having set up the definitions and certain properties of the separators we are interested in, we now define the notion of separator sequences and describe our linear time subroutines that perform certain computations that will be critical for the linear time implementation of our main algorithm.

\subsection{Finding useful separators}

We begin with a lemma which gives a polynomial time procedure to compute, for every pair of vertices $s$ and $t$ in a digraph, a sequence of vertex sets each containing $s$ and excluding $t$ such that every minimum $s$-$t$ separator is contained in the union of the out-neighborhoods of these sets. Moreover, for each set, the out-neighborhood is in fact a minimum $s$-$t$ separator. The statement of this lemma is almost identical to the statements of Lemma 2.4 in \cite{MarxOR13} and Lemma 3.2 in \cite{RamanujanS14}. However, the statement of Lemma 2.4 in \cite{MarxOR13} deals with undirected graphs while that of Lemma 3.2 in \cite{RamanujanS14} deals with arc-separators instead of vertex separators. Furthermore, the second property in the statement of the following lemma is not part of the latter, although a closer inspection of the proof shows that this property is indeed guaranteed. Note that this proof closely follows that in \cite{MarxOR13}. We give a full proof here for the sake of completeness.

\begin{lemma}
\label{lem:separator layer}
 Let $s,t$ be two vertices in a digraph $D$ such that the minimum size of an $s$-$t$ separator is $\ell>0$. Then, there is an ordered collection ${\cal X}=\{X_1,\dots,X_q\}$ of vertex sets where $\{s\}\subseteq X_i\subseteq V(D)\setminus (\{t\}\cup N^-(t))$ such that
 
 \begin{enumerate}
\setlength{\itemsep}{-2pt}
  \item $X_1\subset X_2\subset \cdots \subset X_q$,
  \item $X_i$ is reachable from $s$ in $D[X_i]$ and every vertex in $N^+(X_i)$ can reach $t$ in $D-X_i$,
  \item $\vert N^+(X_i)\vert=\ell$ for every $1\leq i\leq q$ and 
  \item every $s$-$t$ separator of size $\ell$ is fully contained in $\bigcup_{i=1}^q N^+(X_i)$.
 \end{enumerate}
 
 \noindent
 Furthermore, there is an algorithm that, given $k\in {\mathbb N}$, runs in time $\bigoh(k(\vert V(D)\vert +\vert A(D)\vert))$ and either correctly concludes that $\ell>k$ or produces the sets $X_1,X_2\setminus X_1,\dots, X_q\setminus X_{q-1}$ corresponding to such a collection $\cal X$.

\end{lemma}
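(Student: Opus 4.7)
The plan is to follow the strategy of Lemma~2.4 of \cite{MarxOR13} and Lemma~3.2 of \cite{RamanujanS14}, adapted to directed vertex separators. First I would reduce to an edge-cut problem on an auxiliary digraph $D^*$ via the standard split-vertex gadget: every $v\in V(D)\setminus\{s,t\}$ is replaced by two vertices $v^-,v^+$ joined by a unit-capacity edge, and every arc $(u,v)\in A(D)$ becomes an infinite-capacity edge $(u^+,v^-)$. The min $s^+$-$t^-$ edge cut in $D^*$ then equals the min $s$-$t$ vertex cut $\ell$ in $D$. A maximum flow $f$ of value $\ell$ is computed by at most $k+1$ BFS-based augmentations, each costing $O(|V(D)|+|A(D)|)$; a successful $(k+1)$-th augmentation certifies $\ell>k$ and the algorithm halts, otherwise $f$ is obtained in $\Oh(k(|V(D)|+|A(D)|))$ total time.

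Next I would build the residual graph $G_f$ and compute its strongly connected components with Tarjan's algorithm in $O(|V|+|A|)$, obtaining a condensation DAG $H$ whose nodes are SCCs, with the SCC $C_{s^+}$ of $s^+$ distinct from the SCC $C_{t^-}$ of $t^-$. By the Picard--Queyranne structural theorem, the successor-closed subsets $\mathcal Y\subseteq V(H)$ containing $C_{s^+}$ and avoiding $C_{t^-}$ are in bijection with the min $s$-$t$ vertex cuts of $D$ of size $\ell$: writing $X^*=\{v\in V(D): v^-, v^+\in\bigcup\mathcal Y\}$, the associated cut is $N^+(X^*)$. I would then traverse the SCCs of $H$ in reverse topological order, beginning with $\mathcal Y_1$ the successor-closure of $C_{s^+}$ (which corresponds to the closest min cut), and iteratively adjoining a single SCC at a time whose successors in $H$ are all in the current $\mathcal Y_i$, never including $C_{t^-}$. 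For each $\mathcal Y_i$ I define the reachable source-side $X_i:=R(\{s\},N^+(X^*_i))$, and emit $X_i$ as the next chain element only when it strictly enlarges $X_{i-1}$. This construction builds Properties~1 and~2 directly into the chain, and Picard--Queyranne supplies Property~3, namely $|N^+(X_i)|=\ell$; moreover the reachable-closure definition automatically places $X_i$ inside $V(D)\setminus(\{t\}\cup N^-(t))$, since a vertex of $N^-(t)$ appearing in $X_i$ would force $t\in N^+(X_i)$ of infinite capacity in $D^*$, contradicting $|N^+(X_i)|=\ell$.

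Property~4 follows from a residual-edge direction argument. Fix any min cut $C$ of $D$ with source side $\hat X$ and any $v\in C$. The edge $(v^-,v^+)$ is saturated by $f$, and a flow-carrying arc $(u^+,v^-)$ with $u\in\hat X$ an in-neighbor of $v$ in $D$ forces $(v^-,u^+)\in A(G_f)$; hence the SCC of $v^-$ is a predecessor of the SCC of $u^+$ in $H$. Consequently, in reverse topological order the SCC of $u^+$ is adjoined to the chain strictly before the SCC of $v^-$, so there is an index $i$ with $u\in X_i$ but $v\notin X_i$, placing $v\in N^+(X_i)$. Since every vertex on every min cut lies in some such $N^+(X_i)$, the covering property holds.

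The main obstacle will be reconciling the aux-graph Picard--Queyranne correspondence with the $D$-reachability demand of Property~2: a raw aux source-side may project to vertices that are not $s$-reachable in $D$, so the reachable closure must be taken explicitly, and several successor-closed $\mathcal Y_i$'s may collapse to the same reachable source-side, which must be detected without blowing the time budget. I would handle this by maintaining the reachable closure $X_i$ incrementally via a single cumulative BFS in $D$ interleaved with the SCC traversal of $H$, so that the total post-flow work across all chain members is $O(|V|+|A|)$; combined with the max-flow stage this yields the claimed $\Oh(k(|V|+|A|))$ running time.
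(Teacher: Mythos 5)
Your proposal is correct and follows essentially the same route as the paper: both reduce vertex cuts to edge cuts via the split-vertex gadget, compute a max flow with $\min\{k+1,\ell\}$ augmentations, decompose the residual graph into strongly connected components, and take the nested chain of topological suffix sets together with their $s$-reachable closures to obtain the $X_i$'s and prove Properties 1--4. The only stylistic difference is that you invoke the Picard--Queyranne lattice characterization as a black box and then restrict to the chain given by reverse topological order, while the paper argues the needed properties of the suffix sets $Y_i$ directly; the resulting construction and the forbidden-queue/BFS-based linear-time enumeration of $X_1, X_2\setminus X_1,\dots$ are the same.
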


\begin{proof}
	

	We denote by $D'$ the directed network obtained from $D$ by performing the following operation.	Let $v\in V(D)\setminus \{s,t\}$. We remove $v$ and add 2 vertices $v^+$ and $v^-$. For every $u\in N^-(v)$, we add an arc $(u,v^-)$ of infinite capacity and for every $u\in N^+(v)$, we add an arc $(v^+,u)$ of infinite capacity and finally we add the arc $(v^-,v^+)$ with capacity 1. We now make an observation relating $s$-$t$ arc-separators in $D'$ to $s$-$t$ separators in $D$. But before we do so, we need to formally define arc-separators.

	\begin{definition}
	Let $D$ be a digraph and $s$ and $t$ be distinct vertices. An arc-set $S$ is called an $s$-$t$ {\bf arc-separator} if there is no $s$-$t$ path in $D-S$. We denote by $R(s,S)$ the set of vertices of $D-S$ reachable from $s$ via directed paths and by $NR(s,S)$ the set of vertices of $D-S$ \emph{not} reachable from $s$. 
\end{definition}

The following observation is a consequence of the definition of arc-separators and the construction of $D'$.
	
	\begin{observation}\label{obs:correspondence arc-vertex}If $S\subseteq \{(v^-,v^+)| v\in V(D)\setminus \{s,t\}\}$ is an $s$-$t$ arc-separator in $D'$, then the set $S^{-1}=\{v|(v^-,v^+)\in S\}$ is an $s$-$t$ separator in $D$. Conversely for every $s$-$t$ separator $X$ in $D$, the set $\{(v^-,v^+)|v\in X\}$ is an $s$-$t$ arc-separator in $D'$.
	\end{observation}

	We now proceed to the proof of the lemma statement.
 We first run $min\{k+1,\ell\}$ iterations of the Ford-Fulkerson algorithm \cite{FordFulkerson56} on the network $D'$. Since we do not know $\ell$ to begin with, we simply try to execute $k+1$ iterations. If we are able to execute $k+1$ iterations, then it must be the case that $\ell>k$ and hence we return that $\ell>k$. Otherwise, we stop after at most $\ell\leq k$ iterations with a maximum $s$-$t$ flow. Let $D_1$ be the residual graph. Let $C_1,\dots, C_q$ be a topological ordering of the strongly connected components of $D_1$ such that $i<j$ if there is a path from $C_i$ to $C_j$. Recall that there is a $t$-$s$ path in $D_1$. Let $C_x$ and $C_y$ be the strongly connected components of $D_1$ containing $t$ and $s$ respectively. Since there is a path from $t$ to $s$ in $D_1$, it must be the case that $x<y$. For each $x<i\leq y$, let $Y_i=\bigcup_{j=i}^qC_j$ (see Figure~\ref{fig:marxlemmapic}). We first show that $\vert \delta_{D'}^+(Y_i)\vert=\ell$ for every $x<i\leq y$. Since no arcs leave $Y_i$ in the graph $D_1$, no flow enters $Y_i$ and every arc in $\delta_{D'}^+(Y_i)$ is saturated by the maximum flow. Therefore, $\vert\delta_{D'}^+(Y_i)\vert=\ell$. 
 
 \begin{figure}[t] \begin{center}
\includegraphics[height=180 pt, width=340 pt]{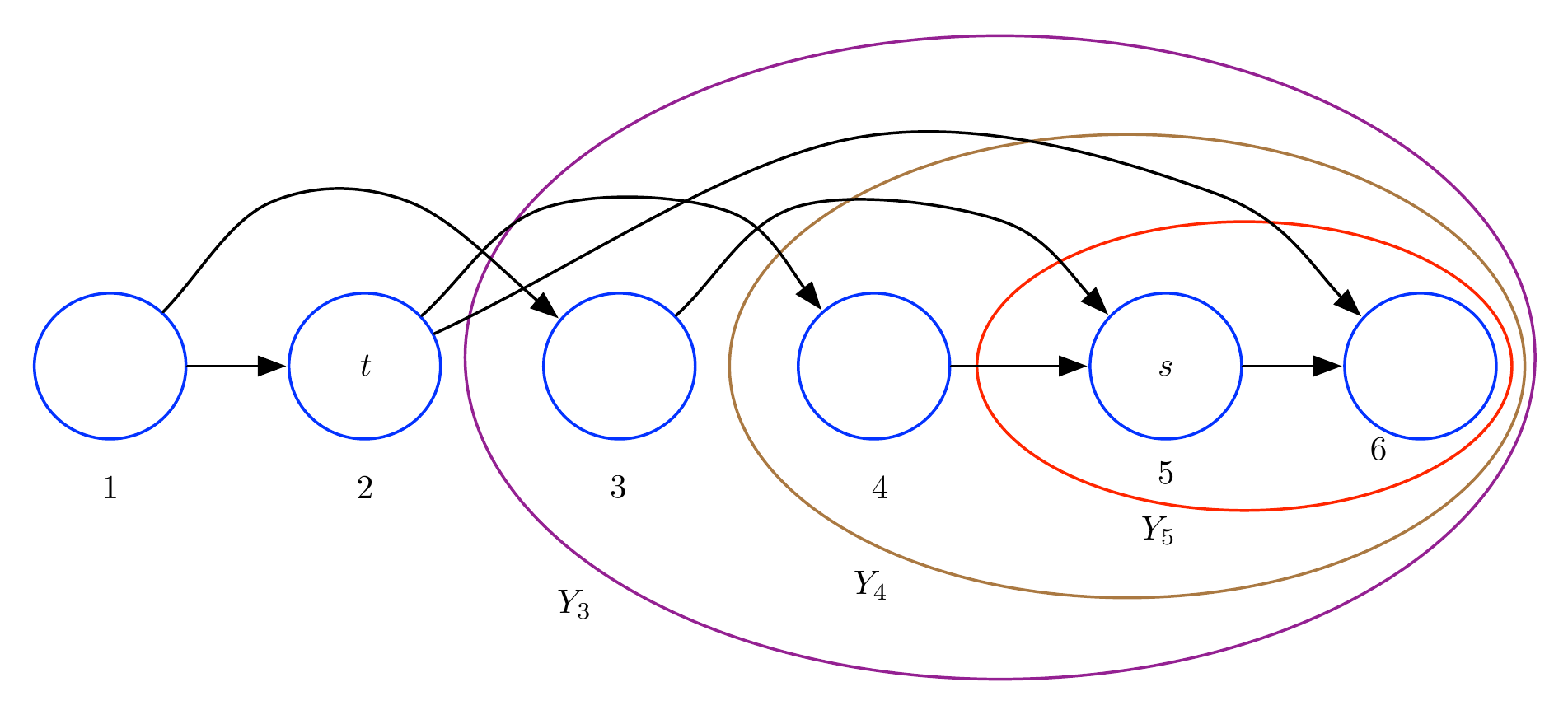}

 \caption{An illustration of the sets in the proof of Lemma~\ref{lem:separator layer}. The chain of circles in the middle are the strongly connected components of $D_1$ and $\alpha(s)=5$ and $\alpha(t)=2$.}
\label{fig:marxlemmapic}
\end{center}
\end{figure}
 
 We now show that every arc which is part of a minimum $s$-$t$ arc-separator is contained in $\bigcup_{i=1}^q \delta_{D'}^+(Y_i)$. Consider a minimum $s$-$t$ arc-separator $S$ and an arc $(a,b)\in S$. Let $Y$ be the set of vertices reachable from $s$ in $D'-S$. Since $F$ is a minimum $s$-$t$ arc-separator, it must be the case that $\delta_{D'}^+(Y)=F$ and therefore, $\delta_{D'}^+(Y)$ is saturated by the maximum flow. Therefore, we have that $(b,a)$ is an arc in $D_1$. Since no flow enters the set $Y$, there is no cycle in $D_1$ containing the arc $(b,a)$ and therefore, if the strongly connected component containing $b$ is $C_{i_b}$ and that containing $a$ is $C_{i_a}$, then $i_b<i_a$. Furthermore, since there is flow from $s$ to $a$ from $b$ to $t$, it must be the case that $x<i_b<i_a<y$ and hence the arc $(a,b)$ appears in the set $\delta_{D'}^+(Y_{i_a})$.
 
 Finally, we define the set $R(Y_i)$ to be the set of vertices of $Y_i$ which are reachable from $s$ in the graph $D'[Y_i]$. For each set $R(Y_i)$ we define the set $R^{-1}(Y_i)$ as $\{v|\{v^+,v^-\}\subseteq R(Y_i)\}$.
Due to the correspondence between $s$-$t$ separators in $D$ and $s$-$t$ arc-separators in $D'$ (Observation \ref{obs:correspondence arc-vertex}), the sets 
 $R^{-1}(Y_{y})\subset R^{-1}(Y_{y-1}) \subset \cdots \subset R^{-1}(Y_{x+1})$ indeed form a collection of the kind described in the statement of the lemma. It remains to describe the computation of these sets.

 In order to compute these sets, we first need to run the Ford-Fulkerson algorithm for $\ell$ iterations and perform a topological sort of the strongly connected components of $D_1$. This takes time $\bigoh(\ell(\vert V(D)\vert+\vert A(D)\vert))$. During this procedure, we also assign indices to the strongly connected components in the manner described above, that is, $i<j$ if $C_i$ occurs before $C_j$ in the topological ordering. In $\bigoh(\ell(\vert V(D)\vert+\vert A(D)\vert))$ time, we can assign indices to vertices such that the index of a vertex $v$ (denoted by $\alpha(v)$) is the index of the strongly connected component containing $v$. We then perform the following preprocessing for every vertex $v$ such that $\alpha(v)<\alpha(s)$. We go through the list of in-neighbors of $v$ and find 
$$\beta(v)=\max_{u\in N^-(v)} \Big\{ \alpha(u)~\vert~ \alpha(v)< \alpha(u) \Big\} \mbox{ and }$$  
$$\gamma(v)=\min_{u\in N^-(v)} \Big\{ \alpha(u)~\vert~ \alpha(v)< \alpha(u)\Big\}$$ 
and set $\beta'(v)=\min\{\beta(v),\alpha(s)\}$ and $\gamma'(v)=\max\{\gamma(v),\alpha(t)+1\}$.

 The meaning of these numbers is that the vertex $v$ occurs in each of the sets $N^+(Y_{\beta'(v)})$,  $N^+(Y_{\beta'(v)-1})$, $\dots$, $N^+(Y_{\gamma'(v)})$. 
  This preprocessing can be done in time $\bigoh(m+n)$ since we only compute the maximum and minimum in the adjacency list of each vertex. 
 A vertex $v$ is said to be $i$-forbidden for all $\gamma'(v)\leq i\leq \beta'(v)$. We now describe the algorithm to compute the sets in the collection.

\medskip

\noindent
{\bf Computing the collection.} We do a modified (directed) breadth first search (BFS) starting from $s$ by using only \emph{out-going} arcs. Along with the standard BFS queue, we also maintain an additional \emph{forbidden} queue.

We begin by setting $i=\alpha(s)$ and start the BFS by considering the out-neighbors of $s$.
We add a vertex to the BFS queue only if it is both unvisited and not $i$-forbidden. 
If a vertex is found to be $i$-forbidden (and it is not already in the forbidden queue), we add this vertex to the forbidden queue. Finally, when the BFS queue is empty and every unvisited out-neighbor of every vertex in this tree is in the forbidden queue, we return the set of vertices \emph{added} to the BFS tree in the current iteration as $R(Y_{i})\setminus R(Y_{i+1}) $.   Following this, the vertices in the forbidden queue which are not $(i-1)$-forbidden are removed and added to the BFS queue and the algorithm continues after decreasing $i$ by 1. 
The algorithm finally stops when $i=\alpha(t)$.

We claim that this algorithm returns each of the sets $R(Y_{\alpha(s)})$, $R(Y_{\alpha(s)-1})\setminus R(Y_{\alpha(s)})$, $\dots,R(Y_{\alpha(t)+1})\setminus R(Y_{\alpha(t)+2})$ and runs in time $\bigoh(\ell(\vert V(D)\vert +\vert A(D)\vert)$. In order to bound the running time, first observe that the vertices which are $i$-forbidden are exactly the vertices in the set 
$N^+(R(Y_i))$
 and therefore the number of $i$-forbidden vertices for each $i$ is at most $\ell$. This implies that the number of vertices in the forbidden queue at any time is at most $\ell$. Hence, testing if a vertex is $i$-forbidden or already in the forbidden queue for a fixed $i$ can be done in time $\bigoh (\ell)$. Therefore, the time taken by the algorithm is $\bigoh(\ell)$ times the time required for a BFS in $D$, which implies a bound of $\bigoh(\ell(\vert V(D)\vert +\vert A(D)\vert))$.

For the correctness, we prove the following invariant for each iteration. Whenever a set is returned in an iteration,
\begin{itemize} \setlength{\itemsep}{-2pt}
\item the set of vertices currently in the forbidden queue are exactly the $i$-forbidden vertices
\item the vertices in the current BFS tree are exactly the vertices in the set $R(Y_{i})$. 
\end{itemize}

\noindent
For the first iteration, this is clearly true. We assume that the invariant holds at the end of iteration $j\geq 1$ (where $i=i'$) and consider the $(j+1)$-th iteration (where $i$ is now set as $i'-1$).

Let $P_j$ be the vertices present in the BFS tree at the end of the $j$-th iteration and $P_{j+1}$ be the vertices present in the BFS tree at the end of the $(j+1)$-th iteration. We claim that the set $P_{j+1}=R(Y_{i'-1})$.
 
Since we never add a vertex to $P_{j+1}$ if it is $(i'-1)$-forbidden, the vertices in $P_{j+1}\setminus P_j$ are precisely those vertices which are reachable from $P_j$ via a path disjoint from $(i'-1)$-forbidden vertices.
Since the invariant holds for the preceding iteration, we know that $P_j=R(Y_{i'})$ and by our observation about $P_{j+1}\setminus P_j$, we have that $P_{j+1}$ is the set of vertices reachable from $R(Y_{i'})$ via paths disjoint from $(i'-1)$-forbidden vertices, which implies that $P_{j+1}=R(Y_{i'-1})$ since $R(Y_{i'-1})$ is precisely the set of vertices reachable from $R(Y_{i'})$ via paths disjoint from $(i'-1)$-forbidden vertices. We now show that the vertices in the forbidden queue are exactly the $(i'-1)$-forbidden vertices.
Since the BFS tree in iteration $j+1$ could not be grown any further, every out-neighbor of every vertex in the tree is in the forbidden queue. Since we have already shown that the vertices in the BFS tree, that is in $P_{j+1}$, are precisely the vertices in $R(Y_{i'-1})$, we have that every $(i'-1)$-forbidden vertex is already in the forbidden queue.
This proves that the invariant holds in this iteration as well and completes the proof of correctness of the algorithm.
\end{proof}

We also require the following well known property of minimum separators. This is a simple consequence of  Property 4 in Lemma \ref{lem:separator layer}.

%
%
%
%
%
%
%

\begin{lemma}\label{lem:easy application}
Let $D$ be a digraph and $s,t$ be two vertices. Let ${\cal X}=\{X_1,\dots, X_q\}$ be the collection given by Lemma \ref{lem:separator layer} and $\ell=|N^+(X_i)|$ for each $i\in [q]$. Define $X_0=\emptyset$ and $X_{q+1}=V(D)$. Let $Z_i$ denote the set $X_{i+1}\setminus N^+[X_{i}]$ for each $0\leq i\leq q$. Then, any minimal $s$-$t$ separator in $D$ that intersects $Z_i$ for any $0\leq i\leq q$ has size at least $\ell+1$. 
\end{lemma}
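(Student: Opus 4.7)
The plan is to reduce the statement to Property 4 of Lemma \ref{lem:separator layer}, which already guarantees that every minimum $s$-$t$ separator sits inside $\bigcup_{i=1}^q N^+(X_i)$. So the real work is to verify that the sets $Z_i$ are disjoint from this union, because then no separator of size $\ell$ can meet any $Z_i$, and minimality forces a separator of size $<\ell$ to be impossible.

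Concretely, I would first contrapositively suppose that $S$ is a minimal $s$-$t$ separator with $|S|\le \ell$. By the definition of $\ell$, we must have $|S|=\ell$, and hence $S$ is a minimum separator. Property 4 of Lemma \ref{lem:separator layer} then gives $S \subseteq \bigcup_{j=1}^{q} N^+(X_j)$. The goal is to show that $S \cap Z_i = \emptyset$ for every $0 \le i \le q$, which contradicts the hypothesis of the lemma.

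The key step, and the only mildly non-trivial one, is to show that each $N^+(X_j)$ is disjoint from each $Z_i = X_{i+1}\setminus N^+[X_i]$. I would split into two cases based on whether $j\le i$ or $j\ge i+1$. When $j\ge i+1$, use $Z_i\subseteq X_{i+1}\subseteq X_j$ (by the chain $X_1\subset\cdots\subset X_q$, together with $X_{q+1}=V(D)$ if $i=q$; the $i=q$ case is vacuous since $N^+(X_j)$ is only defined for $j\in[q]$) and the trivial fact $N^+(X_j)\cap X_j = \emptyset$. When $j\le i$, take any $v\in N^+(X_j)$; then $v\notin X_j$ has an in-neighbor in $X_j\subseteq X_i$, so either $v\in X_i$ or $v\in N^+(X_i)$, i.e., $v\in N^+[X_i]$, which rules out $v\in Z_i$.

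Combining these two cases gives $\big(\bigcup_{j=1}^q N^+(X_j)\big)\cap Z_i = \emptyset$ for every $i$, so $S\cap Z_i=\emptyset$ as required. The hard part is just keeping the index conventions straight (the endpoints $X_0=\emptyset$ and $X_{q+1}=V(D)$, plus the fact that $N^+(X_j)$ is only a well-defined minimum separator for $j\in[q]$); the arguments themselves are one-liners from the definitions of $N^+$ and $Z_i$.
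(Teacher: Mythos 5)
Your proof is correct and follows essentially the same approach as the paper's: both establish that each $Z_i$ is disjoint from $\bigcup_{j=1}^q N^+(X_j)$ (using the chain property for $j\geq i+1$ and the out-neighborhood definition for $j\leq i$) and then invoke Property 4 of Lemma \ref{lem:separator layer} to rule out any separator of size $\leq\ell$ meeting a $Z_i$.
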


\begin{proof} 
Let $Q=\bigcup_{j=1}^q N^+(X_j)$. We claim that for any $0\leq i\leq q$, the set $Z_i$ is disjoint from $Q$. Fix an index $i$ and consider a vertex $u\in Z_i$. By definition, $u\in X_{i+1}$ and $u\notin N^+[X_{i}]$. Since $u\in X_{i+1}$, it must be the case that $u\in X_{r}$ and hence \emph{not in} $N^+[X_r]$ for every $r>i$ (by Property 1 in Lemma \ref{lem:separator layer}). Similarly, since $u\notin N^+[X_i]$, it must be the case that $u\notin N^+[X_r]$ for any $r\leq i$. Therefore, $u\notin Q$ and we conclude that $Z_i$ is disjoint from $Q$.

The lemma now follows from the fact that $Z_i$ is disjoint from $Q$ and Property 4 in Lemma \ref{lem:separator layer} which guarantees that every $s$-$t$ separator of size $\ell$ is contained in $Q$. This completes the proof of the lemma.	
\end{proof}

We now recall the notion of a tight separator sequence. This was first defined in \cite{LokshtanovR12} for undirected graphs. Here we define a similar notion for directed graphs.

%
%
%
%
%

\begin{definition}\label{def:smallest separator sequence} 

Let $s$,$t$ be two vertices in a digraph $D$ and let $k\in {\mathbb N}$. A \textbf{tight $s$-$t$ separator sequence} of order $k$ is an ordered collection ${\cal H}=\{H_1,\dots,H_q\}$ of sets in $V(D)$ where $\{s\}\subseteq H_i\subseteq V(D)\setminus (\{t\}\cup N^-(t))$ for any $1\leq i\leq q$ such that,

\begin{itemize}
	\item $H_1\subset H_2\subset \dots \subset H_q$,
	\item $H_i$ is reachable from $s$ in $D[H_i]$ and every vertex in $N^+(H_i)$ can reach $t$ in $D-H_i$ \\
	(implying that $N^+(H_i)$ is a minimal $s$-$t$ separator in $D$)
	\item $\vert N^+(H_i)\vert\leq k$ for every $1\leq i\leq q$, 
	\item for any $1\leq i\leq q-1$, there is no $s$-$t$ separator $S$ of size at most $k$ where $S\subseteq H_{i+1}\setminus N^+[H_{i}]$ or $S\cap N^+[H_q]=\emptyset$.
\end{itemize}

\end{definition}

We have the following obvious but useful consequence of the definition of tight separator sequences.

\begin{lemma}
	\label{lem:induced tight sequence}Let $s$,$t$ be two vertices in a digraph $D$ and let $k\in {\mathbb N}$. Let $u\in V(D)$ be a vertex which is part of \emph{every} minimal $s$-$t$ separator of size at most $k$. Then, ${\cal H}$ is a tight $s$-$t$ separator sequence of order $k$ in $D$ if and only if it is a tight $s$-$t$ separator sequence of order $k-1$ in $D-\{u\}$. Furthermore, $u\in N^+(H)$ for every $H\in {\cal H}$.
	\end{lemma}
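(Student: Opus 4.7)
The plan is to prove the ``furthermore'' clause first and then use it to translate each defining property of a tight separator sequence between $D$ at order $k$ and $D - \{u\}$ at order $k-1$.

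For the furthermore clause, recall that the definition of a tight $s$-$t$ separator sequence guarantees that $N^+(H)$ is a minimal $s$-$t$ separator in the ambient graph. When $\mathcal{H}$ is tight in $D$ of order $k$, this separator has size at most $k$, so by the hypothesis on $u$ we conclude $u \in N^+_D(H)$. When $\mathcal{H}$ is tight in $D-\{u\}$ of order $k-1$, the set $N^+_{D-\{u\}}(H) \cup \{u\}$ is an $s$-$t$ separator in $D$ of size at most $k$; any minimal $s$-$t$ separator of $D$ contained in it has size at most $k$ and hence contains $u$, giving $u \in N^+_D(H)$. In particular $u \notin H$, since $H$ and $N^+_D(H)$ are disjoint.

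For the equivalence, the key structural observations are: $(D-\{u\})[H_i] = D[H_i]$ since $u \notin H_i$; $N^+_{D-\{u\}}(H_i) = N^+_D(H_i) \setminus \{u\}$, so the size bounds at orders $k$ and $k-1$ correspond; $N^-_{D-\{u\}}(t) = N^-_D(t) \setminus \{u\}$; and crucially, the ``gap'' $H_{i+1} \setminus N^+_D[H_i]$ coincides with $H_{i+1} \setminus N^+_{D-\{u\}}[H_i]$ because $u \notin H_{i+1}$ while $u \in N^+_D[H_i]$. These immediately transport the membership, strict-inclusion, within-$H_i$-reachability, and size conditions between the two settings. For the condition that every $v \in N^+(H_i)$ reaches $t$ in $D - H_i$, I invoke the minimality of $N^+_D(H_i)$: for each $v \in N^+_D(H_i)$, there is a simple $s$-$t$ path in $D$ using $v$ as its only vertex of $N^+_D(H_i)$, whose $v$-to-$t$ sub-path lies in $D - H_i$ and, when $v \neq u$, avoids $u$ (since $u$ lies in $N^+_D(H_i) \setminus \{v\}$). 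The reverse direction handles $v = u$ using the reachability supplied by the tight sequence in $D-\{u\}$ together with the standing strong connectivity of $D$.

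The most delicate step is the tightness condition, which I expect to be the main obstacle. The central idea is that every $s$-$t$ separator of size at most $k$ in $D$ must contain $u$ (since it contains a minimal such separator, which contains $u$ by hypothesis), yielding a bijection $S \leftrightarrow S \setminus \{u\}$ between $s$-$t$ separators of size at most $k$ in $D$ and $s$-$t$ separators of size at most $k-1$ in $D - \{u\}$. Since $u$ lies in $N^+_D[H_i]$ for every $i$, it lies outside every gap $H_{i+1} \setminus N^+_D[H_i]$ and inside $N^+_D[H_q]$, so I can carefully match up the ``forbidden positions'' of tightness under this bijection: using that the gaps in the two graphs coincide and that $u$ contributes a fixed element to separators in $D$, I would argue that the existence of a separator $S'$ of size $\leq k-1$ in $D-\{u\}$ violating tightness is equivalent to the existence of a separator $S = S' \cup \{u\}$ of size $\leq k$ in $D$ violating the corresponding structural condition relative to $\mathcal{H}$. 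Making this correspondence fully rigorous---in particular, ensuring that the ``$u$-contribution'' is correctly bookkept when translating between gap-containment in $D$ and in $D-\{u\}$---will be the technical core of the proof.
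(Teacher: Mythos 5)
The paper gives no proof of this lemma (it is asserted as an ``obvious but useful consequence'' of Definition~5.3), so there is no reference argument to compare against; the question is whether your argument is sound on its own terms. It is not: there are two concrete gaps, one repairable and one fundamental.

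First, the ``furthermore'' clause. In the direction where $\mathcal{H}$ is tight in $D-\{u\}$ of order $k-1$, you argue that $N^+_{D-\{u\}}(H)\cup\{u\}$ is an $s$-$t$ separator of $D$ of size $\le k$, that it contains a minimal separator $T$ of size $\le k$, hence $u\in T$, hence $u\in N^+_D(H)$. The last step is a non sequitur: $u\in T\subseteq N^+_{D-\{u\}}(H)\cup\{u\}$ is vacuous and says nothing about whether $u$ is an out-neighbour of $H$. The fix is to use $N^+_D(H)$ itself, which is \emph{always} an $s$-$t$ separator of $D$ (any $s$-$t$ path leaves $H$ and hence meets $N^+_D(H)$): if $u\notin N^+_D(H)$ then $N^+_D(H)=N^+_{D-\{u\}}(H)$ has size $\le k-1\le k$, so it contains a minimal separator of size $\le k$, which contains $u$ by hypothesis, a contradiction. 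Relatedly, your appeal to ``the standing strong connectivity of $D$'' when checking that $u$ can reach $t$ in $D-H_i$ is unjustified --- the lemma does not assume strong connectivity --- and this direction also needs a genuine argument (e.g., if $u$ could not reach $t$ in $D-H_i$, then $N^+_D(H_i)\setminus\{u\}$ would already be an $s$-$t$ separator of $D$ of size $\le k-1$, not containing $u$, again contradicting the hypothesis on $u$).

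Second, and fatally, your treatment of condition 4 cannot be completed along the lines you sketch. You correctly observe that the gaps $H_{i+1}\setminus N^+_D[H_i]$ and $H_{i+1}\setminus N^+_{D-\{u\}}[H_i]$ coincide as sets, and that $S\mapsto S\setminus\{u\}$ is a bijection between size-$\le k$ separators of $D$ and size-$\le(k-1)$ separators of $D-\{u\}$. But this bijection does \emph{not} respect gap-containment: since $u\in N^+_D[H_i]$ for every $i$, $u$ lies outside every gap, so \emph{no} size-$\le k$ separator of $D$ (all of which contain $u$) can ever be contained in a gap. Hence condition 4 of order $k$ for $D$ is vacuously satisfied and carries no information, while condition 4 of order $k-1$ for $D-\{u\}$ is not vacuous. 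Concretely, let $D$ have vertices $\{s,u,a_1,a_2,b,t\}$ and arcs $(s,u),(u,t),(s,a_1),(a_1,a_2),(a_2,b),(b,t)$, and take $k=2$, $\mathcal{H}=\{\{s\},\{s,a_1,a_2\}\}$. Every minimal $s$-$t$ separator of $D$ of size $\le 2$ contains $u$ (indeed, every $s$-$t$ path passes through $u$), and one checks that $\mathcal{H}$ satisfies all four conditions of Definition~5.3 for $D$ at order $2$. Yet in $D-\{u\}$, the set $\{a_2\}$ is an $s$-$t$ separator of size $1$ contained in $H_2\setminus N^+_{D-\{u\}}[H_1]=\{a_2\}$, so condition~4 fails at order $1$. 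This shows the ``only if'' direction of the equivalence cannot be proved by the method you propose (and suggests the lemma needs either a stronger hypothesis or a restricted class of sequences to be true as stated); in any case the step you yourself flag as ``the technical core'' is precisely where the argument breaks.
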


The following lemma gives a linear-time {\FPT} algorithm to compute a tight separator sequence for a given parameter $k$. In fact, it is a \emph{polynomial} time algorithm which depends linearly on the input size while the dependence on the parameter is a polynomial. This subroutine plays a major role in the proof of Lemma \ref{lem:crux}. 

\begin{lemma}\label{lem:find_tight_sequence}
	There is an algorithm that, given a digraph $D$ with no isolated vertices, vertices $s,t\in V(D)$ and $k\in {\mathbb N}$, runs in time $\bigoh(k^2m)$ and either correctly concludes that there is no $s$-$t$ separator of size at most $k$ in $D$ or returns the sets $H_1,H_2\setminus H_1,\dots, H_q\setminus H_{q-1}$ corresponding to a tight $s$-$t$ separator sequence ${\cal H}=\{H_1,\dots,H_q\}$ of order $k$.
\end{lemma}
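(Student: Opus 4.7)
The plan is to build the sequence ${\cal H}$ layer by layer, using Lemma~\ref{lem:separator layer} as a black box, processing minimum separators first and then progressively larger ones. The first step is to invoke Lemma~\ref{lem:separator layer} on $(D, s, t, k)$; if it reports that no $s$-$t$ separator of size at most $k$ exists, we return accordingly. Otherwise it produces a chain $X_1 \subset \cdots \subset X_q$ whose out-neighborhoods are all minimum $s$-$t$ separators, of common size $\lambda \leq k$. These $X_i$'s are placed into ${\cal H}$ as the initial candidates: each $N^+(X_i)$ is a minimal $s$-$t$ separator of size $\lambda \leq k$, and by Property~4 of Lemma~\ref{lem:separator layer}, every $s$-$t$ separator of size exactly $\lambda$ is contained in $\bigcup_i N^+(X_i)$, so no size-$\lambda$ separator can violate the tightness condition of ${\cal H}$.

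Next I process each ``gap'' $Z_i = X_{i+1}\setminus N^+[X_i]$ (with $X_0=\emptyset$ and $X_{q+1}=V(D)$), which by Lemma~\ref{lem:easy application} contains no $s$-$t$ separator of size at most $\lambda$ but may contain separators of size between $\lambda+1$ and $k$. For each gap I form an auxiliary digraph $D_i$ by taking the induced subgraph on $Z_i\cup N^+(X_i)$ and then contracting $X_i\cup\{s\}$ (together with arcs incident to $N^+(X_i)$) to a single super-source $s_i$ and contracting $V(D)\setminus (Z_i\cup N^+(X_i))$ together with $\{t\}$ to a single super-sink $t_i$. Using Property~2 of Lemma~\ref{lem:separator layer}, which asserts that $X_i$ is reachable from $s$ in $D[X_i]$ and that every vertex in $N^+(X_i)$ reaches $t$ in $D-X_i$, one verifies that minimal $s_i$-$t_i$ separators of size at most $k$ in $D_i$ are in bijection with minimal $s$-$t$ separators of $D$ of size at most $k$ that are contained in $Z_i$. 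Recursively applying the algorithm to $(D_i, s_i, t_i, k)$ thus produces exactly the sets that must be interleaved between $X_i$ and $X_{i+1}$ in ${\cal H}$ to enforce tightness across that gap.

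For the runtime analysis, note that because $Z_i \cap Z_j = \emptyset$ for $i\neq j$, the subinstances at a single recursion level have arc sets that are almost disjoint, with only $O(1)$ arcs of overhead per gap for the contracted super-source/sink; hence at any fixed recursion level the total arc size summed across subinstances is $O(m)$, and a single level costs $O(k\cdot m)$ by Lemma~\ref{lem:separator layer}. Each recursive call operates on a subproblem whose minimum separator size is strictly larger than that of the parent (again by Lemma~\ref{lem:easy application}), so the recursion depth is at most $k-\lambda+1\leq k$, yielding the desired total running time $O(k^2m)$ (using that $D$ has no isolated vertices so $n=O(m)$).

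The main obstacle is the correspondence between separators in the subinstance $D_i$ and separators of $D$ lying in the gap $Z_i$: the contractions must faithfully preserve the reachability structure so that no spurious separators are created and no genuine ones are lost. Concretely, one must argue that if $S$ is a minimal $s$-$t$ separator of $D$ that is ``strictly between'' $N^+(X_i)$ and $N^+(X_{i+1})$ in the sense of the chain, then the set $R(s, S)\setminus X_i$ naturally gives rise to a closed set in $D_i$ whose out-neighborhood is exactly $S$, and conversely every closed set produced by the recursive call lifts back to such an $S$ in $D$. Handling this lifting cleanly, and verifying that the concatenation of the recursively produced sequences with the outer chain $X_1,\dots,X_q$ yields a genuine tight sequence in $D$ (rather than, say, a sequence that violates monotonicity at the interfaces), is where the most careful part of the proof lies.
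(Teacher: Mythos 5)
Your high-level plan matches the paper's: run Lemma~\ref{lem:separator layer} to get the chain $X_1 \subset \cdots \subset X_q$, build one auxiliary digraph per gap, recurse, and splice the resulting tight sequences back between the $X_i$'s, with Lemma~\ref{lem:easy application} bounding the recursion depth. However, there are concrete gaps in the execution.

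The construction of the gap digraph $D_i$ is not correct as stated. You keep all of $N^+(X_i)$ as vertices and contract $X_i \cup \{s\}$ into $s_i$. Since $Z_i = X_{i+1}\setminus N^+[X_i]$ is disjoint from $N^+(X_i)$, the out-neighborhood of $s_i$ in your $D_i$ is exactly $N^+(X_i)$, so $N^+(X_i)$ is itself an $s_i$-$t_i$ separator of size $\ell = \lambda_D(s,t)$. This destroys the claim that the connectivity of each subinstance strictly exceeds that of the parent, which is what bounds the recursion depth by $k-\ell+1$. It also invalidates the asserted bijection between $s_i$-$t_i$ separators of $D_i$ and $s$-$t$ separators of $D$ contained in $Z_i$, since $N^+(X_i)$ is not contained in $Z_i$. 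The paper avoids this by splitting $N^+(X_i)$ into two parts: the vertices of $N^+(X_i)$ that lie \emph{inside} the gap ($P_{i+1}$ in the paper's notation) are contracted into the source $s_{i+1}$, while the vertices shared with the next boundary ($W_{i+1} = N^+(X_i)\cap N^+(X_{i+1})$) are retained with explicit ``forcing'' arcs $s_{i+1}\to W_{i+1} \to t_{i+1}$, guaranteeing that every $s_{i+1}$-$t_{i+1}$ separator contains $W_{i+1}$ and hence has size strictly exceeding $\ell$ by an appeal to Lemma~\ref{lem:easy application}. If instead you contract all of $N^+(X_i)$ into $s_i$, then the $W$-vertices are no longer forced, and a small separator of $D_i$ may lift to a set that does not cut the $s$-$t$ paths passing through $W_{i+1}$.

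The splicing step -- showing that the outer chain interleaved with the recursively computed sequences is itself a tight $s$-$t$ separator sequence of order $k$ in $D$ -- is the content of the paper's Claim~\ref{clm:second}, which verifies all four conditions of Definition~\ref{def:smallest separator sequence} through careful case analysis at the interfaces. You explicitly flag this as ``where the most careful part of the proof lies'' without carrying it out; that is precisely the part that must be proved, and it is nontrivial exactly because of the $W$-vertices straddling the boundaries.

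Finally, the running-time argument also depends on the $W$-vertex handling. Your claim of ``$O(1)$ arcs of overhead per gap'' is too optimistic: a vertex $w \in N^+(X_i)\cap N^+(X_{i+1})\cap\cdots$ can appear in many consecutive boundaries, and keeping it in each $D_i$ duplicates its incident arcs. The paper's ``subtle computational simplification'' of recursing on $D_i - W_i$ (justified by Lemma~\ref{lem:induced tight sequence}, since the forced vertices may be deleted and the parameter decremented) is exactly what makes the arcs of the subinstances a genuine partition of a subset of $A(D)$, giving $\sum_i m_i \leq m$ and hence the $\bigoh(k^2 m)$ bound.
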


\begin{proof} 
The algorithm we present executes the algorithm of Lemma \ref{lem:separator layer} on various carefully chosen subdigraphs of the given graph and 
 Lemma \ref{lem:easy application} allows us to prove a bound on the number of times any single arc of $D$ participates in these computations.


Suppose that $\lambda(s,t)=\ell<k$ and consider the output of the algorithm of Lemma \ref{lem:separator layer} on input $D$, $s$ and $t$.
%
%
By definition, this invocation returns the sets $X_1,X_2\setminus X_1, X_q\setminus X_{q-1}$ corresponding to the collection ${\cal X}=\{X_1,\dots, X_q\}$. We define $X_{q+1}$ to be the set $R(s,\emptyset)\setminus \{t\}$.
%
%
%
%
%
We set $X_0=\emptyset$ and for each $1\leq i\leq q+1$, we define the following sets (see Figure \ref{fig:localsets}) :

\begin{figure}[t]
\begin{center}
  \includegraphics[height=200pt, width=300pt]{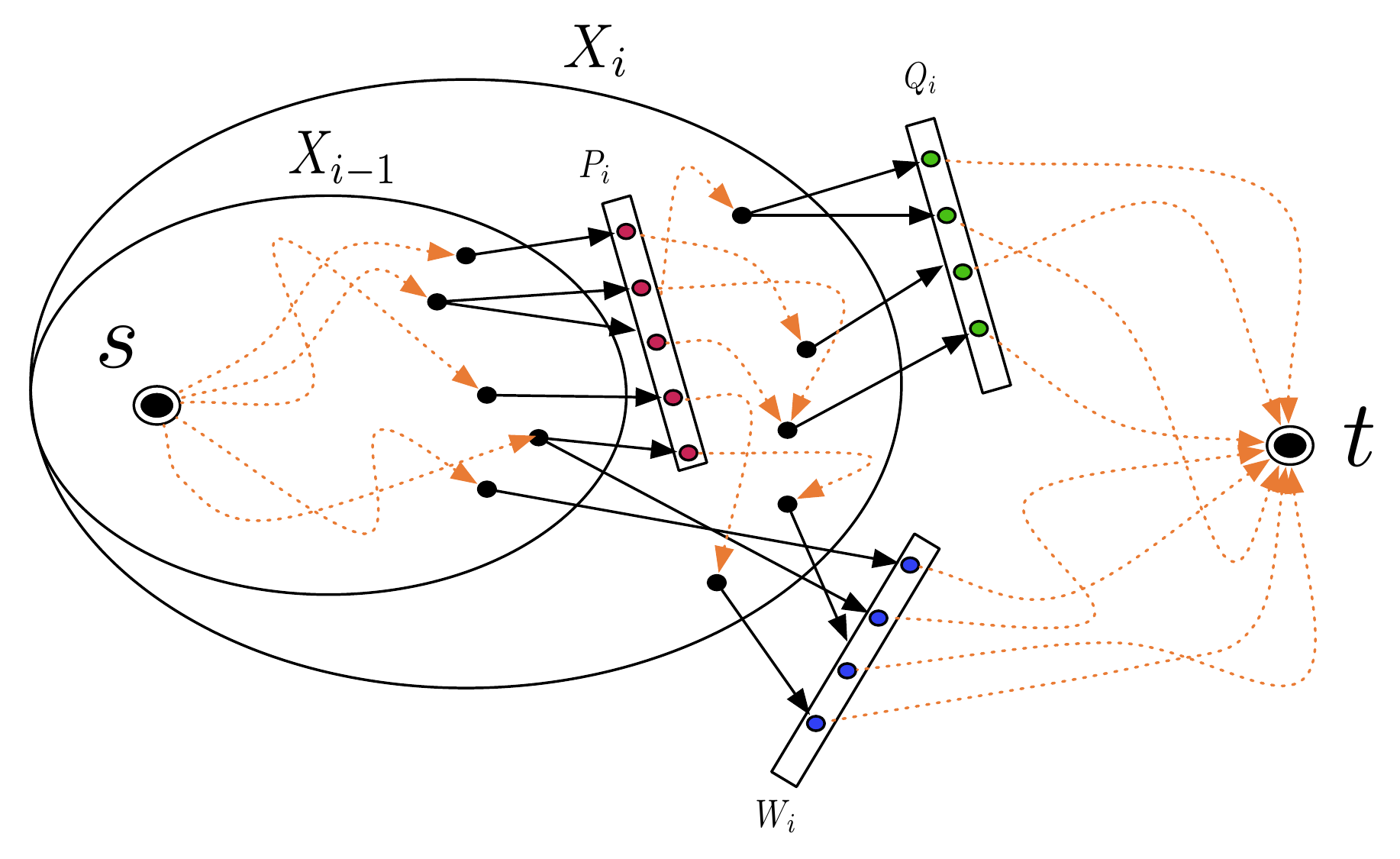}
  \caption{An illustration of the various sets defined in the proof of Lemma \ref{lem:find_tight_sequence}. The dotted arrows denote directed paths while the solid ones denote arcs.}
  \label{fig:localsets}
  \end{center}
\end{figure}


\begin{itemize}

\item $Y_i=X_i\setminus X_{i-1}$
\item $ P_i=Y_i\cap N^+(X_{i-1})$
\item $Q_i=N^+(X_i)\setminus N^+(X_{i-1})$
\item $W_i=N^+(X_i)\setminus Q_i$
\end{itemize}

with $P_1=\{s\}$.
That is, $P_i$ is defined to be those vertices in $Y_i$ (which is non-empty due to Property 1 in Lemma \ref{lem:separator layer}) which are out-neighbors of vertices in $X_{i-1}$, $Q_i$ is the set of those vertices in the out-neighborhood of $X_i$ which are \emph{not} in the out-neighborhood of $X_{i-1}$ and $W_i$ is the set of vertices in the out-neighborhood of $X_i$ which are not already in $Q_i$. Observe that $Q_i$ can also be written as $Q_i=(V(D)\setminus X_i)\cap (N^+(Y_i)\setminus N^+(X_{i-1}))$. Also note that $P_i$ and $Q_i$ are by definition disjoint.
Furthermore, it is important to note that $P_i$ and $Q_i$ are non-empty. The set $P_i$ is non-empty because Property 1 of Lemma \ref{lem:separator layer} guarantees that the set $Y_i$ is non-empty and Property 2 of Lemma \ref{lem:separator layer} ensures that every vertex in $X_i$ (and hence in $Y_i$) is reachable from $s$ in $D[X_i]$ implying that there is at least one vertex in $Y_i$ which has a vertex in $X_{i-1}$ as an in-neighbor. On the other hand, if $Q_i$ is empty then $N^+(X_i)=W_i$ and $N^+(X_{i-1})\supset W_i$ (strict superset since $P_i$ is non-empty). This contradicts Property 3 of Lemma \ref{lem:separator layer}. Finally, note that $P_1=\{s\}$, $Q_{q+1}=\{t\}$, $W_1=W_{q+1}=\emptyset$ and $P_{q+1}=N^+(X_q)$.
For each $1\leq i\leq q+1$ we also define the digraph $D_i$ as follows:

$$ V(D_i)= (Y_i\setminus P_i)\cup \{s_i,t_i\} \cup W_i$$
$$ \hspace{-50 pt} A(D_i)=A(D)[Y_i\setminus P_i] $$ 
$$ \hspace{120 pt} \bigcup \{(s_i,p)|p\in (N^+(P_i)\cap (Y_i\setminus P_i))\cup W_i\}$$ $$\hspace{60 pt} \bigcup \{(p,t_i)|p\in N^-(Q_i)\cup W_i\}$$

Finally, if $Q_i\cap N^+(P_i)\neq \emptyset$, then we add an arc $(s_i,t_i)$.
That is, the digraph $D_i$ is defined as the digraph obtained from $D[Y_i\cup Q_i]$ by adding the vertices in $W_i$, identifying the vertices of $P_i$ into a single vertex called $s_i$ (removing self-loops and parallel arcs), identifying the vertices of $Q_i$ into a single vertex called $t_i$ and adding arcs from $s_i$ to all vertices in $W_i$ and from all vertices in $W_i$ to $t_i$. Since $P_i$ and $Q_i$ are disjoint and non-empty, this digraph is well-defined. Also note that there is no isolated vertex in $D_i$. This is because every vertex in $D_i$ is reachable from $s_i$ by definition.
We now make the following claim regarding the connectivity from $s_i$ to $t_i$ in the digraph $D_i$.

\begin{claim}\label{clm:first}
For each $1\leq i\leq q+1$, $\lambda_{D_i}(s_i,t_i)>\ell$.	
\end{claim}

\begin{proof} Observe that if $Q_i\cap N^+(P_i)\neq \emptyset$, then by definition the graph $D_i$ contains the arc $(s_i,t_i)$, implying that $\lambda_{D_i}(s_i,t_i)=\infty$. Henceforth, we assume that $Q_i\cap N^+(P_i)= \emptyset$. 
Consider a set $S_i\in V(D_i)$ that is an $s_i$-$t_i$ separator in $D_i$. Observe that since $S_i$ is disjoint from $\{s_i,t_i\}$, it must be the case that $S_i\subseteq V(D)$. Furthermore, observe that by definition, $S_i\supseteq W_i$. This is because each vertex in $W_i$ is both an out-neighbor of $s_i$ and an in-neighbor of $t_i$. We claim that $S_i$ intersects all $P_i$-$Q_i$ paths in $D$.

    Suppose that this is not the case and there is a $P_i$-$Q_i$ path in $D-S_i$. Let $J$ be a  $P_i$-$Q_i$ path in $D-S_i$ which minimizes the intersection with $P_i\cup Q_i$. 
    As a result of the minimality condition, it must be the case that this path begins at a vertex $p\in P_i$, ends at a vertex $p'\in Q_i$ and has all internal vertices in the set $Y_i$. 
    However, a corresponding $s_i$-$t_i$ path $J'$ in $D_i$ can be obtained  by simply replacing $p$ with $s_i$ and $p'$ with $t_i$. Since $J'$ is disjoint from $S_i$, we get a contradiction to our assumption that $S_i$ is an $s_i$-$t_i$ separator in $D_i$. Hence, we conclude that $S_i$ intersects all $P_i$-$Q_i$ paths in $D$.
    
   Now, observe that any $s$-$t$ path in $D$ that is disjoint from $W_i$ must contain as a subpath a $P_i$-$Q_i$ path whose internal vertices lie entirely in $Y_i$. This is because $s\in X_{i-1}$ and $N^+[X_i]$ is disjoint from $N^-[t]$ (guaranteed by Lemma \ref{lem:separator layer}). Since $S_i$  intersects all such paths, we conclude that $S_i$ is in fact an $s$-$t$ separator in $D$. 
   
   Furthermore, the presence of a $P_i$-$Q_i$ path in $D$ with all internal vertices in $Y_i$ and the fact that $S_i$ is a set disjoint from $P_i\cup Q_i$ that intersects this path implies that $S_i$ contains a vertex in $Y_i\setminus P_i=X_i\setminus N^+[X_{i-1}]$. 
   But notice that $S_i$ is an $s$-$t$ separator in $D$ that satisfies the premise of Lemma \ref{lem:easy application}. Hence, we conclude that $|S_i|>\ell$. This completes the proof of the claim. \end{proof}

The above claim allows us to recursively apply our algorithm to compute tight separator sequences on each graph $D_i$ while Claim \ref{clm:first} guarantees a bound on the depth of this recursion. The next claim shows that once we recursively compute a tight separator sequence in each of these digraphs, there is a linear time procedure to combine these sequences to obtain a tight separator sequence in the original graph.

\begin{claim}\label{clm:second}
For each $1\leq i\leq q+1$, let ${\cal L}^i$ denote a tight $s_i$-$t_i$ separator sequence $\{L^i_1,L^i_2,\dots, L^i_{r_i}\}$ of order $k$ in the digraph $D_i$. For each $1\leq i\leq q+1$ and $1\leq j\leq r_i$, let $H^i_j$ denote the set $(L^i_j\setminus \{s_i\})\cup P_i$.
Then, the ordered collection ${\cal H}$ defined as $X_0\cup H^1_1,\dots,$ $X_0\cup H^1_{r_1}, X_1, X_1\cup H^2_1,\dots, X_1\cup H^2_{r_2},\dots$,  $X_{q}, X_{q}\cup H^{q+1}_1,\dots, X_{q}\cup H^{q+1}_{r_{q+1}}$ is a tight $s$-$t$ separator sequence of order $k$ in $D$.
	
\end{claim}

\begin{proof}

Observe that by definition, for each $1\leq i\leq q+1$ and $1\leq j\leq r_i$, the set $H^i_j$ is a subset of $V(D)$. We now proceed to argue that $\cal H$ is a tight $s$-$t$ separator sequence of order $k$ in $D$. In order to do so, we need to prove that it satisfies the 4 conditions in Definition \ref{def:smallest separator sequence}.

\begin{figure}[t]
\begin{center}
  \includegraphics[height=240pt, width=350pt]{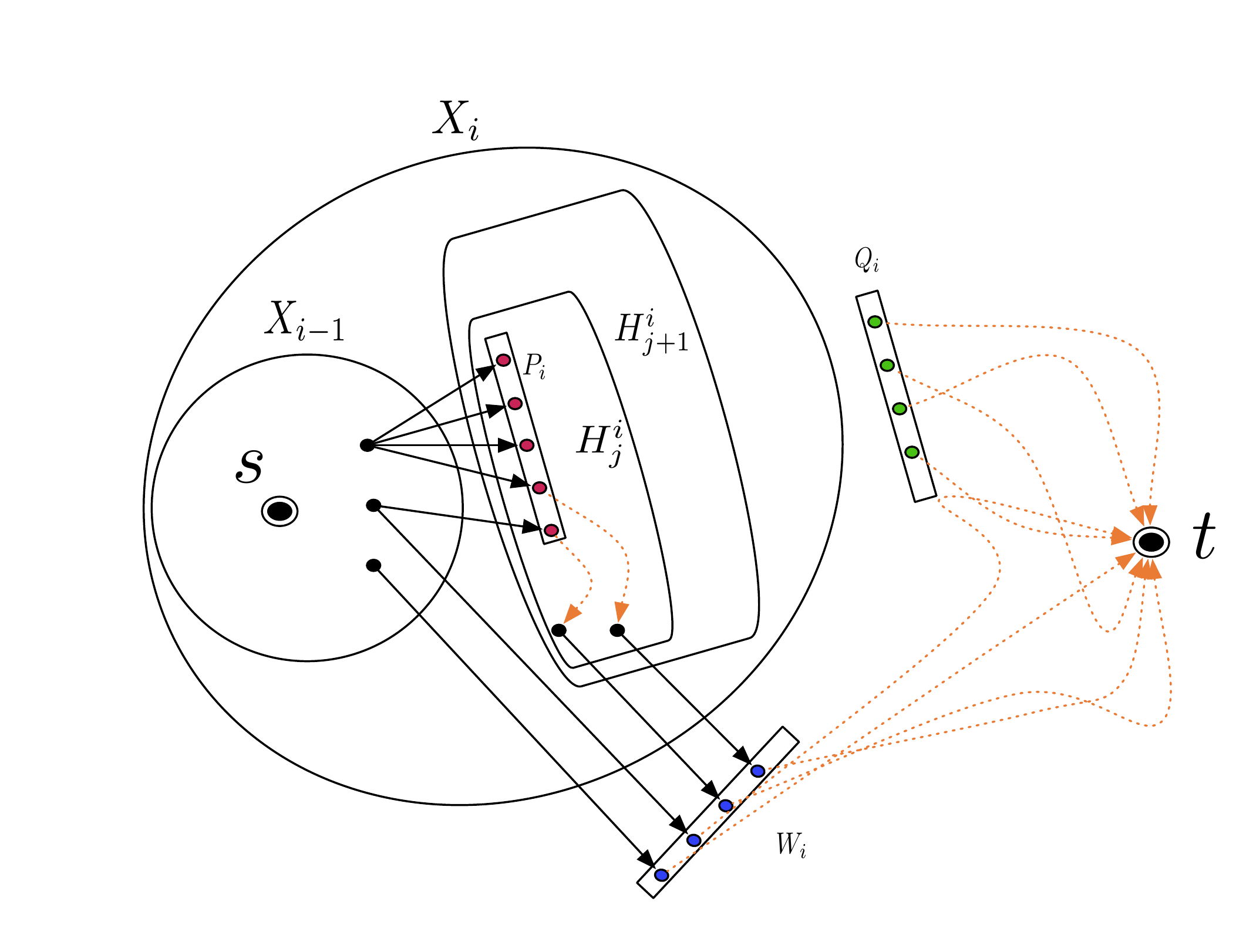}
  \caption{An illustration of the case where the both sets in $\cal H$ under consideration are contained in $Y_i$.}
  \label{fig:case1}
  \end{center}
\end{figure}

We begin by arguing that the collection satisfies the first condition. That is, for any two consecutive sets in $\cal H$ (recall that $\cal H$ is ordered and hence among any pair of consecutive sets there is a well-defined notion of first and second), the first set is a strict subset of the second. For this, we need to consider the following three cases. In the first case, there is an $1\leq i\leq q+1$ and a $1\leq j\leq r_i-1$ such that the two sets under consideration are $X_{i-1}\cup H^i_{j}$ and $X_{i-1}\cup H^i_{j+1}$ (see Figure \ref{fig:case1}). In this case, the property holds because $H^i_{j}\subset H^i_{j+1}$ by our assumption that ${\cal L}^i$ is a tight $s_i$-$t_i$ separator sequence in $D_i$. In the second case, there is an $1\leq i\leq q$ such that the two sets under consideration are $X_{i-1}\cup H^i_{r_i}$ and $X_i$.
In this case, since $X_i= X_{i-1}\cup Y_i$ and $H^i_{r_i}$ is a strict superset of $Y_i$ (by definition of the graph $D_i$), we know that $X_i\supset X_{i-1}\cup H^i_{r_i}$. In the third case, there is an $1\leq i\leq q$ such that the two sets under consideration are $X_i$ and $X_i\cup H^{i+1}_1$. Clearly, the second set contains the first. It only remains to argue that this containment is strict. Since we have already argued that $P_{i+1}$ is non-empty, we conclude that $H^{i+1}_1$ is also non-empty since it contains $P_{i+1}$. This in turn allows us to conclude that $X_i\cup H^{i+1}_{1}$ is a strict superset of $X_i$, thus completing the argument that $\cal H$ satisfies the first condition of Definition \ref{def:smallest separator sequence}.

We now move to the second condition. That is, for each set $H$ in $\cal H$, every vertex in $H$ is reachable from $s$ in the induced digraph $D[H]$ and every vertex in $N^+(H)$ can reach $t$ in $D-H$. Consider a set $H \in {\cal H}$. If $H=X_i$ for some $i\in [q]$, then we are already done since the required property is guaranteed by Lemma \ref{lem:separator layer} (second property). Therefore, we consider the case when there is $1\leq i\leq q+1,1\leq j\leq r_i$ such that $H=X_{i-1}\cup H^i_j$. We know by the second property of Lemma \ref{lem:separator layer} that every vertex in $X_{i-1}$ is reachable from $s$ in $D[X_{i-1}]$ and hence in $D[H]$. Furthermore, by definition, $H^i_j\supseteq P_i$ and the vertices reachable from $s_i$ in $D_i[L^i_j]$ are precisely those vertices in the set $\{x|$ $\exists y\in P_i: y {\xrightarrow\star} x$ in $D[H^i_j]\}$. Since every vertex in $P_i$ has an in-neighbor in $X_{i-1}$ by definition, we conclude that every vertex in $P_i$ is reachable from $s$ in the digraph $D[X_i\cup P_i]$ and since by the above observation every vertex in $H^i_j$ is reachable from a vertex in $P_i$ in the graph $D[H^i_j]$, we conclude that every vertex in $H$  is reachable from $s$ in the digraph $D[H]$.

 Similarly, by the second property of Lemma \ref{lem:separator layer}, every vertex in $N^+(X_i)$ can reach $t$ in $D-X$. Since $Q_i,W_i\subseteq N^+(X_i)$, we infer that every vertex in $Q_i,W_i$ can reach $t$ in $D-X_i$. 
 Furthermore, every vertex of $N^+(H^i_j)$ can reach $t_i$ in the digraph $D_i-H^i_j$. This is due to our assumption that ${\cal L}^i$ is a tight $s_i$-$t_i$ separator sequence in $D_i$. 
 Since we have already argued that every vertex in $W_i$ can reach $t$ in $D-X_i$, it is also the case that every vertex in $W_i$ can reach $t$ in $D-H$ as $H\subseteq X_i$. Since $W_i\subseteq N^+(X_i)$ by definition, we conclude that every vertex in the set $N^+(H)\cap W_i$ can reach $t$ in the digraph $D-H$. It remains to argue the same for the vertices in $N^+(H)\setminus W_i$.

Note that the set $N^+(H)\setminus W_i$ is contained in $Y_i$. If this were not the case and $N^+(H)\setminus W_i$ contained a vertex in $V(D)\setminus N^+[X_i]$ then any such vertex would be in $W_i$ by definition. On the other hand, if $N^+(H)\setminus W_i$ contained a vertex in $Q_i$ then the set $L^i_j$ would have $t_i$ as an out-neighbor, which is a contradiction to our assumption that ${\cal L}^i$ is a tight $s_i$-$t_i$ separator sequence in $D_i$.

But this implies that $N^+(H)\setminus W_i=N^+(H^i_j)=N^+(L^i_j)$. Observe that $N^+_{D_i}(L^i_j)=N^+_D(L^i_j)$ and hence we ignore the explicit reference to the digraph in which we consider the out-neighborhood of $L^i_j$. Since ${\cal L}^i$ is a tight $s_i$-$t_i$ separator sequence in $D_i$, we know that every vertex in $N^+(L^i_j)$ can reach $t_i$ in $D_i-L^i_j$. Since $H^i_j=(L^i_j\setminus \{s_i\}) \cup P_i$, we have that $N^+(H^i_j)=N^+(L^i_j)$ and therefore, 
every vertex in $N^+(H^i_j)$ can reach $Q_i$ in the digraph $D[Y_i\cup Q_i]-H^i_j$. Combining this with the fact that every vertex in $Q_i$ can reach $t$ in $D-X_i$, we conclude that every vertex in $N^+(H)\setminus W_i$ also can reach $t$ in $D-H$. This completes the argument for the second condition.

%
%

For the third condition, we need to argue that for each $H\in {\cal H}$ the size of the set $N^+(H)$ is at most $k$. Again, for each $H\in {\cal H}$ such that $H=X_i$ for some $i\in [q]$, we are already done. Now, consider a set $H \in {\cal H}$ and let $1\leq i\leq q-1,1\leq j\leq r_i$ be such that $H=X_{i-1}\cup H^i_j$. Recall that $W_i$ is the set of those vertices in $N^+(X_{i-1})$ that are not in $Y_i$ and hence $W_i\subseteq N^+[H]$. Also, for any vertex  $u\in N^+(H)$ such that $u\notin W_i$, it must be the case that $u$ is already in  $N^+(H^i_j)$. However, it follows from the definition of $D_i$ and $H^i_j$ that the set $N^+(L^i_j)$ is in fact the same as $W_i\cup H^i_j$. This implies that $N^+(H)=N^+(L^i_j)$ which has size at most $k$ due to our assumption that ${\cal L}^i$ is a tight $s_i$-$t_i$ separator sequence of order $k$ in $D_i$.

The final condition has two parts. First, we need to prove that for any 2 consecutive sets $H_1$ and $H_2$ in $\cal H$ (where $H_1$ appears before $H_2$ in the ordered collection), there is no $s$-$t$ separator of size at most $k$ that is contained in the set $H_2\setminus N^+[H_1]$. Secondly, we also need to prove that there is no $s$-$t$ separator of size at most $k$ that is disjoint from $N^+[X_q\cup H^{q+1}_{r_{q+1}}]$. We begin with the first part. Since $H_1$ occurs before $H_2$ in the ordering, our earlier arguments guarantee that $H_1\subset H_2$. 
Let $S$ be an arbitrary set contained in $H_2\setminus N^+[H_1]$ of size at most $k$. We will argue that $S$ cannot be an $s$-$t$ separator in $D$. Let $i\in [q+1]$ be the least value such that $S\subseteq X_i\setminus X_{i-1}$. The definition of ${\cal H}$ guarantees the existence of such an $i$. The definition of the sets $P_i,Q_i$ and the digraph $D_i$ implies that if $S$ is an $s$-$t$ separator in $D$ then it is an $s_i$-$t_i$ separator in $D_i$. We now consider the following three cases for the sets $H_1$ and $H_2$ and assuming that $S$ is an $s_i$-$t_i$ separator in $D_i$, obtain a contradiction in each case.

In the first case, 
there is a $1\leq j\leq r_i-1$ such that $H_1=X_{i-1}\cup H^i_{j}$ and $H_2=X_{i-1}\cup H^i_{j+1}$. In this case, we claim that $S$ is contained in the set $L^i_{j+1}\setminus N^+[L^i_j]$. Indeed, since $S\subseteq (H_2\setminus N^+[H_1])$ and $H_2\setminus N^+[H_1]$ is the same as $H^i_{j+1}\setminus N^+[H^i_j]$, we know that $S\subseteq L^i_{j+1}\setminus N^+[L^i_j]$. Thus we have concluded that $S$ is an $s_i$-$t_i$ separator in $D_i$ which is contained in the set $L^i_{j+1}\setminus N^+[L^i_j]$. This contradicts our assumption that ${\cal L}^i$ is a tight $s_i$-$t_i$ separator sequence of order $k$.

In the second case,  $H_1=X_{i-1}$ and $H_2=X_{i-1}\cup H^i_{1}$. In this case, the same argument as that above shows that $S$ is contained in the set $L^i_{1}$. However, our assumption that ${\cal L}^i$ is a tight $s_i$-$t_i$ separator sequence of order $k$ in $D_i$ implies that $S$ cannot be an $s_i$-$t_i$ separator, a contradiction.

In the third and final case, $H_1=X_{i-1}\cup H^i_{r_i}$ and $H_2=X_i$. In this case, observe that $S$ is disjoint from the set $N^+_{D_i}[L^i_{r_i}]$. But the second part of the final condition in Definition \ref{def:smallest separator sequence} applied to ${\cal L}^i$ implies that $S$ cannot be an $s_i$-$t_i$ separator in $D_i$, a contradiction. This completes the argument for the third case.

Having thus completed the argument for the first part of the final condition, we now conclude the proof of the claim by arguing the second part. That is, there is no $s$-$t$ separator of size at most $k$ disjoint from the set $N^+_D[X_q\cup H^{q+1}_{r_{q+1}}]$. Suppose to the contrary that $S$ is an $s$-$t$ separator of this kind. Clearly, $S$ is an $s_{q+1}$-$t_{q+1}$ separator of size at most $k$ in $D_{q+1}$. Moreover, since it is disjoint from $N^+_D[X_q\cup H^{q+1}_{r_{q+1}}]$, it follows that it is disjoint from $N^+_{D_{q+1}}[L^{q+1}_{r_{q+1}}]$. However, this contradicts our assumption that ${\cal L}^{q+1}$ is a tight $s_{q+1}$-$t_{q+1}$ separator sequence of order $k$ (by violating the second part of the fourth condition in Definition \ref{def:smallest separator sequence}).
This completes the proof of the claim.
\end{proof}

We now use the claims above to complete the proof of the lemma. We describe the complete algorithm.\\

\myparagraph{Description of the algorithm.}We begin by running the algorithm of Lemma \ref{lem:separator layer} on the graph $D$ with $s$ and $t$ the same as those in the premise of the lemma. If this subroutine concludes that there is no $s$-$t$ separator of size at most $k$ in $D$ then we return the same. Otherwise,  the subroutine returns the sets $X_1,X_2\setminus X_1, \dots, X_q\setminus X_{q-1}$ corresponding to the collection ${\cal X}=\{X_1,\dots, X_q\}$. We define $X_{q+1}$ to be the set $R(s,\emptyset)\setminus \{t\}$.

 Having computed the sets $X_1,\dots, X_{q+1}$, for each $1\leq i\leq q+1$ we compute the graph $D_i$, and recursively compute the sets $L^i_1,L^i_2\setminus L^i_1,\dots, L^i_{r_i}\setminus L^i_{r_{i}-1} $
corresponding to a  tight $s_i$-$t_i$ separator sequence ${\cal L}^i=\{L^i_1,L^i_2,\dots, L^i_{r_i}\}$ of order $k$ in the graph $D_i$. At this point, we note a subtle computational simplification we use. In order to compute ${\cal L}^i$, for those $D^i$s where $W_i\neq \emptyset$, we can invoke Lemma \ref{lem:induced tight sequence} and compute a tight $s_i$-$t_i$ separator sequence of order $k-|W_i|$ in the graph $D_i-W_i$. As a result, we never actually need to construct the entire graph $D_i$ as defined earlier. Instead it suffices to construct $D_i-W_i$. The reason behind this is that we can now consider the arcs in the graphs $D_1,\dots, D_{q+1}$ to be a partition of a  subset of the arcs in $D$.

For each $1\leq i\leq q+1$ and $1\leq j\leq r_i$, let $H^i_j$ denote the set $(L^i_j\setminus \{s_i\})\cup P_i$. We output the sets $H^1_1,H^1_2\setminus H^1_1, \dots, H^1_{r_1}\setminus H^1_{r_1-1}$, $X_1\setminus H^1_{r_1-1}$, $H^2_1, H^2_2\setminus H^2_1, \dots, H^2_{r_2}\setminus H^2_{r_2-1}$, $X_2\setminus H^2_{r_2-1}$, $\dots$
which correspond (by Claim \ref{clm:second}) to a tight $s$-$t$ separator sequence ${\cal H}=H^1_1,\dots,$ $H^1_{r_1}, X_1, X_1\cup H^2_1,\dots, X_1\cup H^2_{r_2},\dots$,  $X_{q}, X_{q}\cup H^{q+1}_1,\dots, X_{q}\cup H^{q+1}_{r_{q+1}}$ or order $k$. 
Since the correctness is a direct consequence of Claim \ref{clm:second}, we now proceed to the running time analysis.\\

\myparagraph{Running time.} We analyse the running time of this algorithm in terms of $k,m$ and $\lambda_D(s,t)$. We let $T(k,\lambda,m)$ denote the running time of the algorithm when $\lambda=\lambda_D(s,t)$. If $\lambda>k$, then $T(k,\lambda,m)=\bigoh(k m)$. This is because in this case, we only require a single execution of the algorithm of Lemma \ref{lem:separator layer} to conclude that $k<\lambda$. Otherwise, the description of the algorithm clearly implies the following recurrence.

$$T(k,\lambda,m)=\bigoh(\lambda m)+ \sum_{i=1}^{q+1}T(k,\lambda_i,m_i)$$

where $\lambda_i=\lambda_{D_i}(s_i,t_i)$ and $m_i$ denotes the number of arcs in $D_i$. Note that $m\geq\sum_{i=1}^{q+1}m_i$. 
The $\bigoh(\lambda m)$ term includes the time required to execute the algorithm of Lemma \ref{lem:separator layer} as well as the time required to compute the graphs $D_1,\dots, D_{q+1}$.
Now, due to Claim \ref{clm:first}, we have that $\lambda_i>\lambda$ for each $i\in [q+1]$. Unrolling the recurrence with $\lambda>k$ being the base case, the claimed running time follows. This completes the proof of the lemma.	
\end{proof}

Having completed the description of our main algorithmic subroutine, we now proceed to complete this subsection by describing a pair of subroutines performing computations on tight separator sequences. In this first lemma, we argue that given the output of Lemma \ref{lem:find_tight_sequence}, one can, in linear time find a pair of consecutive separators in the sequence where the first is l-light and the second one is not. The output of this lemma will form an `extremal' point of interest in our algorithm for {\sc $\cQ$-Deletion($\epsilon$)}.

\begin{lemma}\label{lem:linear_light}
Let $Q=(D,R_1,\dots, R_\ell)$ be an $\epsilon$-structure where $D$ is  strongly connected. Let $u,v\in V(D)$, $k\in {\mathbb N}$. 
 Let ${\cal H}=\{H_1,\dots,H_q\}$ be a tight $u$-$v$ separator sequence in $D$ with the algorithm of Lemma \ref{lem:find_tight_sequence} returning the sets $H_1,H_2\setminus H_1,\dots, H_q\setminus H_{q-1}$. There is an algorithm that, given $D,u,v,k$ and these sets, runs in time $\bigoh(k|Q|)$ and computes the least $i$ for which the separator $N^+(H_i)$ is l-light and the separator $N^+(H_i)$ is not l-light (and consequently is r-light)  or correctly concludes that there is no such $1\leq i\leq q$.	
\end{lemma}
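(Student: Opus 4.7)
The plan is to reduce the problem to locating the transition point of a monotone integer sequence, and then to exploit the incremental structure of the tight separator sequence to do the bookkeeping in linear time.

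My first step will be to establish the identity $R(u, N^+(H_i)) = H_i$ for every $i \in [q]$. By Property~2 of Definition~\ref{def:smallest separator sequence}, every vertex of $H_i$ is reachable from $u$ within $D[H_i]$, which gives $H_i \subseteq R(u, N^+(H_i))$. Conversely, $N^+(H_i)$ contains every out-neighbor of $H_i$ that lies outside $H_i$, so no vertex of $V(D) \setminus (H_i \cup N^+(H_i))$ can be reached from $u$ in $D - N^+(H_i)$. Hence $|Q[R(u, N^+(H_i))]| = |Q[H_i]|$, and the predicate ``$N^+(H_i)$ is l-light'' is equivalent to $|Q[H_i]| \le \tfrac{1}{2}|Q|$. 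Since $H_1 \subset H_2 \subset \cdots \subset H_q$, the sequence $(|Q[H_i]|)_{i=1}^q$ will be monotone non-decreasing, so the set of indices $i$ at which $N^+(H_i)$ is l-light forms a prefix of $[q]$. The desired index $i^*$ is the maximum element of this prefix; by maximality $N^+(H_{i^*+1})$ will fail to be l-light, and since the disjoint sets $H_{i^*+1}$ and $V(D) \setminus (H_{i^*+1} \cup N^+(H_{i^*+1}))$ satisfy $|Q| \ge |Q[H_{i^*+1}]| + |Q[NR(u, N^+(H_{i^*+1}))]|$, that separator will automatically be r-light.

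To find $i^*$ in linear time, I would first assign to each vertex $v$ a layer $\ell(v) \in \{1,\ldots,q\} \cup \{\infty\}$ equal to the smallest $i$ with $v \in H_i$ (setting $\ell(v) = \infty$ if $v \notin H_q$). Since the input explicitly provides the disjoint sets $H_1, H_2 \setminus H_1, \ldots, H_q \setminus H_{q-1}$, a single pass through them will compute all vertex layers in $O(n)$ time. Next, for each arc $(a,b) \in A(D)$ I would compute its arc-layer $\max\{\ell(a),\ell(b)\}$, and for each tuple of each relation $R_j$ its tuple-layer, defined as the maximum layer over its entries; these are precisely the smallest indices $i$ for which the arc (respectively tuple) is entirely contained in $H_i$. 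This pass will cost $O(m + \epsilon \sum_j |R_j|) = O(|Q|)$. Bucketing by layer will then yield the increments $\Delta_i := |Q[H_i]| - |Q[H_{i-1}]|$ (with $H_0 := \emptyset$), and a final sweep will accumulate the prefix sums $|Q[H_i]|$ and return the largest $i$ satisfying $|Q[H_i]| \le \tfrac{1}{2}|Q|$, or report that no such $i$ exists if the prefix is empty.

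The overall running time will be $O(|Q|)$, which comfortably meets the $O(k|Q|)$ bound stated in the lemma; the factor of $k$ is merely slack inherited from the surrounding framework. The only genuinely new step beyond routine bookkeeping is the identity $R(u, N^+(H_i)) = H_i$, which converts the predicate on $N^+(H_i)$ into the monotone quantity $|Q[H_i]|$. I do not anticipate a real obstacle here: once monotonicity is in place, what remains is a standard layered linear-time sweep.
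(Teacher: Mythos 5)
Your proof is correct and follows essentially the same approach as the paper: assign layer labels to vertices, use the identity $R(u, N^+(H_i)) = H_i$ to convert the lightness test on the separator $N^+(H_i)$ into a size test on $Q[H_i]$, observe that $|Q[H_i]|$ is monotone in $i$ by heredity of induced substructures, and locate the transition index by a single sweep. Your max-of-endpoints bucketing of arcs and tuples is a slight streamlining of the paper's incremental count (and in fact achieves $\bigoh(|Q|)$ rather than $\bigoh(k|Q|)$), but the underlying ideas coincide; the only tiny gap is that ``no such $i$'' should also be reported when \emph{every} $N^+(H_i)$ is l-light, not just when none is, though this case never arises in the one place the lemma is invoked.
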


\begin{proof}
	Given the sets $H_1,H_2\setminus H_1,\dots, H_q\setminus H_{q-1}$ we label the vertices of $V(D)$ in the following way with elements from $\{1,\dots,q\}$. We set $H_0=\{u\}$,  $H_{q+1}=V(D)$ and for each $i\in \{0,\dots,q\}$, we label the vertices of $H_{i+1}\setminus H_i$ with the label $i+1$. We denote the label of a vertex $w$ by $lab(w)$. Observe that any vertex with label $i$ has at most $k$ out-neighbors whose labels are greater than $i$. Therefore, for every vertex of label $i$, all but $k$ of its out-neighbors are labelled $i$ or less. Finally, we assume that for each $H_i$ we have marked the set of at most $k$ vertices in $N^+[H_i]$.	This can be done in time $\bigoh(m)$ by performing a directed bfs from $u$ and marking a vertex $w$ as being in the set $N^+[H_i]$ for the least $i$ such that $i$ is less than label of $w$ and $w$ has an in-neighbor with label $i$. It then follows from the definition of $\cal H$ that $w$ is in the set $N^+[H_j]$ for every $i\leq j\leq lab(w)$. This is the reason why we only keep track of the earliest $i$ for which $w\in N^+[H_i]$.
	We now proceed to design the claimed algorithm.
	
	We begin by iterating $i$ from 0 to $q$ and compute the number of arcs contained strictly inside each $H_i$, a number we denote by $L_i$. We do this as follows. Since $H_0=\{u\}$, $L_0$ is trivially 0. Therefore, we begin by examining the set $H_1$ and compute $L_1$. For any $i>1$, assuming we have already computed $L_{i-1}$, we now describe the computation of $L_i$. We iterate over the vertices in $H_{i+1}\setminus H_i$ and for each vertex $w$ in this set we count the number of arcs which have $w$ as a tail and have as the head any vertex except the at most $k$ of  $N^+(H_i)$ which have already been marked. If $w$ was a vertex marked as $N^+(H_{i-1})$ then we also count the set of arcs with $w$ as a head and having as a tail a vertex which is labelled at most $i-1$. It is clear that this algorithm computes the numbers $L_1,\dots, L_q$ correctly. 
	  Observe that every arc of $D$ is examined at most $2k$ times in the entire procedure. Thus, in time $\bigoh(km)$, we will have computed the size of the set $L_i$ for every $i$ from 0 to $q$.
	
	 For each $x\in [\ell]$ and $y\in \{0,\dots, q\}$, let $J^x_y$ denote the number of tuples of $R_x$ which are contained in the substructure of $Q$ induced by $H_y$. Clearly, if we compute all numbers $J^x_y$ in the required time, then the claimed algorithm follows. But recall that we have already labeled vertices of $H_{i+1}\setminus H_i$ with the label $i+1$ for every $i\in \{0,\dots,q\}$. Therefore for any $x\in [\ell]$ and any tuple in $R_x$, the least $y$ for which this tuple is to be counted towards $J^x_y$ is the \emph{largest} label among the elements in this tuple. Since this only requires a single linear search among the vertices in each tuple which requires a total time of $\bigoh(|Q|)$, the lemma follows. 	
	\end{proof}

The next lemma gives a linear time subroutine that checks whether the substructure induced on the set $H_2\setminus N^+[H_1]$ is in $\cQ$,  for a pair $H_1,H_2$ of consecutive sets in the tight separator sequence computed by the algorithm of Lemma \ref{lem:find_tight_sequence}.

\begin{lemma}
	\label{lem:check_cycle}
	Let $Q$ be an $\epsilon$-structure and let $D$ be the digraph in $Q$ where $D$ is  strongly connected. Let $u,v\in V(D)$ and $k\in {\mathbb N}$. Let ${\cal H}=\{H_1,\dots,H_q\}$ be a tight $u$-$v$ separator sequence in $D$ with the algorithm of Lemma \ref{lem:find_tight_sequence} returning the sets $H_1,H_2\setminus H_1,\dots, H_q\setminus H_{q-1}$. There is an algorithm that, given $Q,u,v,k$ and these sets, runs in time $\bigoh(k|Q|)$ and computes the least $i$ for which the substructure $Q[H_{i+1}\setminus N^+[H_{i}]]$ is not in $\cQ$ or correctly concludes that there is no such $1\leq i\leq q-1$.
\end{lemma}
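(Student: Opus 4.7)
My plan is to adapt the label-based bookkeeping introduced in the proof of Lemma~\ref{lem:linear_light} so that the induced substructures $Q[H_{i+1}\setminus N^+[H_i]]$ can be identified and tested in essentially linear total time. Recall that the vertices of $V(D)$ have already been labeled so that $lab(w)=j$ exactly when $w \in H_j\setminus H_{j-1}$ (using the convention $H_0=\{u\}$, $H_{q+1}=V(D)$). The first step is to augment this preprocessing by computing, for each vertex $w$, the quantity $\mu(w)=\min_{v\in N^-(w)} lab(v)$, with $\mu(w)=+\infty$ when $w$ has no in-neighbor. This single pass over all arcs costs $\bigoh(m)$ and lets us characterize membership in the relevant sets cleanly: a vertex $w$ lies in $A_i := H_{i+1}\setminus N^+[H_i]$ if and only if $lab(w)=i+1$ and every in-neighbor of $w$ has label at least $i+1$, i.e., $\mu(w)\ge lab(w)$. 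The second condition is independent of $i$, so I would record a single Boolean $good(w)$ for each vertex.

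Next I would extract the induced substructures $Q[A_i]$ for all $1 \le i \le q-1$ simultaneously by one pass through the arcs and one pass through the tuples of each relation. An arc $(a,b)$ contributes to $Q[A_i]$ precisely when $good(a)$, $good(b)$, and $lab(a)=lab(b)=i+1$ all hold; a tuple of $R_x$ contributes when all its at most $\epsilon$ components satisfy these three conditions with a common label. In either case the test takes constant time per arc or $\bigoh(\epsilon)$ time per tuple, and the correct index $i=lab(a)-1$ is read off directly. Crucially, since $A_i \subseteq H_{i+1}\setminus H_i$ and the shells $H_{i+1}\setminus H_i$ partition $V(D)$, the sets $A_i$ are pairwise disjoint, and hence every arc and every tuple contributes to at most one of the substructures $Q[A_i]$. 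This already yields $\sum_{i=1}^{q-1}|Q[A_i]| \le |Q|$, and the whole extraction phase finishes in $\bigoh(|Q|)$ time.

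Finally I would iterate $i$ from $1$ to $q-1$ in order and invoke the linear-time recognition algorithm for $\cQ$ on each substructure $Q[A_i]$, halting and returning $i$ at the first rejection; if no rejection occurs, report that no such $i$ exists. Because the recognition algorithm is linear in its input and the total input across all calls is bounded by $|Q|$, the total cost of this phase is $\bigoh(|Q|)$. Combined with the $\bigoh(m)$ preprocessing, the overall runtime is $\bigoh(|Q|)$, which is comfortably within the stated $\bigoh(k|Q|)$ bound. I don't foresee a serious obstacle: the only point that needs a touch of care is translating ``outside $N^+(H_i)$'' into the label-based test $\mu(w)\ge lab(w)$, which follows immediately from the fact that $H_i$ is downward-closed under $lab$ and that $w \in N^+(H_i)$ iff $w$ has some in-neighbor with label at most $i$.
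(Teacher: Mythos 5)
Your proof is correct and follows essentially the same strategy as the paper's: label each vertex by its shell (which $H_{i+1}\setminus H_i$ it lies in), observe that $H_{i+1}\setminus N^+[H_i]$ consists exactly of the label-$(i+1)$ vertices having no in-neighbor with a smaller label, extract the induced substructures on these disjoint sets, and run the linear-time recognizer on each. The paper phrases the membership test via precomputed markings of the (at most $k$) boundary vertices of each $H_i$ and charges each arc $\bigoh(k)$ examinations, landing at $\bigoh(k|Q|)$; your reformulation via the single quantity $\mu(w)=\min_{v\in N^-(w)} lab(v)$ makes the membership test constant-time per vertex, the sets $A_i$ are pairwise disjoint so each arc and tuple is handled once, and the whole thing runs in $\bigoh(|Q|)$ --- a slight tightening of the stated bound, obtained for free. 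One small point worth spelling out: $u$ belongs to every $H_i$ and so receives no label under the convention ``$lab(w)=j$ iff $w\in H_j\setminus H_{j-1}$''; you should explicitly set $lab(u)=0$ so that $\mu$ is well-defined for out-neighbors of $u$ and the characterization $w\in A_i \Leftrightarrow lab(w)=i+1 \wedge \mu(w)\ge lab(w)$ goes through without exception.
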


\begin{proof} The proof of this lemma is similar to that of the previous lemma. Given the sets $H_1,H_2\setminus H_1,\dots, H_q\setminus H_{q-1}$ we label the vertices of $V(D)$ in the following way with elements from $\{1,\dots,q\}$. We set $H_0=\{u\}$,  $H_{q+1}=V(D)$ and for each $i\in \{0,\dots,q\}$, we label the vertices of $H_{i+1}\setminus H_i$ with the label $i+1$. For each $0\leq i\leq q$, we do a directed bfs/dfs on the set of vertices which are labeled $i$ but not marked as being part of the set $N^+(H_j)$ for some $j<i$. Since each arc is examined $\bigoh(k)$ times, the time bound follows.	
\end{proof}

Having set up all the required definitions as well as the subroutines needed for the main lemma, we now proceed to its proof.

\subsection{Proving the main lemma}

We complete this section by proving Lemma \ref{lem:crux}. We begin by restating the lemma here.

%

\cruxlemma*

\begin{proof}
We execute the algorithm of Lemma \ref{lem:find_tight_sequence} to either conclude that there is no $u$-$v$ separator of size at most $p$ or compute a tight $u$-$v$ separator sequence of order $p$.
	If this algorithm concludes that there is no $u$-$v$ separator of size at most $p$ in $D$, then we return the same.
	 Hence, we may assume that the subroutine returns sets $H_1,H_2\setminus H_1,\dots, H_q\setminus H_{q-1}$ corresponding to a tight $u$-$v$ separator sequence  ${\cal H}=\{H_1,\dots,H_q\}$ of order $p$.

We let $Z_i$ denote the set $N^+(H_i)$ for each $1\leq i\leq q$ and focus our attention on the sets $Z_1$ and $Z_q$ (which are not necessarily distinct). 
We begin by studying the set $Z_1$. If $Z_1$ 	
	 is dual-good  then setting $S=Z_1$ satisfies  Property 2. This is because we started with a strongly connected graph and by the definiton of dual-goodness both substructures $Q[R(u,Z_1)]$ and $Q[NR(u,Z_1)]$ are not in $\cQ$.
	 Similarly, if $Z_1$ is completely-good, then setting $S=Z_1$ satisfies Property 1.
	Now, suppose that $Z_1$ is $r$-good. It follows from Definition \ref{def:smallest separator sequence} that there is no $u$-$v$ separator of size at most $p$ contained entirely in the set $R(u,Z_1)$.
Then, by Lemma \ref{lem:split_solution}, if $Q$ has a deletion set into $\cQ$ of size at most $p$ disjoint from $\{u,v\}$ then $Q-Z_1$ has a deletion set into $\cQ$ of size at most $p-1$ and hence we set $S=Z_1\cup \{u,v\}$ and we satisfy Property 4. 
	Therefore, going forward, we assume that $Z_1$ is $l$-good. That is, the substructure $Q[H_1]$ is in $\cQ$. Note that given $Z_1$, this check can be performed in time $\bigoh(|Q|)$.	
	
	We have a symmetric argument for $Z_q$. That is, if $Z_q$ is dual-good or completely good then setting $S=Z_q$ satisfies Properties 2 and 1 respectively. Otherwise, if $Z_q$ is $l$-good, then by Definition \ref{def:smallest separator sequence} we know that there is no $u$-$v$ separator of size at most $p$ contained entirely in the set $NR(u,Z_q)$ and by Lemma \ref{lem:split_solution}, if $Q$ has a deletion set into $\cQ$ of size at most $p$ disjoint from $\{u,v\}$ then $Q-Z_q$ has a deletion set into $\cQ$ of size at most $p-1$ and hence we set $S=Z_q\cup \{v,u\}$ and we are done. Therefore, from this point on, we assume that $Z_q$ is $r$-good. That is, the substructure induced on $NR(u,Z_q)$ is in $\cQ$. Again checking which one of these cases hold can be done in time $\bigoh(|Q|)$.

	We now examine each of the sets $H_1,H_2\setminus H_1,\dots, H_q\setminus H_{q-1}$ and check if for any $i$, the digraph $D[H_{i+1}\setminus N^+[H_i]]$ is not in $\cQ$. This procedure can be performed in time $\bigoh(k|Q|)$ due to Lemma \ref{lem:check_cycle}. We now have 2 cases.

	In the first case, suppose that the subroutine returned an index  $1\leq i\leq q-1$ such that the substructure $Q[H_{i+1}\setminus N^+[H_i]]$ is not in $\cQ$. We now study the sets $Z_i$ and $Z_{i+1}$. By definition, it cannot be the case that $Z_i$ is $r$-good or $Z_{i+1}$ is $l$-good. Also, if either $Z_i$ or $Z_{i+1}$ is dual-good or completely-good (which can be checked in linear time) then we are done in a manner similar to that discussed earlier by setting $S=Z_i$ or $S=Z_{i+1}$. Hence, we may assume that $Z_i$ is $l$-good and $Z_{i+1}$ is $r$-good. Now, let $S=Z_i\cup Z_{i+1}\cup \{u,v\}$. Clearly, $|S|\leq 2p+2$. It remains to prove that $S$ satisfies one of the properties in the statement of the lemma. Precisely, we will prove that if $Q$ has a deletion set into $\cQ$ of size at most $p$ then  $Q-S$ has a deletion set into $\cQ$ of size at most $p-1$, that is, $S$ satisfies Property 4. 
	
	Let $X$ be a deletion set into $\cQ$ for $Q$  of size at most $p$. If $u\in X$ or $v\in X$, then we are already done. Therefore, assume that $u,v\notin X$. We claim that $X'=X\cap (H_{i+1}\setminus N^+[H_i])$ is in fact a deletion set into $\cQ$ for $Q-S$. This is because, any strongly connected component  in $D-S$ which induces a structure \emph{not in} $\cQ$ must be contained entirely within $R(u,Z_i)$ or $H_{i+1}\setminus N^+[H_i]$ or $NR(u,Z_{i+1})$. Since $Z_i$ is $l$-good and $Z_{i+1}$ is $r$-good, the substructures induced by the first and third sets are in $\cQ$. Therefore, any strongly connected component in $D-S$ which induces a structure not in $\cQ$ lies entirely in the set $H_{i+1}\setminus N^+[H_i]$. Since $X$ is a deletion set into $\cQ$ for $Q$, it follows that $X'$ is a deletion set into $\cQ$ for $Q-S$. We now claim that $X'\subset X$ and hence has size at most $|X|-1$. Suppose that this is not the case and  $X'=X$. By the premise of the lemma, we know that $X$ is a $u$-$v$ separator and hence we obtain a  contradiction to our assumption that $\cal H$ is a tight-separator sequence (violates condition 4 in Definition \ref{def:smallest separator sequence}). This is because $X$ itself will be a $u$-$v$ separator of size at most $p$ which is contained in the set $H_{i+1}\setminus N^+[H_i]$. This completes the argument for the case when the subroutine returns an  $1\leq i\leq q-1$ for which the substructure $Q[H_{i+1}\setminus N^+[H_i]]$ is not in $\cQ$. Henceforth, we will assume that for every $1\leq i\leq q-1$, the substructure $Q[H_{i+1}\setminus N^+[H_i]]$, denoted by $\hat Q_i$ is in $\cQ$.

	We now revisit the separators $Z_1$ and $Z_q$. Recall that $Z_1$ is $l$-good and $Z_q$ is $r$-good. Now, suppose that $Z_1$ is $r$-light. That is, the size of the substructure of $Q$ induced by the set $V(D)\setminus H_1$ is at most $\frac{1}{2}|Q|$. Then, we set $S=Z_1$. Observe that since $Q[H_1]$ is in $\cQ$, every  strongly connected component of $D-S$ which induces a structure not in $\cQ$ \emph{must} lie in the set $V(D)\setminus H_1$ and hence setting $Q_1=Q[V(D)\setminus H_1]$ and $Q_2=Q[H_1]$ satisfies Property 3. A symmetric argument holds if $Z_q$ is $l$-light. Therefore, we conclude that $Z_1$ is 
	     not $r$-light and $Z_2$ is not $l$-light. Therefore, $Z_1$ is 
	     $l$-light and $Z_2$ is $r$-light.

%
%

	Due to the monotonicity lemma (Lemma \ref{lem:monotone}), we know that there is an $i\geq 1$ such that $Z_i$ is $l$-light, $Z_{i+1}$ is not $l$-light (and so is $r$-light), and for all $j\leq i$, $Z_j$ is $l$-light and for all $j>i$, $Z_j$ is not $l$-light.
	We examine the sets in $\cal H$ and find this index $i$. That is, $Z_i$ is $l$-light and $Z_{i+1}$ is $r$-light. This can be done in linear time due to Lemma \ref{lem:linear_light}.


  If either of $Z_i$ or $Z_{i+1}$ is dual-good or completely good then we are done as argued earlier. So, we assume that each of $Z_i$ and $Z_{i+1}$ is either $l$-good or $r$-good.
     
     If $Z_{i+1}$ is $l$-good then setting $S=Z_{i+1}\cup \{v\}$ satisfies Property 3.
     Similarly, if $Z_{i}$ is $r$-good then setting $S=Z_{i}\cup \{v\}$ satisfies Property 3. It remains to handle the case when $Z_i$ is $l$-good and $Z_{i+1}$ is $r$-good. However, in this case, we claim that $Z_i\cup Z_{i+1}$ is in fact a deletion set for $Q$. Observe that any  strongly connected component of $D-(Z_i\cup Z_{i+1})$ which induces a structure not in $\cQ$ lies entirely in one of the sets $H_i$ or $H_{i+1}\setminus N^+[H_i]$ or $V(D)\setminus N^+[H_{i+1}]$. Since $\cQ$ is rigid, we only need to consider the strongly connected components of $D-(Z_i\cup Z_{i+1})$.
     The first and third sets induce  structures in $\cQ$ because $Z_i$ is $l$-good and $Z_{i+1}$ is $r$-good. The second set induces a structure in $\cQ$ because we have already argued that for every $1\leq j\leq q-1$, the substructure $Q[H_{j+1}\setminus N^+[H_j]]$, must be in $\cQ$.

 Therefore, we conclude that $Z_i\cup Z_{i+1}$ is a deletion set into $\cQ$ for $Q$ and setting $S=Z_i\cup Z_{i+1}\cup \{u,v\}$ satisfies Property 1. 
     This completes the proof of the lemma.
\end{proof}


\section{Conclusions}

We have presented the first linear-time {\FPT} algorithm for the classical {\dfvsfull} problem. For this, we introduced a new separator based iterative shrinking approach that either reduces the parameter or reduces the size of the instance by a constant fraction. We showed that our approach can be extended  to the directed version  of the {\sc Subset Feedback Vertex Set} ({\sc Subset FVS}) problem as well as to the {\sc Multicut} problem. As a result, any linear-time {\FPT} algorithm for the \emph{compression} version of these problems can be converted to one for the general problem as well. Furthermore, we have shown that \emph{any} further improvements in the running time of the compression routine for these problems can be directly lifted to the general problem.
Note that in the case of {\sfvs} on undirected graphs, the best known \emph{deterministic} algorithm already runs in time $2^{\bigoh(k \log k)}(m+n)$ \cite{LokshtanovRS15}. 

Finally, since our algorithm for {\dfvs} works via a black box reduction to the compression version of {\dfvs},  an algorithm with running time $2^{o(k \log k)}(n+m)$ for the compression version would immediately imply an algorithm with essentially the same time bound for {\dfvs}. We conclude with the following three open problems regarding \dfvs:  

\begin{itemize}
\setlength{\itemsep}{-2pt}
\item Does \dfvs\ admit a polynomial kernel?
\item Does \dfvs\ have an algorithm with running time $c^kn^{\Oh(1)}$?
\item Is {\sc Weighted Directed Feedback Vertex Set} fixed-parameter tractable?
\end{itemize}

\myparagraph{Acknowledgements.} The authors thank D\'{a}niel Marx for helpful discussions about the polynomial factor of the algorithm for {\sc Multicut} in~\cite{MarxR14}.

\longversion{
\bibliographystyle{siam}
\bibliography{references}
}

\end{document}